\documentclass[11pt]{article}

\usepackage[a4paper,left=1in,right=1in,top=1in,bottom=1in]{geometry}
\usepackage[english]{babel}
\usepackage[utf8]{inputenc}
\usepackage[T1]{fontenc}
\usepackage{amsmath, amssymb, amsthm}
\usepackage{thmtools, thm-restate}
\usepackage{pifont}
\usepackage{graphicx}
\usepackage{caption}
\usepackage{wrapfig}
\usepackage{syntax}
\usepackage{multicol}
\usepackage{enumitem}
\usepackage{stackengine}
\usepackage{xcolor}
\usepackage{listings}
\usepackage{dirtree}
\usepackage[toc,page]{appendix}
\usepackage{setspace}
\usepackage{url}
\usepackage{hyperref}
\usepackage{microtype}

\lstdefinelanguage{application}{
  morekeywords={<@,@>,db,fun,for,in,do,select,from,where,as,null,\%},
  sensitive=false,
  morestring=[b]",
	alsoletter=<@>,
	basicstyle=\small
}
\lstset{
	frame=single,
	numbers=left,
	mathescape
}

\hypersetup{
    colorlinks,
    linkcolor={red!50!black},
    citecolor={green!40!black},
    urlcolor={blue!80!black}
}

\newcommand{\ekim}{\stackon[-0.55em]{ê}{\~{}}}

\newcommand{\estr}[1]{\textnormal{\texttt{"#1"}}}
\newcommand{\elet}{\textnormal{\texttt{let}}}
\newcommand{\erec}{\textnormal{\texttt{rec}}}
\newcommand{\ein}{\textnormal{\texttt{in}}}
\newcommand{\elam}{\boldsymbol{\lambda}}
\newcommand{\elcons}{\textnormal{\texttt{cons}}}
\newcommand{\elnil}{\textnormal{\texttt{nil}}}
\newcommand{\etcons}{\textnormal{\texttt{tcons}}}
\newcommand{\etnil}{\textnormal{\texttt{tnil}}}
\newcommand{\eldestr}{\textnormal{\texttt{destr}}}
\newcommand{\etdestr}{\textnormal{\texttt{tdestr}}}
\newcommand{\eif}{\textnormal{\texttt{if}}}
\newcommand{\ethen}{\textnormal{\texttt{then}}}
\newcommand{\eelse}{\textnormal{\texttt{else}}}
\newcommand{\enot}{\textnormal{\texttt{not}}}
\newcommand{\eavg}{\textnormal{\texttt{avg}}}

\newcommand{\eand}{\textnormal{\texttt{and}}}
\newcommand{\eor}{\textnormal{\texttt{or}}}
\newcommand{\etrue}{\textnormal{\texttt{true}}}
\newcommand{\efalse}{\textnormal{\texttt{false}}}
\newcommand{\edataref}[1]{\textnormal{\texttt{db.}#1}}
\newcommand{\etruffle}[1]{\textnormal{\texttt{truffle<}#1\texttt{>}}}

\newcommand{\oscan}{\textnormal{\textbf{Scan}}}
\newcommand{\oselect}{\textnormal{\textbf{Select}}}
\newcommand{\oproject}{\textnormal{\textbf{Project}}}
\newcommand{\osort}{\textnormal{\textbf{Sort}}}
\newcommand{\otopk}{\textnormal{\textbf{Limit}}}
\newcommand{\ogroup}{\textnormal{\textbf{Group}}}
\newcommand{\ojoin}{\textnormal{\textbf{Join}}}

\setlength{\grammarparsep}{7pt}
\setlength{\grammarindent}{10em}

\newtheorem{definition}{Definition}

\newtheorem{theorem}{Theorem}

\newtheoremstyle{exmp}
  {\topsep}{2\topsep}%
  { }{ }%
  {\bfseries}{ }%
  {\newline}{Example \arabic{example}.}%
\theoremstyle{exmp}
\newtheorem{example}{Example}

\newcommand{\multistepreductionexample}[2]{
{\small
\setlength{\tabcolsep}{0.01\linewidth}
\begin{tabular}{|l@{}@{}c@{}@{}l|} \hline
\parbox[b]{0.47\linewidth}{\centering $e$}
&
\parbox{0.04\linewidth}{\centering $\rightarrow^*$}
&
\parbox[b]{0.47\linewidth}{\centering $e'$}\\ \hline
\parbox{0.47\linewidth}{\vspace{0.2\baselineskip}#1\vspace{0.2\baselineskip}}
&
\parbox{0.04\linewidth}{\vspace{0.2\baselineskip}\centering $\rightarrow^*$\vspace{0.2\baselineskip}}
& 
\parbox{0.47\linewidth}{\vspace{0.2\baselineskip}#2\vspace{0.2\baselineskip}}\\ \hline
\end{tabular}}%
\vspace{-0.5em}}

\newcommand{\onestepreductionexample}[2]{
{\small
\setlength{\tabcolsep}{0.01\linewidth}
\begin{tabular}{|l@{}@{}c@{}@{}l|} \hline
\parbox[b]{0.47\linewidth}{\centering $e$}
&
\parbox{0.04\linewidth}{\centering $\rightarrow$}
&
\parbox[b]{0.47\linewidth}{\centering $e'$}\\ \hline
\parbox{0.47\linewidth}{\vspace{0.2\baselineskip}#1\vspace{0.2\baselineskip}}
&
\parbox{0.04\linewidth}{\vspace{0.2\baselineskip}\centering $\rightarrow$\vspace{0.2\baselineskip}}
& 
\parbox{0.47\linewidth}{\vspace{0.2\baselineskip}#2\vspace{0.2\baselineskip}}\\ \hline
\end{tabular}}%
\vspace{-0.5em}}

\newcommand{\op}[1]{\textsf{Op}\left(#1\right)}
\newcommand{\comp}[1]{\textsf{Comp}\left(#1\right)}
\newcommand{\frag}[1]{\textsf{Frag}\left(#1\right)}

\newcommand{\redexes}[1]{\textsf{Rdxs}\left(#1\right)}
\newcommand{\reduceredex}[1]{\rightarrow_{#1}}
\newcommand{\reduction}[2]{\textsf{Red}_{#1}\left(#2\right)}
\newcommand{\reductions}[1]{\textsf{Reds}\left(#1\right)}
\newcommand{\minreductions}[1]{\textsf{MinReds}\left(#1\right)}
\newcommand{\oppaths}[1]{\textsf{OpContexts}\left(#1\right)}
\newcommand{\cfgfreevars}[2]{\textsf{ConfigFVars}\left(#1,#2\right)}
\newcommand{\cfgredexes}[2]{\textsf{ConfigRdxs}\left(#1,#2\right)}
\newcommand{\opchildren}[2]{\textsf{Children}\left(#1,#2\right)}
\newcommand{\listcomp}[2]{\textnormal{\textsf{[}}#1\ |\ #2\textnormal{\textsf{]}}}
\newcommand{\hredcfg}[3]{\textsf{HRedConfig}_{#1}\left(#2, #3\right)}
\newcommand{\hredchild}[3]{\textsf{HRedChild}_{#1}\left(#2, #3\right)}
\newcommand{\honeredcfg}[2]{\textsf{HOneRedConfig}\left(#1, #2\right)}
\newcommand{\honeredchild}[2]{\textsf{HOneRedChild}\left(#1, #2\right)}
\newcommand{\hinlinevar}[2]{\textsf{HInlineVar}\left(#1, #2\right)}
\newcommand{\hcontractlam}[2]{\textsf{HContractLam}\left(#1, #2\right)}
\newcommand{\hmakeredex}[2]{\textsf{HMakeRedex}\left(#1, #2\right)}
\newcommand{\hnone}{\textsf{None}}
\newcommand{\hconfigss}[2]{\textsf{HConfigSS}_{#1}\left(#2\right)}
\newcommand{\hconfig}[2]{\textsf{HConfig}_{#1}\left(#2\right)}
\newcommand{\hchildss}[2]{\textsf{HChildrenSS}_{#1}\left(#2\right)}
\newcommand{\hchild}[2]{\textsf{HChildren}_{#1}\left(#2\right)}
\newcommand{\hminred}[2]{\textsf{HMinRed}_{#1}\left(#2\right)}
\newcommand{\travpos}[2]{\textsf{TraversalPos}\left(#1,#2\right)}
\newcommand{\starify}[1]{\star\hspace{-0.1em}\left(#1\right)}
\newcommand{\unstarify}[1]{\star^{-1}\hspace{-0.2em}\left(#1\right)}
\newcommand{\starifysub}[1]{\star_{\operatorname{sub}}\hspace{-0.1em}\left(#1\right)}
\newcommand{\unstarifysub}[1]{\star_{\operatorname{sub}}^{-1}\hspace{-0.2em}\left(#1\right)}
\newcommand{\exprid}[2]{\textsf{ExprId}\left(#1,#2\right)}
\newcommand{\occur}[2]{\textsf{Occurences}\left(#1,#2\right)}

\newcommand{\cmark}{\checkmark}

\raggedcolumns
\setlength{\columnsep}{1.5em}

\begingroup\expandafter\expandafter\expandafter\endgroup
\expandafter\ifx\csname pdfsuppresswarningpagegroup\endcsname\relax
\else
  \pdfsuppresswarningpagegroup=1\relax
\fi

\setlist[itemize]{itemsep=0pt, topsep=0pt, parsep=0pt}


\overfullrule=2cm

\vspace{-1em}
\title{\textbf{Design of an intermediate representation for query languages}\\ \vspace{1em} Internship report}
\author{Romain Vernoux, supervised by Kim Nguy\ekim n (VALS team, LRI)}
\date{\today}

\begin{document}

\pagestyle{empty}
\pagenumbering{gobble}

\maketitle
\begin{spacing}{0.95}
\tableofcontents
\end{spacing}

\cleardoublepage
\pagestyle{plain}
\pagenumbering{arabic}

\section{Introduction}

This internship is part of a joint collaboration between Laurent Daynes at Oracle Labs, Giuseppe Castagna and Julien Lopez (also a MPRI student) at PPS and Kim Nguy\ekim n and myself at LRI. Each of these three groups worked on different aspects of the project and synchronized during regular meetings. In this report, I will do my best to focus on my contribution and explain only the necessary of the other parts of the project. I will use ``I'' to emphasize my work and ``we'' in formal definitions and technical sections.

\subsection*{The general context}

Data oriented applications, usually written in a high-level, general-purpose programming language (such as Java, Ruby, or JavaScript) interact with databases through a very coarse interface. Informally, the text of a query is built on the application side (either via plain string concatenation or through an abstract notion of statement) and shipped to the database. After its evaluation, the results are then serialized and sent back to the ``application-code'' where they are translated in the application language datatypes. Such roundtrips represent a significant part of the application running time. Moreover, this programming model prevents one from using richer query constructs, such as user-defined functions (UDF). UDFs are functions defined by the application developer, most of the time in a language more expressive than the query language, and used in a query (e.g., in a filter condition of a SQL query). While some databases also possess a ``server-side'' language for this purpose (e.g., PL\slash SQL in Oracle Database and PL\slash pgSQL in PostgreSQL), its integration with the very-optimized query execution engine is still minimal, resulting in notoriously slow query evaluations. The alternative is to evaluate UDFs in the application runtime, leading to additional application-database roundtrips and even poorer performance\cite{ferry}\cite{data-access}. These problems, often refered to as \emph{language-integrated query} issues, are in no way specific to relational databases and also affect the so-called NoSQL databases, which now all provide some form of declarative query language.\bigskip

In this setting, Oracle Labs is developing Truffle\cite{graal}, a framework for representing dynamic languages programs and functions as abstract syntax tree (AST) nodes with particular evaluation rules, and Graal\cite{graal}, a just-in-time (JIT) compiler written in Java, leveraging Oracle's experience from the Java JIT compiler and achieving high performance in evaluating Truffle nodes. A proof of concept, in the form of a full-blown JavaScript runtime prototype written entirely in Truffle and open-source projects for Ruby, R and Python runtimes\cite{graal}, is already available. In principle, this framework could enable to efficiently evaluate UDFs written in high-level dynamic languages directly inside a database embedding a Java Virtual Machine (such as Oracle DB\cite{oracledb}, Cassandra\cite{cassandra} or Hive\cite{Hive}), thus bridging the aforementioned impedence mismatch gap, considerably decreasing the cost of roundtrips between query and UDFs evaluation, and providing the right level of abstraction for a deep integration in the database query optimization process.

\subsection*{The research problem}

Language-integrated queries and impedence mismatch were already research topics in the 1980s\cite{impedence-mismatch}, although they have received increased attention with the popularization of web applications. Various solutions have already been proposed in the past\cite{language-state-of-art} and constant demand for this matter has for instance lead Microsoft to include LINQ\cite{linq}--a framework integrating Microsoft SQL Server querying primitives in C\#, among other languages--in .NET 3.5, and to continually add new features in the subsequent releases.\bigskip

In our opinion, the imperfections of the existing solutions and the new perspectives opened by the Truffle/Graal project justified a reinvestigation of this (still trending) topic. Furthermore, taking in consideration the wide adoption of NoSQL databases, special care should be taken to propose a solution that is not specific to relational databases, and to the extent of our knowledge such work has not been done yet. Thus, the problematic of the QIR project is articulated around the following questions:
\vspace{-0.6em}
\begin{enumerate} \itemsep0pt \parskip0pt \parsep0pt
	\item How to draw the line between application code that should be evaluated in the application runtime, in the database native query processor and in the application language runtime embedded in the database?
	\item How to represent queries in a way that is agnostic of the application language (for reusability) and the target database (to support relational and non-relational ones)?
	\item How to rewrite this query representation to present an efficient query to the database?
	\item How to translate this query representation to the database native language?
	\item How to evaluate application (Truffle) code inside the database?
\end{enumerate}
\vspace{-0.6em}
Although I took part in discussions revolving around all these questions (cf. Section \ref{sec:architecture}), my internship was centered on questions 2 and 3. Questions 1 and 4 were investigated at PPS and question 5 at Oracle Labs.

\subsection*{My contribution}

A review of existing language-integrated query frameworks (cf. Appendix \ref{ap:resources}) and results from a previous internship\cite{sql++} highlighted that existing database query languages (including SQL) share high-level querying primitives (e.g., filtering, joins, aggregation) that can be represented by operators, but differ widely regarding the semantics of their expression language.\bigskip

In order to represent queries in an application language- and database-agnostic manner (question 2 above), I designed a small calculus, dubbed ``QIR'' for \emph{Query Intermediate Representation}, expressive enough to capture querying capabilities offered by mainstream languages. QIR contains expressions, corresponding to a small extension of the pure lambda-calculus, and operators to represent usual querying primitives (Section \ref{sec:qir-constructs}).\bigskip

In the effort to send efficient queries to the database (question 3 above), I abstracted the idea of ``good'' query representations in a measure on QIR terms. Then, I designed an evaluation strategy rewriting QIR query representations into ``better'' ones (Section \ref{sec:qir-evaluation}).\bigskip

\subsection*{Arguments supporting its validity}

As an abstraction layer between application languages and databases, QIR guarantees the robustness of the entire framework regarding changes in these two tiers. But to provide formal evidence of the relevance of my solutions, I wrote an optimality proof for my evaluation strategy on a particular subset of QIR terms, with respect to the aforementioned measure (Theorem \ref{th:completeness}).\bigskip

Additionally, I implemented a prototype of the QIR evaluation strategy and ran experiments showing that (i) the measure captures well the idea of ``good'' queries, (ii) rewritings have a low cost while enabling considerable speedup opportunities for the database and (iii) the evaluation strategy outputs optimal results well outside the scope of the QIR subset captured by the optimality proof (Section \ref{sec:concrete-examples}).

\subsection*{Summary and future work}

The ability to evaluate snippets from the application code inside the database opened new perspectives on the language-intergrated query problems. My approach allows a complete separation between application language and target database considerations, and to this extent is an interesting contribution even outside the scope of our project. Future work includes testing this concept with more combinations of application languages and databases, and observing the impact of such an architecture on real-world web applications.
\section{Notations and conventions}
\label{sec:notations}

\paragraph{Trees} In this document, we will use an infix parenthesis notation for trees. The tree leaves are represented by their name and a tree rooted at a node $N$ with $n$ children $S_1,\ldots,S_n$ is denoted by $N(r_1,\ldots,r_n)$ where $r_i$ is the representation of the subtree rooted at node $S_i$. For instance, if a tree $T$ is composed of a node $N$ which has two children $P$ and $Q$ such that $Q$ has one child $R$, we will write $T$ as $N(P,Q(R))$. 

\paragraph{Contexts} Contexts are trees in which some subtrees have been removed and replaced by ``holes''. A hole is denoted by $[]$. For instance, a context $C$ can be obtained by removing the subtree rooted at $Q$ in $T$ and will be denoted by $N(P,[])$. Context holes can be filled with arbitrary subtrees. If a context $C$ contains one hole, $C[S]$ corresponds to the tree $C$ where the hole has been replaced by the tree $S$. For instance, if $C = N(P, [])$ then $T = C[Q(R)]$. This definition generalizes to contexts with $n$ holes using the $C[S_1,\ldots,S_n]$ notation. Notice that in this case, the context holes are ordered using a prefix (depth-first, left-to-right) traversal of the tree, which corresponds to the left-to-right order while reading the tree notation we use in this document. For instance, if $C = N([], Q([]))$ then $T = C[P,R]$.

\paragraph{List comprehensions} Similarly to sets and the $\{ x\ |\ P \}$ notation for set comprehensions, we use the $\listcomp{x}{P}$ notation for list comprehensions. List comprehensions are order-preserving, that is, a list $\listcomp{f(x)}{x \in L}$ respects the order of $L$ if $L$ is a list. In this document, we will use them to build the list of a node's children in a tree. For instance, $0(\listcomp{x+1}{x \in [1,2,3]})$ will stand for the tree $0(2,3,4)$. A node with an empty children list will be considered to be a leaf.

\paragraph{Proofs} For space reasons, some proofs have been moved to the Appendices. When no proof is given directly below a lemma or a theorem, it can be found in Appendix \ref{ap:proofs}.
\section{Architecture}
\label{sec:architecture}

\begin{wrapfigure}[10]{l}{0.63\textwidth}
\begin{center}
   \raisebox{0pt}[\dimexpr\height-2.3em\relax]{\includegraphics[width=0.9\linewidth]{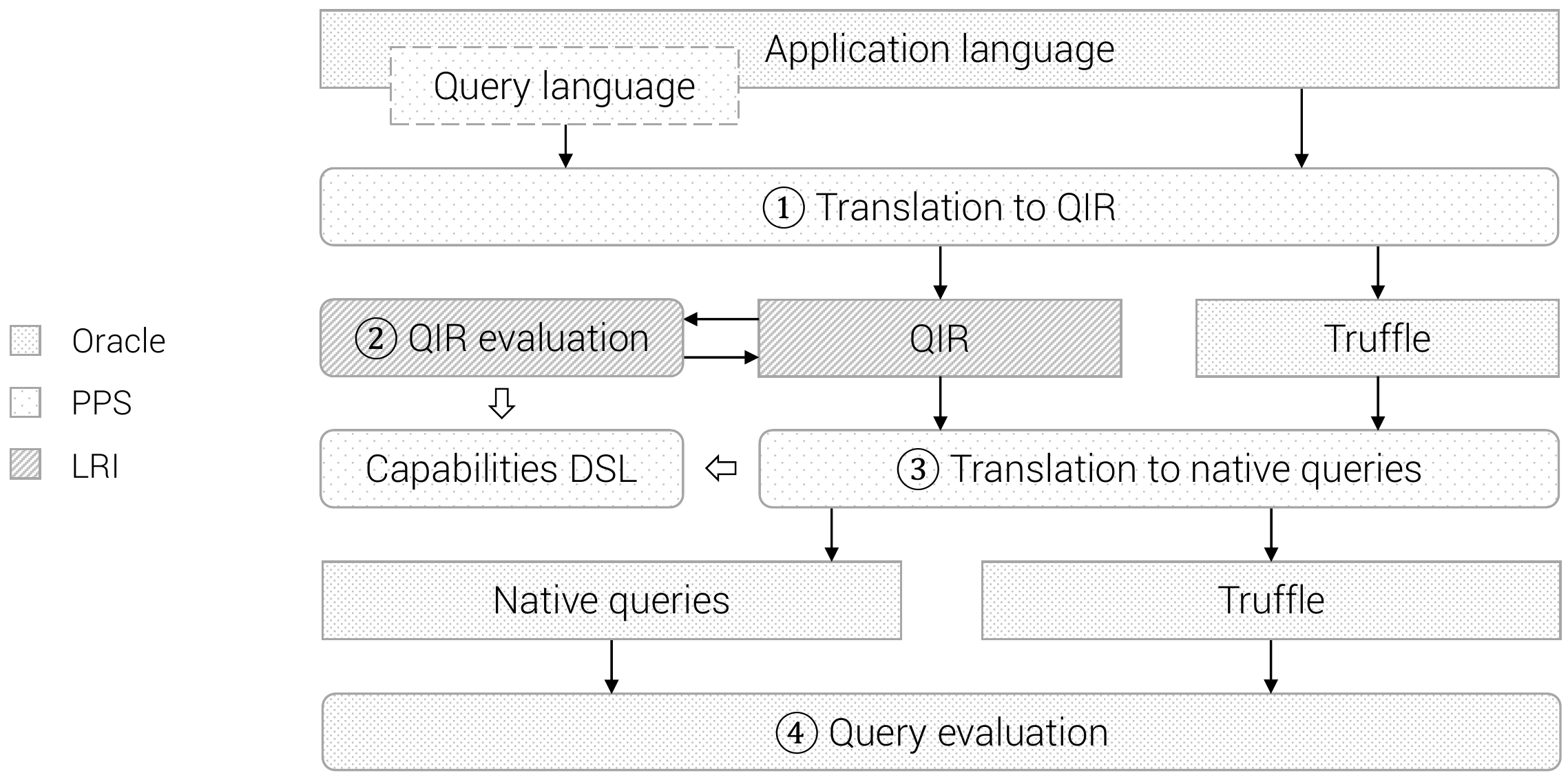}}
	 \caption{Architecture}
	 \label{fig:architecture}
	 \vspace{-2em}
\end{center}
\end{wrapfigure}

In this Section, I present the global architecture of the project and quickly describe the purpose of the modules I did not work on.\bigskip

Since the beginning of my internship, I took part in the design of the global architecture of the project, whose current state is illustrated in Figure \ref{fig:architecture}. It works as follows.\bigskip

\begin{enumerate}
	\item Queries specified in the application language (possibly augmented with a special syntax for querying primitives) are detected in the application code. We refer to the result of this detection as the \emph{query boundary} : code outside this boundary will be translated to Truffle nodes and evaluated in the application runtime, whereas code inside this boundary will be evaluated by the database. Inside the boundary, query parts that can be expressed in QIR are mapped to QIR constructs, and the remaining is propagated as Truffle nodes. This translation phase also maps the application language data model to the QIR data model. For now, we make the assumption that a same construct has identical semantics in the application language and in QIR (e.g., tuple navigations such as \texttt{user.name} return the same result in JavaScript and in QIR, and Ruby integers behave like QIR integers), and we allow a lossy translation when this is not the case. The QIR constructs will be described in Section~\ref{sec:qir-constructs}. The translation phase, designed at PPS, was not the point of my internship and will not be further described. To understand the remaining of this document, you can safely assume that query boundaries are specified by the application developer with a quote/antiquote syntax as in \cite{t-linq}. For example, in the following code snippet (written in an imaginary application language in which queries are specified using a SQL-like syntax),
\begin{lstlisting}[language=application]
f(id) := id = 1;
result := null;
<@ %result := select * from db("ads") as t where %f(t.id)) @>;
for t in result do ... 
\end{lstlisting}
the \texttt{<@ @>} quotes delimit the query boundary, whereas the \texttt{\%} antiquote annotate identifiers bound outside the query boundary. During the application execution, the \texttt{f} and \texttt{result} assignments are evaluated in the application runtime, the quoted code is translated to a QIR expression containing a Truffle reference \texttt{\%f} and the dependencies of this quoted code (i.e., the Truffle definition of \texttt{f}) are passed, along with the QIR code, to the next module in order to be evaluated in the database. 
	\item QIR constructs are partially evaluated within the framework. The goal of this rewriting pass is to feed the database with queries that can be easily optimized by its query optimizer. This module relies on a formal description of the operators that the database can natively evaluate, in order to progressively ``improve'' the query representation with local reductions. Providing this description through a well-defined API is the first role of the ``Capabilities DSL'' (domain-specific language), thus maintaining the nice abstraction over the choice of target database. The partial evaluation of QIR expressions will be described in Section~\ref{sec:qir-evaluation}. This Capabilities DSL and description mechanism, designed at PPS, was not the point of my internship and will not be further described. To understand the remaining of this document, you can safely assume that it is an API providing information such as \emph{This database supports filters, but not joins} and \emph{a filter is compatible if the filter condition has the shape $\elam t.\ t.attr = x$ and if $x$ is a constant value, but not if $x$ is a free variable}.
	\item After the rewriting pass, QIR expressions that can be expressed in the database native query language are mapped to native query constructs, and the remaining is translated to Truffle nodes. This translation phase also maps the QIR data model to the database data model. As in step 1, we make the assumption that a same construct has identical semantics in QIR and in the database query language, and we allow a lossy translation when this is not the case. Providing a way to translate compatible QIR operators into constructs of the database query language through a well-defined API is the second role of the Capabilities DSL, thus maintaining again the abstraction over the choice of target database. For the same reason as above, you can simply assume that this API provides a method mapping natively supported operators (according to the Capabilities DSL) to the corresponding query string in the database query language.
	\item Native queries go through the database optimizer before their evaluation in the database query processor. Such queries may contain references to Truffle definitions for three reasons: (i) the query in the application language was refering to an identifier defined outside the query boundary, (ii) the quoted application code contained features that could not be translated into QIR, and (iii) the query representation in QIR contained features that could not be translated into the database query language. In all these cases, the Truffle runtime embedded in the database is called by the query processor to evaluate the code corresponding to these references. The result of the query is then returned to the host language. This backward translation maps database native data values to the QIR values (using again the Capabilities DSL translation API), then QIR values to the application language data model. Again, this evaluation mechanism, designed at Oracle Labs, was not the point of my internship and will not be further described in this document.
\end{enumerate}

This architecture has the advantage of interleaving an abstraction layer, composed of QIR and Truffle, between the application layer and the database layer. Beyond added cleanliness, this means that no work has to be done on existing database layer modules to support a new application language, and no work has to be done on existing application layer modules to support a new database.
\section{QIR calculus}
\label{sec:qir-constructs}

For the architecture described in Section \ref{sec:architecture} to work, QIR constructs have to be simple enough to be mapped to as many mainstream application language constructs and database language constructs as possible, but also expressive enough to represent interesting queries.\bigskip

To design a QIR calculus satisfying these requirements, I first reviewed existing language-integrated query frameworks (cf. Appendix \ref{ap:resources}), work that I had for the most part already done in a previous internship on a similar area\cite{sql++}. I observed that existing database query languages all share high-level querying primitives (e.g., filtering, joins, aggregation), even though they support these features to various degrees. Moreover, the corresponding language constructs are almost always only a thin language layer hiding to various extents an algebra of operators inspired by the relational algebra. I also noticed the tension between database languages with a declarative descriptions of queries and application languages in which queries are built imperatively.\bigskip

Therefore, in order to represent queries in a manner agnostic of the application and database languages, I designed QIR as a small extension of the pure lambda-calculus, made of (i) \emph{expressions}, i.e., variables, lambdas, applications, constants, functions (conditionals, arithmetic operations, etc.) as well as list/tuple constructors and destructors, and (ii) \emph{operators}, i.e., constructors representing the common querying primitives mentioned above. Thus, queries built imperatively in the application code can easily be translated to expressions manipulating operators whereas declarative queries correspond more to trees of operators, and one can go from the first representation to the second by partially evaluating the corresponding QIR term. The remaining of this section is a formal description of the QIR calculus.

\subsection{QIR data model}
\label{sec:data-model}

QIR data objects are called \emph{values}. The following grammar describes their syntax (for future use in this document) and hierarchy.
\vspace{-0.5em}

\setlength{\grammarindent}{6em}
\begin{multicols}{2}
\begin{grammar}
\small
<Value> ::= <Number>
\alt <String>
\alt <Bool>

<Number> ::= <64-bit IEEE 754>

<String> ::= <UTF-16>

<Bool> ::= \etrue
\alt \efalse
\end{grammar}
\end{multicols}

For now, the QIR data model only contains the necessary for a proof of concept. It will have to be extended with more primitive types for our framework to be used in real-world applications.

\subsection{QIR operators}
\label{sec:operators}

Operators represent computation on database tables. Each operator has \emph{children} expressions (in parentheses) and \emph{configuration} expressions (in subscript). As in the relational algrebra, children provide the operator input tables whereas configurations describe the computation on table elements. The following grammar describes their syntax (for future use in this document) and hierarchy.

\setlength{\grammarindent}{6.5em}
\begin{multicols}{2}
\begin{grammar}
\small
<Operator> ::= $\oscan_{<Expr>}()$
\alt $\oselect_{<Expr>}(<Expr>)$
\alt $\oproject_{<Expr>}(<Expr>)$
\alt $\osort_{<Expr>}(<Expr>)$
\alt $\otopk_{<Expr>}(<Expr>)$
\alt $\ogroup_{<Expr>,<Expr>}(<Expr>)$
\alt $\ojoin_{<Expr>}(<Expr>,<Expr>)$
\end{grammar}
\end{multicols}

These operators represent the following database computations:
\begin{itemize}
	\item $\oscan_{table}()$ outputs the list (in a non-deterministic order) of elements in the target database table denoted by the expression $table$.
	\item $\oselect_{filter}(input)$ outputs the list of elements $v$ in the list corresponding to $input$ such that $filter\ v$ reduces to $\etrue$.
	\item $\oproject_{format}(input)$ outputs the list of elements corresponding to $format\ v$, with $v$ ranging in the list corresponding to $input$.
	\item $\osort_{comp}(input)$ outputs the list of elements $v$ in the list corresponding to $input$ ordered according to $comp\ v$ ascending.
	\item $\otopk_{limit}(input)$ outputs the $limit$ first elements of the list corresponding to $input$.
	\item $\ogroup_{eq, agg}(input)$ outputs the list of elements corresponding to $agg\ g$ for each group $g$ in the partition of the elements $v$ in the list corresponding to $input$, according to $eq\ v$.
	\item $\ojoin_{filter}(input_1, input_2)$ outputs the join of the elements $v_1$ in the list corresponding to $input_1$ and the elements $v_2$ in the list corresponding to $input_2$, such that $filter\ v_1\ v_2$ reduces to $\etrue$.
\end{itemize}
\vspace{0.8em}

For the same reason as above, QIR operators are still limited. Nervertheless, they already capture a consequent part of the relational algebra.

\subsection{QIR expressions}
\label{sec:expressions}

QIR expressions are the core of the QIR calculus. The following grammar describes their syntax (for future use in this document) and hierarchy.

\setlength{\grammarindent}{7.5em}
\begin{multicols}{2}
\begin{grammar}
\small
<Expr> ::= <Variable>
\alt <Lambda>
\alt <Application>
\alt <Constant>
\alt <ValueCons>
\alt <ValueDestr>
\alt <ValueFun>
\alt <BuiltinFun>
\alt <DataRef>
\alt <TruffleNode>
\alt <Operator>

<Lambda> ::= \lit{$\elam$} <Variable> `.' <Expr>

<Application> ::= <Expr> <Expr>

<Constant> ::= <Value>

<ValueCons> ::= <ListCons>
\alt <TupleCons>

<ListCons> ::= \lit{$\elnil$}
\alt \lit{$\elcons$} <Expr> <Expr>

<TupleCons> ::= \lit{$\etnil$}
\alt \lit{$\etcons$} <String> <Expr> <Expr>

<ValueDestr> ::= <ListDestr>
\alt <TupleDestr>

<ListDestr> ::= \lit{$\eldestr$} <Expr> <Expr> <Expr>

<TupleDestr> ::= \lit{$\etdestr$} <Expr> <String>

<ValueFun> ::= `-' <Expr>
\alt <Expr> \lit{$\eand$} <Expr>
\alt \lit{$\eif$} <Expr> \lit{$\ethen$} <Expr> \lit{$\eelse$} <Expr>
\alt \ldots

<BuiltinFun> ::= \lit{$\eavg$} <Expr>
\alt \ldots

<DataRef> ::= \lit{$\edataref{table\_name}$}

<TruffleNode> ::= \lit{$\etruffle{id}$}
\end{grammar}
\end{multicols}

Lambda-abstractions (resp. applications, constants) represent functions (resp. function applications, constants) of the application language. The constructor for lists takes an expression for the head and an expression for the tail. Tables are represented by lists, since the output of queries might be ordered. Tuples are constructed as a list of mappings: the constructor for tuples takes a string for the mapping name, an expression for the mapping value and an expression for the tail of the mapping list. The list destructor has three arguments: the list to destruct, the term to return when the list is \texttt{nil} and a function with two arguments $\elam h. \elam t. M$ to treat the case when the list has a head and a tail. The tuple destructor has two arguments: the tuple to destruct and the attribute name of the value to return. Finally, built-in functions represent common database functions (such as aggregation) for easier recognition and translation into native languages. The formal definition of expression reductions is given in Section \ref{sec:qir-evaluation}.\bigskip

In this document, we will sometimes use the syntactic sugar $\elet \ x = M \ \ein \ N$ for $(\elam x.\ N) M$ but the $\elet \ldots \ein$ constructor is not part of the expression language. Similarly, we will sometimes use the notation $t.attr$ for $\etdestr \ t \ \estr{attr}$, $\enot\ x$ for $\eif\ x\ \ethen\ \efalse\ \eelse\ \etrue$ and $\elet \ \erec \ x = M \ \ein \ N$ as a shortcut for the application of a fixpoint combinator.
\section{QIR evaluation}
\label{sec:qir-evaluation}

As discussed in the beginning of Section \ref{sec:qir-constructs}, the purpose of the QIR partial evaluation mechanism is to transform the QIR representation of a query written in the (possibly imperative) style of the application language into a QIR representation of the same query that is easier to translate to the database native language and offers more optimization opportunities to its optimizer. Today's databases use a tree of algebra operators (plan) to represent computation. Optimizations consist in commuting operators in the plan or applying local tree rewritings. The plan evaluation usually corresponds to (a variation of) a bottom-up evaluation of each operator, where a particular implementation is chosen for each operator for best performance. Therefore, for a same computation, such database engines benefit from working on a large, single plan instead of multiple small plans using the result of each other, since (i) more optimizations become available, (ii) more statistics can be gathered to pick the best implementation for each operator and (iii) no intermediate result needs to be materialized and transferred.\bigskip

Consider the following example of query (written in an imaginary application language in which queries are specified using a SQL-like syntax).

\begin{lstlisting}[language=application]
result := null;
<@ users := db("users"); ads := db("ads");
filter(x,y) := (x.id = y.userid) and (y.timestamp > 1234);
prettify(s) := ...;
%result = select prettify(name) as u_name, descr as a_descr
          from users as u, ads as a where filter(u,a) @>
for t in result do ...
\end{lstlisting}

The quoted code contains a query as well as the definition of two UDFs: \texttt{filter}, which is a quite simple filter function and \texttt{prettify}, which contains application languages features that cannot be translated to QIR. A direct translation of the quoted code to QIR would be as follows.

\begin{lstlisting}[basicstyle=\small]
$\elet\ users = \edataref{users}\ \ein\ \elet\ ads = \edataref{ads}\ \ein$
$\elet\ filter = \elam x.\ \elam y.\ (\etdestr\ x\ \estr{id}) = (\etdestr\ y\ \estr{userid})\ \eand\ (\etdestr\ y\ \estr{timestamp} > 1234)\ \ein$
$\elet\ prettify = \etruffle{0}\ \ein$
$\oproject_{\elam t.\ \etcons\ \estr{u\_name}\ (prettify\ (\etdestr\ t\ \estr{name}))\ (\etcons\ \estr{a\_descr}\ (\etdestr\ t\ \estr{descr})\ \etnil)}($
  $\ojoin_{\elam u.\ \elam a.\ filter\ u\ a}(\oscan_{users}(),\oscan_{ads}()))$
\end{lstlisting}

Notice that the definition of \texttt{prettify} had to be translated to Truffle nodes (not shown here) and is represented in the QIR code by a Truffle reference. Assuming that the target database can only natively evaluate operators (e.g., SQL), the direct translation of the above QIR code to the database language would lead to a query in which all operators issue calls to UDFs, greatly impacting performances. Moreover, with no information about \texttt{filter}, the database optimizer could not decide to push the selection \texttt{timestamp > 1234} below the join or choose an index-based implementation for the join. In fact, a better QIR representation for this query is as follows.

\begin{lstlisting}[basicstyle=\small]
$\oproject_{\elam t.\ \etcons\ \estr{u\_name}\ (\etruffle{0}\ (\etdestr\ t\ \estr{name}))\ (\etcons\ \estr{a\_descr}\ (\etdestr\ t\ \estr{descr})\ \etnil)}($
  $\ojoin_{\elam u.\ \elam a.\ (\etdestr\ u\ \estr{id})\ =\ (\etdestr\ a\ \estr{userid})\ \eand\ (\etdestr\ a\ \estr{timestamp} > 1234)}(\oscan_{\edataref{users}}(),\oscan_{\edataref{ads}}()))$
\end{lstlisting}

This second version can be obtained from the first by a reduction of the QIR code. As we will see later in this section, applying classical reduction strategies (e.g., call-by-value, call-by-name, lazy evaluation) until a normal form is reached does not satisfy our requirements, since contracting some redexes might scatter parts of a same query. Instead, we will characterize the shape of ``good'' QIR code with a measure on QIR expressions in Section \ref{sec:measure}, then investigate reduction strategies guided by this measure in Sections \ref{sec:exhaustive} and \ref{sec:heuristic}.

\subsection{QIR reduction rules}
\label{sec:reduction-rules}

In this section, I formally define the reduction of QIR expressions, then state two useful properties on the QIR calculus equipped with this reduction.

\begin{definition}
\label{def:substitution}
The capture-avoiding variable substitution in QIR expressions (cf. Section \ref{sec:expressions}) is defined as in the pure lambda-calculus. The substitution of a variable $x$ in $e_1$ by an expression $e_2$ is denoted $e_1\{e_2/x\}$.
\end{definition}

\begin{definition}
\label{def:reduction-rules}
The reduction rules for QIR expressions consist in the $\beta$-reduction rule augmented with the $\delta$-reduction rules for destructors and $\rho$-reduction rules for primitive types.\smallskip

{\small
\setlength{\tabcolsep}{0.01\linewidth}%
\noindent\begin{tabular}{ l c l }
\parbox{0.45\linewidth}{$(\elam x.\ e_1)\ e_2$}
&
\parbox{0.04\linewidth}{\centering $\rightarrow_{\beta}$}
&
\parbox{0.45\linewidth}{$e_1\{e_2/x\}$}
\end{tabular}

\vspace{1em}

\noindent\begin{tabular}{ l c l }%
\parbox{0.45\linewidth}{$\eldestr \ \elnil \ e_{nil} \ e_{cons}$}
&
\parbox{0.04\linewidth}{\centering $\rightarrow_{\delta}$}
&
\parbox{0.45\linewidth}{$e_{nil}$}\\
\parbox{0.45\linewidth}{$\eldestr \ (\elcons \ e_{head} \ e_{tail}) \ e_{nil} \ e_{cons}$}
&
\parbox{0.04\linewidth}{\centering $\rightarrow_{\delta}$}
&
\parbox{0.45\linewidth}{$e_{cons} \ e_{head} \ e_{tail}$}\\
\parbox{0.45\linewidth}{$\etdestr \ (\etcons \ \estr{name1} \ e_{val1} \ e_{tail}) \ \estr{name1}$}
&
\parbox{0.04\linewidth}{\centering $\rightarrow_{\delta}$}
&
\parbox{0.45\linewidth}{$e_{val1}$}\\
\parbox{0.45\linewidth}{$\etdestr \ (\etcons \ \estr{name1} \ e_{val1} \ e_{tail}) \ \estr{name2}$}
&
\parbox{0.04\linewidth}{\centering $\rightarrow_{\delta}$}
&
\parbox{0.45\linewidth}{$\etdestr \ e_{tail} \ \estr{name2}$}
\end{tabular}

\vspace{1em}

\noindent\begin{tabular}{ l c l }%
\parbox{0.45\linewidth}{$\eif\ \etrue\ \ethen\ e_1\ \eelse\ e_2$}
&
\parbox{0.04\linewidth}{\centering $\rightarrow_{\rho}$}
&
\parbox{0.45\linewidth}{$e_1$}\\
\parbox{0.45\linewidth}{$\eif\ \efalse\ \ethen\ e_1\ \eelse\ e_2$}
&
\parbox{0.04\linewidth}{\centering $\rightarrow_{\rho}$}
&
\parbox{0.45\linewidth}{$e_2$}\\
\parbox{0.45\linewidth}{$\etrue \ \eand \ e_1$}
&
\parbox{0.04\linewidth}{\centering $\rightarrow_{\rho}$}
&
\parbox{0.45\linewidth}{$\etrue$}\\
\parbox{0.45\linewidth}{$\efalse \ \eand \ e_1$}
&
\parbox{0.04\linewidth}{\centering $\rightarrow_{\rho}$}
&
\parbox{0.45\linewidth}{$e_1$}\\
\parbox{0.45\linewidth}{$\ldots$}
&
\parbox{0.04\linewidth}{}
&
\parbox{0.45\linewidth}{}
\end{tabular}}

\noindent In this document, we will use the notation $\rightarrow$ for the relation $\rightarrow_{\beta} \cup \rightarrow_{\delta} \cup \rightarrow_{\rho}$ and talk about \emph{redexes} for $\beta$-redexes, $\delta$-redexes and $\rho$-redexes indifferently. Furthermore, we will denote by $\rightarrow^n$ the $n$-step reduction relation and by $\rightarrow^*$ the reflexive, transitive closure of $\rightarrow$.
\end{definition}
\vspace{-0.5em}

\begin{restatable}{theorem}{thconfluence}
\label{th:confluence}
QIR with the reduction rules of Definition \ref{def:reduction-rules} satisfies the Church-Rosser property.
\end{restatable}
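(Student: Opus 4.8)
The plan is to establish the Church--Rosser property through the Tait--Martin-L\"of method of \emph{parallel reduction}, in the streamlined form due to Takahashi. Working directly with $\rightarrow^*$ is awkward because a single $\beta$-step may duplicate a subterm (the argument $e_2$ in $(\elam x.\ e_1)\ e_2 \rightarrow_\beta e_1\{e_2/x\}$), so two diverging reductions cannot in general be closed in one step on each side; the classical remedy is to introduce an auxiliary relation $\Rightarrow$, \emph{parallel reduction}, that contracts a whole set of simultaneously present redexes in a single step, and then to show that $\Rightarrow$ enjoys the diamond property.

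First I would define $\Rightarrow$ by induction over the grammar of Section~\ref{sec:expressions}: a reflexivity base case on variables, constants, data references and Truffle nodes; a congruence rule for every syntactic constructor (lambda, application, the list and tuple constructors, the two destructors, each value and built-in function, and each operator) that lets the immediate subexpressions reduce in parallel; and one \emph{contraction} rule per rewrite rule of Definition~\ref{def:reduction-rules}. For example the $\beta$-rule becomes ``if $e_1 \Rightarrow e_1'$ and $e_2 \Rightarrow e_2'$ then $(\elam x.\ e_1)\ e_2 \Rightarrow e_1'\{e_2'/x\}$'', the list-$\delta$ rule becomes ``if $e_{head} \Rightarrow e_{head}'$, $e_{tail} \Rightarrow e_{tail}'$ and $e_{cons} \Rightarrow e_{cons}'$ then $\eldestr\ (\elcons\ e_{head}\ e_{tail})\ e_{nil}\ e_{cons} \Rightarrow e_{cons}'\ e_{head}'\ e_{tail}'$'', and similarly for the tuple destructor and the $\rho$-rules. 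A routine induction then establishes the inclusions $\rightarrow\ \subseteq\ \Rightarrow\ \subseteq\ \rightarrow^*$, whence $\Rightarrow^*\ =\ \rightarrow^*$. It therefore suffices to prove that $\Rightarrow$ has the diamond property: a relation with the diamond property has a confluent reflexive--transitive closure, and here that closure is precisely $\rightarrow^*$.

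The technical core is then two lemmas. The first is substitutivity of parallel reduction: if $e \Rightarrow e'$ and $f \Rightarrow f'$ then $e\{f/x\} \Rightarrow e'\{f'/x\}$, proved by induction on the derivation of $e \Rightarrow e'$, with the usual care over capture-avoidance inherited from Definition~\ref{def:substitution}. The second, following Takahashi, introduces for each expression $e$ its \emph{complete development} $e^{\ast}$, the term obtained by contracting in one sweep exactly the redexes already present in $e$, and proves the triangle lemma: whenever $e \Rightarrow e'$ one has $e' \Rightarrow e^{\ast}$. The diamond property is then immediate, since any two parallel reducts $e_1, e_2$ of $e$ both satisfy $e_i \Rightarrow e^{\ast}$, so $e^{\ast}$ closes the diagram. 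The triangle lemma is proved by induction on $e$; its $\beta$-case and its list-$\delta$ case appeal to the substitution lemma (the latter because the contractum $e_{cons}\ e_{head}\ e_{tail}$ is an application), while the remaining destructor and $\rho$-cases are direct.

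I expect the main obstacle to lie not in the $\lambda$-calculus skeleton, which is entirely standard, but in verifying that the added $\delta$- and $\rho$-rules neither interfere with one another nor with $\beta$. The structural fact worth isolating first is that these rules form an \emph{orthogonal} system: every left-hand side is linear and is headed by a distinct syntactic form (an application with a $\elam$ in function position for $\beta$, an $\eldestr$ or $\etdestr$ applied to a matching constructor for $\delta$, an $\eif$ or $\eand$ on a boolean literal for $\rho$), and the two list-$\delta$ rules, the two tuple-$\delta$ rules and the paired $\rho$-rules are mutually exclusive ($\elnil$ versus $\elcons$, matching versus non-matching attribute names, $\etrue$ versus $\efalse$). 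Hence there are no critical pairs and no subterm is simultaneously a redex for two different rules. The one genuinely delicate point is that contracting a subterm can \emph{create} a redex of another kind --- reducing the scrutinee of a destructor to a constructor, or the contractum $e_{cons}\ e_{head}\ e_{tail}$ being a fresh $\beta$-redex --- and I would handle this exactly as in the pure calculus, by letting $e^{\ast}$ contract only the redexes present in $e$ and leaving created redexes to a later parallel step; this is precisely what makes the induction in the triangle lemma close. The open-ended ``$\ldots$'' in the $\rho$-rules is absorbed by the same argument, as long as every further primitive rule is assumed left-linear and non-overlapping with those already present.
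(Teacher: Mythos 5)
Your proposal is correct, but it takes a genuinely different route from the paper. The paper does not prove Church--Rosser directly: it \emph{encodes} QIR into a lambda-calculus with patterns --- Church-encoding the primitive values and their functions, treating operators, built-ins, Truffle nodes, data references and the list/tuple builders as constructors, and rendering $\eldestr$/$\etdestr$ as linear pattern-matching --- and then inherits confluence from the cited literature on that calculus, after checking that the $\beta$-, $\delta$- and $\rho$-rules of Definition~\ref{def:reduction-rules} are compatible with the encoding. Your Tait--Martin-L\"of/Takahashi development (parallel reduction, substitutivity, complete developments, triangle lemma) instead proves the property from scratch inside QIR itself. Each approach has a distinct advantage. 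The paper's argument is a few lines long and delegates all combinatorics to existing results, but its soundness rests on the faithfulness of the translation: confluence only transfers back along an encoding if QIR reduction steps and pattern-calculus reduction steps correspond tightly enough (the translation must preserve and reflect reductions), a point the paper leaves implicit. Your proof is longer but self-contained; it makes explicit the structural fact the encoding argument also relies on, namely that the $\delta$/$\rho$-rules form an orthogonal (left-linear, non-overlapping) system with $\beta$, and it isolates the only delicate phenomenon --- redex creation --- and handles it in the standard way via complete developments. Your closing remark about the open-ended ``$\ldots$'' family of $\rho$-rules is also the right caveat, and it applies equally to the paper's proof: both arguments remain valid precisely as long as every further primitive rule added to Definition~\ref{def:reduction-rules} is left-linear and does not overlap the existing left-hand sides.
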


As we will see in the following sections, this is an important property, since it allows us to apply any reduction strategy while preserving the semantics of the input expression. Consequently, extending QIR with non-confluent constructs (e.g., side-effects) would require significant work.

\begin{restatable}{theorem}{thstandardization}
\label{th:standardization}
QIR with the reduction rules of Definition \ref{def:reduction-rules} verifies the Standardization theorem, that is, if an expression $e$ has a normal form, it can be obtained from $e$ by reducing successively its leftmost outermost redex.
\end{restatable}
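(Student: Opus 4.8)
The statement is the classical \emph{normalization theorem} (leftmost-outermost reduction is normalizing), and my plan is to obtain it as a corollary of the \emph{Standardization theorem} proper, adapting a standard proof from the pure $\lambda$-calculus to the extra first-order rules of Definition~\ref{def:reduction-rules}. I would follow Takahashi's parallel-reduction method rather than the heavier finite-developments/residuals route, since it adapts most economically. First I would define a \emph{parallel reduction} relation $\Rightarrow$ on QIR expressions that contracts a set of (possibly nested) redexes in a single step. Its clauses mirror the syntax: a reflexive clause on variables, constants, $\elnil$, $\etnil$, data references and Truffle nodes; a congruence clause for every constructor ($\elam$, application, $\elcons$, $\etcons$, the value/built-in functions, and the operators); and one contraction clause per reduction rule, e.g. if $e_1 \Rightarrow e_1'$ and $e_2 \Rightarrow e_2'$ then $(\elam x.\ e_1)\ e_2 \Rightarrow e_1'\{e_2'/x\}$, and likewise for the four $\delta$-rules and the $\rho$-rules, each firing under the same syntactic side conditions (string equality for $\etdestr$, the boolean head for $\eif$ and $\eand$). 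I would then check the sandwich $\rightarrow\ \subseteq\ \Rightarrow\ \subseteq\ \rightarrow^{*}$, so that $\Rightarrow^{*}$ and $\rightarrow^{*}$ coincide, together with the substitution lemma: if $M \Rightarrow M'$ and $N \Rightarrow N'$ then $M\{N/x\} \Rightarrow M'\{N'/x\}$.

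Next I would fix the notion of a \emph{standard reduction sequence} (redexes contracted in non-decreasing position with respect to the depth-first, left-to-right traversal of Section~\ref{sec:notations}, read up to residuals) and prove Takahashi's key lemma: a parallel step absorbs into a standard sequence, i.e. if $M \Rightarrow N$ and $N$ admits a standard reduction to $L$, then $M$ admits a standard reduction to $L$. The core of this lemma is a head-reduction decomposition, writing each $\Rightarrow$ step as a head contraction followed by internal parallel reductions of the subterms, and then reassembling a standard sequence. Iterating the lemma along a $\Rightarrow$-chain yields Standardization in the form I need: $M \rightarrow^{*} N$ implies there is a standard reduction $M \rightarrow_{s} N$.

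Finally I would derive the stated theorem. Given $e$ with a normal form $n$, confluence (Theorem~\ref{th:confluence}) makes $n$ unique, and Standardization gives a standard reduction $e \rightarrow_{s} n$. If $e$ is not already normal, I claim its leftmost-outermost redex $R$ must be the redex contracted at the first step: otherwise the first contracted redex lies strictly later than $R$ in the traversal order, so by standardness no later step may return to $R$; and contracting any redex at a strictly later position preserves the outermost pattern of $R$ (the matched head $\elam$, $\elcons$, $\etcons$, $\etrue$ or $\efalse$ and its enclosing destructor/application survive), so $R$ leaves a residual that would persist into $n$, contradicting normality. Inducting on the length of the standard sequence then shows that repeatedly contracting the leftmost-outermost redex tracks $e \rightarrow_{s} n$ step by step and terminates at $n$.

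I expect the main obstacle to be the case analysis in Takahashi's key lemma once the $\delta$- and $\rho$-rules are present, and in particular verifying that $\beta$-substitution creates and destroys these first-order redexes compatibly with the head-reduction decomposition (the analogue of ``redex creation'' in the pure calculus). The one structural point needing explicit care is that the $\etdestr$ rules are not left-linear, since the attribute string occurs twice, and that several rules are guarded by constants ($\etrue$, $\efalse$, string names); but because those repeated and guard positions are non-reducible constants rather than shared reducible subterms, the system remains orthogonal and non-duplicating, so it introduces no genuinely new difficulty beyond the familiar $\beta$ case.
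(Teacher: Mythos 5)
Your proposal is correct in its approach, but it takes a genuinely different route from the paper. The paper proves this theorem by translation: reusing the encoding from its confluence proof (Church encoding of primitive values, constructors for operators and list/tuple builders, linear pattern matching for the destructors), it transposes QIR into a lambda-calculus with patterns for which a standardization theorem is already known, and asserts that the $\beta$, $\delta$ and $\rho$ rules are compatible with this encoding. You instead work directly on QIR syntax: Takahashi-style parallel reduction, the absorption lemma for standard sequences, and then the classical residual-persistence argument to descend from standardization proper to the leftmost-outermost normalization statement that the theorem actually asserts. The paper's route buys brevity, since all combinatorial work is delegated to the cited result, but it leaves implicit a real subtlety that your route avoids: under the encoding, a single $\delta$- or $\rho$-step becomes several $\beta$-steps on encoded terms, and the leftmost-outermost redex of the encoded term need not correspond to the leftmost-outermost redex of the QIR term, so transferring the \emph{strategy} statement (rather than the abstract rearrangement property) back through the encoding needs an argument the paper does not supply. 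Your direct proof is heavier --- the absorption lemma requires a long case analysis over $\beta$, the four $\delta$-rules and the $\rho$-rules --- but it is self-contained and makes explicit the standardization-to-normalization corollary that the paper's statement conflates with standardization itself.

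Two caveats for your write-up. First, your final argument silently relies on left-normality of the rewrite system: it is essential that the guard positions ($\etrue$, $\efalse$, and the attribute strings) are non-reducible literals, so that contracting a redex strictly later in the traversal can neither create nor destroy the leftmost-outermost redex. This holds for the rules as listed (reading the string literal as part of the $\etdestr$ symbol), but it would fail if the elided $\rho$-rules included a rule matching a constant to the right of a variable, such as $x\ \eand\ \etrue \rightarrow x$; you should state this as an explicit hypothesis on the rule set. Second, your closing claim that the system is ``non-duplicating'' is inaccurate --- $\beta$ duplicates, and so does the $\elcons$-branch of $\eldestr$ once the continuation is applied --- but nothing in your argument needs it; orthogonality together with the left-normality point above is what carries the proof.
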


\subsection{A measure for ``good'' output plans}
\label{sec:measure}

In this Section, I formally define the measure characterizing ``good'' QIR expressions. 

\begin{definition}
\label{def:supported-op}
\emph{Supported operators} are operators that are supported by the database, according to the mechanism discussed in Section \ref{sec:architecture}.
\end{definition}

\begin{definition}
\label{def:supported-expr}
\emph{Supported expressions} are expressions that are supported by the database inside operator configurations, according to the mechanism discussed in Section \ref{sec:architecture}.
\end{definition}

\begin{definition}
\label{def:compatible-op}
An operator is called \emph{compatible} if it is supported and has supported expressions as configuration. Any other operator or expression that is not an operator is called \emph{incompatible}.
\end{definition}

Note that the compatibility of an operator is independent of the compatibility of its children.\bigskip

\noindent\begin{minipage}[l]{0.54\linewidth}
\begin{definition}
\label{def:fragment}
Let $e$ be an expression and $F$ a context. We say that $F$ is a \emph{fragment} if either $e = C[t(t_1,\ldots,t_{i-1},F[e_1,\ldots,e_n],t_{i+1},\ldots,t_j)]$ or $e = F[e_1,\ldots,e_n]$, where:
\begin{itemize}
	\item $C$ is a one-hole context with arbitrary expressions
	\item $t$ is an incompatible j-ary expression
	\item $F$ is a $n$-hole context made only of compatible operators
	\item $t_1,\ldots,t_{i-1},t_{i+1},\ldots,t_j$ and $F[e_1,\ldots,e_n]$ are the $j$ children of $t$
	\item $e_1,\ldots,e_n$ are incompatible expressions
\end{itemize}
\vspace{0.6em}
This definition is illustrated in Figure \ref{fig:fragment}.
\end{definition}
\end{minipage}%
\begin{minipage}[c]{0.46\textwidth}
\centering
\captionsetup{type=figure}
\includegraphics[width=0.95\linewidth]{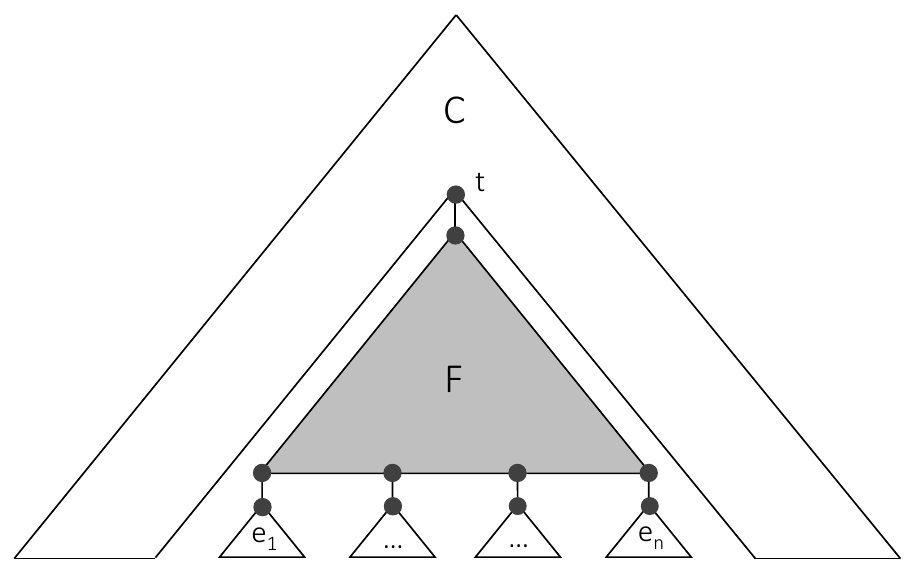}
\captionof{figure}{A fragment $F$}
\label{fig:fragment}
\end{minipage}

\begin{definition}
\label{def:frag-notations}
Let $e$ be an expression. We define $\op{e}$ as the number of operators in $e$, $\comp{e}$ as the number of compatible operators in $e$ and $\frag{e}$ as the number of fragments in $e$.
\end{definition}

\begin{definition}
\label{def:measure}
Let $e$ be an expression. We define the measure for good output expressions as the pair $M(e) = (\op{e} - \comp{e}, \frag{e})$. Moreover, expressions are ordered using the lexicographical order on this measure.
\end{definition}

The intuition about this measure is as follows. We want to consider reduction chains for which the measure decreases, that is, reductions $e \rightarrow^*  e'$ verifying one of the following conditions.
\begin{itemize}
	\item $\op{e'} - \comp{e'} < \op{e} - \comp{e}$ 
	\item $\op{e'} - \comp{e'} = \op{e} - \comp{e}$ and $\frag{e'} < \frag{e}$
\end{itemize}
When a reduction transforms an incompatible operator into a compatible operator, the measure $\op{e} - \comp{e}$ decreases. For instance, in Example \ref{ex:make-compatible}, $\op{e} - \comp{e} = 2 - 1 = 1$ (since $f$ is a variable, it is not supported by the database) and $\op{e'} - \comp{e'} = 2 - 2 = 0$.

\begin{example}
\label{ex:make-compatible}
\multistepreductionexample{
  $\elet \ f = \elam x.\ x = 2 \ \ein\ \oselect_{\elam t. f(t.id)}(\oscan_{e_1}())$
}{
  $\oselect_{\elam t.\ t.id = 2}(\oscan_{e_1}())$
}
\end{example}

\noindent When a reduction inlines operators as children of other operators, $\op{e} - \comp{e}$ stays the same (since no operator is created nor modified) and $\frag{e}$ decreases. For instance, in Examples \ref{ex:inline-fragment1} and \ref{ex:inline-fragment2}, $\op{e} - \comp{e} = \op{e'} - \comp{e'} = 2 - 2 = 0$, $\frag{e} = 2$ and $\frag{e'} = 1$.

\begin{example}
\label{ex:inline-fragment1}
\multistepreductionexample{
  $\elet\ x = \oscan_{e_1}()\ \ein\ \oselect_{\elam t.\ t.id = 2}(x)$
}{
  $\oselect_{\elam t.\ t.id = 2}(\oscan_{e_1}())$
}
\end{example}

\begin{example}
\label{ex:inline-fragment2}
\multistepreductionexample{
  $\elet\ f = \elam x.\ \oselect_{\elam t.\ t.id = 2}(x)\ \ein\ f\ (\oscan_{e_1}())$
}{
  $\oselect_{\elam t.\ t.id = 2}(\oscan_{e_1}())$
}
\end{example}

In the infinite recursion of Example \ref{ex:rec}, unfolding one step of recursion does not change the measure, which hints that the plan is not getting better.

\begin{example}
\label{ex:rec}
\multistepreductionexample{
  $\elet \ \texttt{rec}\ f = \elam x.\ \oselect_{\elam t.\ t.id = 2}(f\ x)\ \ein$\\
  $f\ (\oscan_{e_1}())$
}{
  $\elet \ \texttt{rec}\ f = \elam x.\ \oselect_{\elam t.\ t.id = 2}(f\ x)\ \ein$\\
  $\oselect_{\elam t.\ t.id = 2}(f\ (\oscan_{e_1}()))$
}
\end{example}

This intuition generalizes well to binary operators, with which a reduction can duplicate operators (e.g., Example \ref{ex:duplicate}). If all duplicated operators are compatible, $\op{e} - \comp{e}$ stays the same (since the duplicated operators are compatible) therefore it is a good reduction if the number of fragments decreases. In this case, the database will get a bigger plan in which it can perform optimizations (such as caching). Conversely, if a duplicated operator is incompatible, $\op{e} - \comp{e}$ increases, meaning that the reduction should not happen. In this situation, this operator should in fact be evaluated once in memory and its result cached to be reused multiple times during the expression evaluation. For instance, in Example \ref{ex:duplicate}, if $e_1$ is supported, $\op{e} - \comp{e} = 3 - 3 = 0$, $\op{e'} - \comp{e'} = 5 - 5 = 0$, $\frag{e} = 2$ and $\frag{e'} = 1$, and if $e_1$ is not supported, $\op{e} - \comp{e} = 3 - 2 = 1$, $\op{e'} - \comp{e'} = 5 - 3 = 2$.

\begin{example}
\label{ex:duplicate}
\multistepreductionexample{
  $\elet\ x = \oselect_{e_1}(\oscan_{e_2}())\ \ein$\\
	$\ojoin_{\elam t1. \elam t2.\ t1.id = t2.name}(x,x)$
}{
  $\ojoin_{\elam t1. \elam t2.\ t1.id = t2.name}(\oselect_{e_1}(\oscan_{e_2}()),$\\
	$\oselect_{e_1}(\oscan_{e_2}()))$
}
\end{example}

\begin{restatable}{lemma}{lemwellfounded}
\label{lem:well-founded}
The measure $M$ of Definition \ref{def:measure} induces a well-founded order on expressions.
\end{restatable}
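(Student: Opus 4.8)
The plan is to reduce the claim to the well-foundedness of the lexicographic order on $\mathbb{N}\times\mathbb{N}$, a standard fact, by observing that $M$ maps expressions into $\mathbb{N}\times\mathbb{N}$ and that the order on expressions is precisely the pullback of $<_{\mathrm{lex}}$ along $M$. Recall (Definition \ref{def:measure}) that expressions are ordered by $e' \prec e$ iff $M(e') <_{\mathrm{lex}} M(e)$, so it suffices to rule out infinite strictly $\prec$-decreasing chains.

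First I would check that $M(e)$ genuinely lands in $\mathbb{N}\times\mathbb{N}$. Since every expression is a finite tree, the three quantities $\op{e}$, $\comp{e}$ and $\frag{e}$ are all finite non-negative integers. The only nontrivial point is that the first component $\op{e}-\comp{e}$ is a natural number rather than an arbitrary integer: by Definition \ref{def:compatible-op} every compatible operator is in particular an operator, so the compatible operators counted by $\comp{e}$ form a subset of the operators counted by $\op{e}$, giving $\comp{e}\le\op{e}$ and hence $\op{e}-\comp{e}\ge 0$. Thus $M(e)\in\mathbb{N}\times\mathbb{N}$ for every expression $e$.

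Next I would invoke, or quickly reprove, the standard fact that the lexicographic order $<_{\mathrm{lex}}$ on $\mathbb{N}\times\mathbb{N}$ is well-founded: an infinite strictly decreasing chain would have a non-increasing, hence eventually constant, first coordinate, after which the second coordinate would form an infinite strictly decreasing chain in $\mathbb{N}$, which is impossible. Finally I would conclude by the general principle that the preimage of a well-founded order under a function is well-founded: an infinite strictly $\prec$-decreasing chain $e_0 \succ e_1 \succ \cdots$ would map under $M$ to an infinite strictly decreasing chain $M(e_0) >_{\mathrm{lex}} M(e_1) >_{\mathrm{lex}} \cdots$ in $\mathbb{N}\times\mathbb{N}$, contradicting the previous step.

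I expect no real obstacle here: the only content specific to QIR is the inequality $\comp{e}\le\op{e}$, which is immediate from the definitions, while the rest is routine order theory. The main care to take is simply to confirm that the three quantities are finite and that the first component cannot become negative, so that the codomain of $M$ really is $\mathbb{N}\times\mathbb{N}$ and not $\mathbb{Z}\times\mathbb{N}$, on which $<_{\mathrm{lex}}$ fails to be well-founded.
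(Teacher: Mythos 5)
Your proof is correct and takes essentially the same route as the paper's: pull back the well-founded lexicographic order on $\mathbb{N}\times\mathbb{N}$ along $M$. In fact you are somewhat more careful than the paper, which silently assumes the first component $\op{e}-\comp{e}$ is a non-negative integer, whereas you justify it via $\comp{e}\le\op{e}$.
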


\subsection{An exhaustive reduction strategy}
\label{sec:exhaustive}

Given the measure $M$ described in Section \ref{sec:measure} and an expression $e$, the question is now to find a reduced term $e'$ such that $e \rightarrow^* e'$ and $M(e')$ is minimal. During my internship, I first designed and implemented the exhaustive reduction strategy presented formally in this section.

\begin{definition}
\label{def:redex-contract}
Let $e$ be an expression. We denote by $\redexes{e}$ the set of redexes in $e$, and for $r \in \redexes{e}$, we write $e \reduceredex{r} e'$ to state that $e'$ can be obtained from $e$ in one reduction step by contracting $r$.
\end{definition}

\begin{definition}
\label{def:exhaustive-red}
Let $e$ be an expression. We recursively define $\reduction{0}{e}$ as the singleton $\{e\}$ and $\forall n > 0, \reduction{n}{e} = \{ e'' \ | \ e' \in \reduction{n-1}{e}, r \in \redexes{e'}, e' \reduceredex{r} e''\}$. We write $\reductions{e}$ to denote $\bigcup_{n \in \mathbb{N}} \reduction{n}{e}$. Finally we define $\minreductions{e}$ as the set $\operatorname{argmin}_{e' \in \reductions{e}}M(e').$
\end{definition}

Notice that for some expressions $e$ (e.g., expressions containing recursions), $\reductions{e}$ and $\minreductions{e}$ may be infinite and an algorithm that iteratively constructs the $\reduction{n}{e}$ may not terminate. Conversely, if $e$ is strongly normalizing, $\reductions{e}$ and $\minreductions{e}$ are finite and such an algorithm terminates. In general, the parallel exploration of all redexes leads to a combinatorial explosion, and I will show in Section \ref{sec:concrete-examples} that my implementation of this algorithm timeouts on moderately large expressions.

\begin{theorem}
\label{th:exhaustive}
The reduction strategy of Definition \ref{def:exhaustive-red} is exhaustive, i.e., $\forall e, \forall e', e \rightarrow^* e' \Leftrightarrow e' \in \reductions{e}$.
\begin{proof}
($\Leftarrow$) If $e' \in \reductions{e}$ then $\exists n \in \mathbb{N}, e' \in \reduction{n}{e}$. The proof follows an induction on $n$. If $n = 0$ then $e = e'$ therefore $e \rightarrow^* e'$. For $n > 0$, we know that $\exists e'' \in \reduction{n-1}{e}, \exists r \in \redexes{e''}, e'' \reduceredex{r} e'$. Thus, $e'' \rightarrow e'$ and by induction hypothesis $e \rightarrow^* e''$ therefore $e \rightarrow^* e'$.\\
($\Rightarrow$) If $e \rightarrow^* e'$ then $\exists n \in \mathbb{N}, e \rightarrow^n e'$. The proof follows an induction on $n$. If $n = 0$ then $e = e'$ and $e' \in \reduction{0}{e} \subseteq \reductions{e}$. For $n > 0$, we know that $\exists e'', e \rightarrow^{n-1} e'' \rightarrow e'$. By induction hypothesis, $e'' \in \reductions{e}$ and $\exists n' \in \mathbb{N}, e'' \in \reduction{n'}{e}$. Therefore $e' \in \reduction{n'+1}{e} \subseteq \reductions{e}$.
\end{proof}
\end{theorem}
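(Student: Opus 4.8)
The plan is to establish the biconditional by a standard double inclusion, proving each direction by induction on the number of one-step reductions. The guiding observation is that the recursive definition of $\reduction{n}{e}$ is set up precisely so that it collects the terms reachable from $e$ in exactly $n$ steps; consequently $\reductions{e} = \bigcup_{n} \reduction{n}{e}$ should coincide with the set of terms reachable under the reflexive-transitive closure $\rightarrow^*$.

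For the ($\Leftarrow$) direction I would assume $e' \in \reductions{e}$, extract an $n$ with $e' \in \reduction{n}{e}$, and induct on $n$. The base case $n=0$ is immediate from $\reduction{0}{e} = \{e\}$ together with reflexivity of $\rightarrow^*$. In the inductive step, unfolding the definition of $\reduction{n}{e}$ yields an intermediate $e'' \in \reduction{n-1}{e}$ and a redex $r$ with $e'' \reduceredex{r} e'$; the induction hypothesis supplies $e \rightarrow^* e''$, and appending the single step $e'' \rightarrow e'$ gives $e \rightarrow^* e'$.

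For the ($\Rightarrow$) direction I would assume $e \rightarrow^* e'$, so that $e \rightarrow^n e'$ for some $n$, and again induct on $n$. The base case places $e = e'$ in $\reduction{0}{e} \subseteq \reductions{e}$. For $n>0$ I would split the chain as $e \rightarrow^{n-1} e'' \rightarrow e'$, use the induction hypothesis to locate $e''$ in some $\reduction{n'}{e}$, and then observe that contracting the final redex is exactly the operation appearing in the defining equation for $\reduction{n'+1}{e}$, so that $e' \in \reduction{n'+1}{e} \subseteq \reductions{e}$.

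I do not expect any genuine obstacle here: the result is essentially the mutual unfolding of two definitions. The only point deserving care is the step-count bookkeeping — in the forward direction the witnessing index is the $n'$ produced by the induction hypothesis rather than the original length $n$, but this causes no trouble since $\reductions{e}$ quantifies existentially over the index. I would also take care to keep the ``exactly $n$ steps'' reading of $\reduction{n}{e}$ consistent throughout, as this is the hinge that makes both inductions go through cleanly.
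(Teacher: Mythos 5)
Your proposal is correct and follows essentially the same argument as the paper: a double implication, each direction proved by induction on the number of reduction steps, with the same handling of the witnessing index $n'$ in the forward direction. Nothing to add.
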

\vspace{-0.5em}

\begin{restatable}{lemma}{lemredsnonempty}
\label{lem:reds-non-empty}
$\forall e, \minreductions{e} \neq \emptyset$.
\end{restatable}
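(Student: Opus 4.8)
The plan is to reduce the claim to the existence of a least element in the image of the measure $M$, and then invoke the well-foundedness already established in Lemma~\ref{lem:well-founded}. First I would observe that $\reductions{e}$ is never empty: since $\reduction{0}{e} = \{e\}$ by definition, we have $e \in \reductions{e}$. Consequently the set of measure values
\[
  V_e = \{\, M(e') \mid e' \in \reductions{e} \,\}
\]
is a nonempty subset of the codomain of $M$.

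Next I would pin down that codomain precisely. Because every compatible operator is in particular an operator, we always have $\comp{e'} \le \op{e'}$, so the first coordinate $\op{e'} - \comp{e'}$ is a nonnegative integer; the second coordinate $\frag{e'}$ is a nonnegative integer by definition. Hence $V_e \subseteq \mathbb{N} \times \mathbb{N}$, equipped with the lexicographic order used in Definition~\ref{def:measure}. This order is \emph{total}, and by Lemma~\ref{lem:well-founded} it is well-founded; a total well-founded order is a well-order, so every nonempty subset of $\mathbb{N}\times\mathbb{N}$ has a least element. Applying this to $V_e$ yields a least value $m = \min V_e$, attained by at least one $e^\star \in \reductions{e}$. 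By the definition of $\operatorname{argmin}$, such an $e^\star$ belongs to $\minreductions{e}$, which is therefore nonempty.

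The step that deserves care—and is the real content beyond bookkeeping—is the passage from \emph{minimal} to \emph{least}, i.e. the guarantee that the infimum of $V_e$ is actually attained even when $\reductions{e}$ is infinite (as it may be for expressions with recursion, as noted after Definition~\ref{def:exhaustive-red}). Well-foundedness alone only furnishes minimal elements; attainment of a genuine minimum requires totality of the order on measure values. I therefore have to make explicit that the lexicographic order on $\mathbb{N}\times\mathbb{N}$ is total, so that minimal and least coincide and $\operatorname{argmin}$ is guaranteed nonempty. Note that no finiteness of $\reductions{e}$ is needed, which is exactly why the argument survives the non-terminating case, and that Theorem~\ref{th:exhaustive} is not even required here, since the reasoning proceeds directly from the definition of $\reductions{e}$.
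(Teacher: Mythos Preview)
Your proof is correct and follows essentially the same approach as the paper: show $\reductions{e}\neq\emptyset$ via $e\in\reduction{0}{e}$, then invoke Lemma~\ref{lem:well-founded} to extract a minimum of $M$ over $\reductions{e}$. You are in fact more careful than the paper, which simply states ``well-founded therefore $M$ has a minimum'' without isolating the role of totality in passing from \emph{minimal} to \emph{least}; your explicit treatment of that point is a genuine improvement in rigor, not a deviation in strategy.
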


\begin{theorem}
\label{th:optimal}
The reduction strategy of Definition \ref{def:exhaustive-red} is optimal, i.e., $\forall e, \forall e', e \rightarrow^* e' \Rightarrow e' \in \minreductions{e} \ \vee \ \exists e'' \in \minreductions{e}, M(e'') < M(e')$.
\begin{proof}
Suppose that $e \rightarrow^* e'$. Using Theorem \ref{th:exhaustive} we know that $e' \in \reductions{e}$. Using Lemma \ref{lem:reds-non-empty}, we know that $\minreductions{e} \neq \emptyset$ and we denote by $M_{min}$ the measure $M$ of the elements of $\minreductions{e}$. Then, either $M(e') = M_{min}$ and $e' \in \minreductions{e}$ by definition, or $M(e') > M_{min}$ and using again Lemma \ref{lem:reds-non-empty} we can find $e'' \in \minreductions{e}, M(e'') < M(e')$.
\end{proof}
\end{theorem}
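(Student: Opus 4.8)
The plan is to reduce the optimality claim to two facts already established and then finish with a single comparison of measures. The statement says that every reduct $e'$ of $e$ is either itself measure-minimal among the reducts of $e$, or is strictly dominated by one that is. I would isolate the two ingredients this needs: exhaustiveness, which identifies the reducts of $e$ with the elements of $\reductions{e}$ (Theorem \ref{th:exhaustive}), and the fact that $\minreductions{e}$ is nonempty and realises the minimum measure over $\reductions{e}$ (Lemma \ref{lem:reds-non-empty}, together with the $\operatorname{argmin}$ of Definition \ref{def:exhaustive-red}).

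Concretely, assume $e \rightarrow^* e'$. By Theorem \ref{th:exhaustive} this gives $e' \in \reductions{e}$. By Lemma \ref{lem:reds-non-empty}, $\minreductions{e} \neq \emptyset$, so I fix some $e_0 \in \minreductions{e}$ and set $M_{\min} = M(e_0)$. Since $\minreductions{e} = \operatorname{argmin}_{x \in \reductions{e}} M(x)$, every element of $\minreductions{e}$ has measure $M_{\min}$, and $M_{\min} \leq M(x)$ for all $x \in \reductions{e}$; in particular $M_{\min} \leq M(e')$. As the measure takes values in pairs of naturals ordered lexicographically, which is a total order, I obtain the dichotomy $M(e') = M_{\min}$ or $M(e') > M_{\min}$. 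In the first case $e'$ attains the minimum over $\reductions{e}$, hence $e' \in \minreductions{e}$; in the second, $e_0$ is the required witness, since $M(e_0) = M_{\min} < M(e')$.

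The only delicate point, and the place I would spend care, is the well-definedness of $\minreductions{e}$: the set $\reductions{e}$ may be infinite (for instance when $e$ admits infinite reduction sequences), so it is not automatic that the minimum measure is \emph{attained} rather than merely approached. This attainment is exactly the content of Lemma \ref{lem:reds-non-empty}, and it rests on Lemma \ref{lem:well-founded}: since $\reductions{e}$ is nonempty (it contains $e$, as $\reduction{0}{e} = \{e\}$), and since well-foundedness together with totality of the lexicographic order on $\mathbb{N} \times \mathbb{N}$ makes that order a well-order, the set of measures $\{\,M(x) \mid x \in \reductions{e}\,\}$ has a least element, which is therefore realised by some reduct lying in $\minreductions{e}$. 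I would be explicit about invoking totality to license the dichotomy above, since well-foundedness alone yields only minimal, not minimum, elements. Once Lemmas \ref{lem:well-founded} and \ref{lem:reds-non-empty} are in hand the optimality theorem is bookkeeping; the genuine content lives in those two lemmas rather than in the theorem itself.
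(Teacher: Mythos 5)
Your proof is correct and takes essentially the same route as the paper's: invoke Theorem \ref{th:exhaustive} to place $e'$ in $\reductions{e}$, invoke Lemma \ref{lem:reds-non-empty} for non-emptiness of $\minreductions{e}$, and conclude by the dichotomy on measures. Your added care about attainment of the minimum (via Lemma \ref{lem:well-founded}) and about totality of the lexicographic order makes explicit what the paper leaves implicit, but it is the same argument.
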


As discussed above, this reduction strategy is not realistic in terms of complexity for a direct implementation and may not even terminate in some cases. Nevertheless, Theorem \ref{th:optimal} states that it can be used as a yardstick to discuss the optimality of other reduction strategies.

\subsection{A heuristic-based reduction strategy}
\label{sec:heuristic}

Given the conclusion of Section \ref{sec:exhaustive}, I started working on a more efficient evaluation strategy. In this section, I formally describe an efficient, always-terminating heuristic, corresponding to a partial exploration of the possible reductions, then give proofs of its properties.\bigskip

This heuristic-based reduction strategy supposes the existence of an integer constant $\Phi$ representing the ``fuel'' that can be consumed by the reduction. It consists of two main passes. The first pass tries to reduce redexes that could make operators compatible by assuming that (i) operators with free variables in their configuration have few chances to be compatible and (ii) reducing redexes inside operator configurations increases the odds of making an operator compatible. The second pass tries to decrease the number of fragments by reducing redexes inside the children expressions of the operators. Both passes guarantee that the measure of the global expression always decreases after a number of reduction steps bounded by $\Phi$.\bigskip

For readability purposes, I first describe the search space tree explored by the heuristic then define the result of the algorithm as a particular expression in this search space. The actual implementation corresponds to a depth-first exploration of the search space, with decision points and back-tracking.

\begin{definition}
\label{def:oppaths}
Let $e$ be an expression. We define its operator contexts $\oppaths{e}$ as the set of contexts $\{ C[]\ |\ e = C[op],\ op \text{ is an operator}\}$.
\end{definition}

\begin{definition}
\label{def:cfgfreevars}
Let $e$ be an expression and $C[]$ an operator context in $\oppaths{e}$. We define the free variables of a configuration $\cfgfreevars{e}{C[]}$ as the list of variables $\listcomp{v}{e = C[op_{C'[v]}(\ldots)],\ v\ \text{is a free variable in } C'[v]}$ sorted using a depth-first left-to-right traversal of $C'[v]$.
\end{definition}

\begin{definition}
\label{def:cfgredexes}
Let $e$ be an expression and $C[]$ an operator context in $\oppaths{e}$. We define the redexes of a configuration $\cfgredexes{e}{C[]}$ as the list of redexes $\listcomp{r}{e = C[op_{C'[r]}(\ldots)],\ r\ \text{is a}\allowbreak \text{redex}}$ sorted using a depth-first left-to-right traversal of $C'[v]$.
\end{definition}

\begin{definition}
\label{def:opchildren}
Let $e$ be an expression and $C[]$ an operator context in $\oppaths{e}$. We define the children expressions $\opchildren{e}{C[]}$ as the list of expressions $\listcomp{c_i}{e = C[op_{\ldots}(c_1,\ldots,c_n)]}$.
\end{definition}

\begin{definition}
\label{def:makeredex-contractlam-inlinevar}
The following three definitions are mutually recursive and therefore presented together in this document.\\
Let $e$ be an expression and $e'$ a subexpression of $e$. We define $\hmakeredex{e}{e'}$ as
\begin{itemize}
	\item $e''$, such that $e \reduceredex{r} e''$, if $e'$ is already a redex $r$
	\item $\hmakeredex{e}{e''}$ if $e' = (e''\ e_1)$, $e' = \eldestr\ e''\ e_1\ e_2$, $e' = \etdestr\ e''\ s$, $e' = \eif\ e''\ \ethen\ e_1\allowbreak \eelse\ e_2$, $e' = e''\ \eand\ (\etrue/ \efalse)$ or $e' = (\etrue/ \efalse)\ \eand\ e''$ (and similarily for other $\rho$-redexes).
	\item $\hinlinevar{e}{v}$ if $e'$ is a variable $v$
	\item $\hnone$ otherwise
\end{itemize}
Let $e$ be an expression and $e'$ a subexpression of $e$. We define $\hcontractlam{e}{e'}$ as
\begin{itemize}
	\item $\hmakeredex{e}{e''}$ where $e = C[e'']$, if $e = C[e'\ e_1]$, $e = C[e_1\ e']$, $e = C[\eldestr\ e'\ e_1\ e_2]$, $e = C[\eldestr\ e_1\ e'\ e_2]$, $e = C[\eldestr\ e_1\ e_2\ e']$, $e = C[\etdestr\ e'\ s]$, $e = C[\eif\ e_1\ \ethen\ e' \eelse\ e_2]$ or $e = C[\eif\ e_1\ \ethen\ e_2\ \eelse\ e']$
	\item $\hcontractlam{e}{e''}$ where $e = C[e'']$, if $e = C[\elam v.\ e']$, $e = C[\elcons\ e'\ e_1]$, $e = C[\elcons\ e_1\ e']$, $e = C[\etcons\ s\ e'\ e_1]$ or $e = C[\etcons\ s\ e_1\ e']$
	\item $\hnone$ otherwise
\end{itemize}
Let $e$ be an expression and $v$ a variable in $e$. We define the inlining $\hinlinevar{e}{v}$ as
\begin{itemize}
	\item $\hnone$ if $v$ is free in $e$
	\item $\hcontractlam{e}{l}$, where $l$ is the $\elam$ binding $v$ in $e$, otherwise
\end{itemize}
\end{definition}

$\hmakeredex{e}{e'}$ corresponds to (i) contracting $e'$ if $e'$ is already a redex and (ii) contracting a necessary redex in order for $e'$ to become a redex, otherwise. $\hcontractlam{e}{e'}$ corresponds to (i) contracting the redex containing $e'$ if $e'$ is already part of a redex and (ii) contracting a necessary redex in order for $e'$ to become part of a redex, otherwise. $\hinlinevar{e}{v}$ corresponds to contracting a necessary redex in order to eventually substitute $v$ by its definition.\bigskip

Consider for instance the expression $e$ given in Example \ref{ex:inlinevar-contractlam-makeredex}. To go towards the inlining of the variable $tl$ in the configuration of the $\oscan$ operator, the heuristic calls $\hinlinevar{e}{tl}$, which tries to contract the binding lambda by calling $\hcontractlam{e}{\elam tl.\ \oscan_{f\ tl}()}$. This lambda cannot be contracted before the lambda above it, and therefore $\hcontractlam{e}{\elam hd.\ \elam tl.\ \ldots}$ is called recursively. Again, this lambda cannot be contracted before the head $\eldestr$, but this requires its first child expression to be reduced to a $\elcons$. This is the reason why the heuristic recursively calls $\hmakeredex{e}{(\elam x.\ x)\ (\elcons\ \ldots)}$, which contracts its argument redex.

\begin{example}
\label{ex:inlinevar-contractlam-makeredex}
\onestepreductionexample{
	$\eldestr\ ((\elam x.\ x)\ (\elcons\ 1\ (\elcons\ 2\ \elnil)))$\\
	$\efalse\ (\elam hd.\ \elam tl.\ \oscan_{f\ tl}())$
}{
  $\eldestr\ (\elcons\ 1\ (\elcons\ 2\ \elnil))$\\
	$\efalse\ (\elam hd.\ \elam tl.\ \oscan_{f\ tl}())$
}
\end{example}

\begin{definition}
\label{def:cfgonered}
Let $e$ be an expression, $C[]$ an operator context in $\oppaths{e}$, $V = \listcomp{v \in \cfgfreevars{e}{C[]}}{\hinlinevar{e}{v} \neq \hnone}$ and $R = \cfgredexes{e}{C[]}$. We define the one-step reduction of the operator configuration $\honeredcfg{e}{C[]}$ as
\begin{itemize}
	\item $\hinlinevar{e}{v}$ such that $v$ is the first element of $L$, if $L$ is non empty
	\item $e'$, such that $e \reduceredex{r} e'$ and $r$ is the first element of $R$, if $R$ is non empty
	\item $\hnone$ otherwise
\end{itemize}
\end{definition}

This means that for a given operator, the heuristic first tries to find a free variable that can be inlined, then if there is no such variable, it reduces the configuration using a leftmost outermost (call-by-name) reduction strategy, and finally if there is no redex to reduce, it returns $\hnone$.\bigskip

In the following definitions, we will describe (search space) trees of expressions using the notations given in Section \ref{sec:notations}.

\begin{definition}
\label{def:hredcfg}
Let $e$ be an expression, $C[]$ an operator context in $\oppaths{e}$, $\phi$ an integer and $e' = \honeredcfg{e}{C[]}$. We define $\hredcfg{\phi}{e}{C[]}$ as
\begin{itemize}
	\item $e()$ if $\phi = 0$ or if $e' = \hnone$
	\item $e'(\listcomp{\hredcfg{\Phi}{e'}{op}}{op \in \oppaths{e'}})$ if $M(e') < M(e)$
	\item $e(\listcomp{\hredcfg{\phi-1}{e'}{op}}{op \in \oppaths{e'}})$ otherwise
\end{itemize}
\end{definition}

\begin{definition}
\label{def:pass1}
Let $e$ be an expression. The search space $\hconfigss{\Phi}{e}$ after the first pass is defined as $e(\listcomp{\hredcfg{\Phi}{e}{op}}{op \in \oppaths{e}})$.
\end{definition}

$\hconfigss{\Phi}{e}$ corresponds to the recursive exploration of the search space tree for reductions in configurations, with the guarantee that search space subtrees in which the measure does not decrease have a depth bounded by $\Phi$.

\begin{definition}
\label{def:pass1result}
Let $e$ be an expression. The result expression $\hconfig{\Phi}{e}$ of the first pass is defined as follows. First, the rewrite rule  $x(s_1,\ldots,s_i,x(),s_{i+2},s_n) \rightarrow x(s_1,\ldots,s_i,s_{i+2},s_n)$ is applied on $\hconfigss{\Phi}{e}$ as much as possible. Then, the leftmost leaf is chosen.
\end{definition}

The rewrite rule corresponds to pruning search space subtrees in $\hconfigss{\Phi}{e}$ that failed to reach an expression with smaller measure, whereas taking the leftmost leaf is a heuristic decision corresponding to never backtrack to expressions with larger measure than the current candidate (for performance reasons).

\begin{definition}
\label{def:childonered}
Let $e$ be an expression, $C[]$ an operator context in $\oppaths{e}$ and $L = \listcomp{c \in \opchildren{e}{C[]}}{\hmakeredex{e}{c} \neq \hnone}$. We define the one-step reduction of the operator children $\honeredchild{e}{C[]}$ as
\begin{itemize}
	\item $\hmakeredex{e}{c}$ such that $c$ is the first element of $L$, if $L$ is non empty
	\item $\hnone$ otherwise
\end{itemize}
\end{definition}

This means that for a given operator, the heuristic tries to find a child expression that can be reduced (using a call-by-name reduction strategy) in order to regroup fragments, and if there is no such child, it returns $\hnone$.

\begin{definition}
\label{def:hredchild}
Let $e$ be an expression, $C[]$ an operator context in $\oppaths{e}$, $\phi$ an integer and $e' =\honeredchild{e}{C[]}$. We define $\hredchild{\phi}{e}{C[]}$ as
\begin{itemize}
	\item $e()$ if $\phi = 0$ or if $e' = \hnone$
	\item $e'(\listcomp{\hredchild{\Phi}{e'}{op}}{op \in \oppaths{e'}})$ if $M(e') < M(e)$
	\item $e(\listcomp{\hredchild{\phi-1}{e'}{op}}{op \in \oppaths{e'}})$ otherwise
\end{itemize}
\end{definition}

\begin{definition}
\label{def:pass2}
Let $e$ be an expression. The search space $\hchildss{\Phi}{e}$ after the second pass is defined as $e(\listcomp{\hredchild{\Phi}{e}{op}}{op \in \oppaths{e}})$.
\end{definition}

$\hchildss{\Phi}{e}$ corresponds to the recursive exploration of the search space tree for reduction in children, with the guarantee that search space subtrees in which the measure does not decrease have a depth bounded by $\Phi$.

\begin{definition}
\label{def:pass2result}
Let $e$ be an expression. The result expression $\hchild{\Phi}{e}$ of the second pass is defined as follows. First, the rewrite rule  $x(s_1,\ldots,s_i,x(),s_{i+2},s_n) \rightarrow x(s_1,\ldots,s_i,s_{i+2},s_n)$ is applied on $\hchildss{\Phi}{e}$ as much as possible. Then, the leftmost leaf is chosen.
\end{definition}

Similarly to Definition \ref{def:pass1result}, the rewrite rule corresponds to pruning search space subtrees in $\hchildss{\Phi}{e}$ that failed to reach an expression with smaller measure, whereas taking the leftmost leaf is a heuristic decision.

\begin{definition}
\label{def:hminred}
Let $e$ be an expression. The result expression $\hminred{\Phi}{e}$ of the heuristic reduction is defined as $\hchild{\Phi}{\hconfig{\Phi}{e}}$.
\end{definition}

Figure \ref{fig:heuristic} illustrates the construction of the two search spaces (cf. Definitions \ref{def:pass1} and \ref{def:pass2}) leading to the computation of $\hminred{\Phi}{e}$. Dots and stars represent expressions considered by the heuristic as a possible reduction of their parent, but stars correspond to the special case where the measure of the expression is smaller than the measure of its parent, i.e., when the heuristic made progress.\bigskip

We will now continue towards a proof of termination of this reduction strategy (Theorem \ref{th:termination}).

\begin{definition}
\label{def:traversal-pos}
Considering an expression with subexpressions as a tree with subtrees, we define a traversal of an expression as follows, starting from the head expression: 
\begin{itemize}
	\item for a nullary expression, visit this expression.
	\item for $\elam x.\ e_1$, $\elcons\ e_1\ e_2$ and $\etcons\ s\ e_1\ e_2$, visit this expression then traverse the children from left to right ($e_1$ then $e_2$ $\ldots$ then $e_n$).
	\item for other $n$-ary expressions (e.g., $(e_1\ e_2)$, $\eldestr\ e_1\ e_2\ e_3$, etc.), traverse the children from left to right ($e_1$ then $e_2$ $\ldots$ then $e_n$), then visit the expression.
\end{itemize}
We denote by $\travpos{e}{e'}$ the position of a subexpression $e'$ in this traversal of $e$.
\end{definition}

$\travpos{e}{.}$ actually induces a well-founded order on the locations of the subexpressions that are arguments of the recursive calls to $\hinlinevar{.}{.}$, $\hmakeredex{.}{.}$ and $\hcontractlam{.}{.}$ made by the heuristic. It will be used in inductions in some of the following proofs.

\noindent\begin{minipage}[l]{0.52\linewidth}
\centering
\captionsetup{type=figure}
\includegraphics[width=0.825\linewidth]{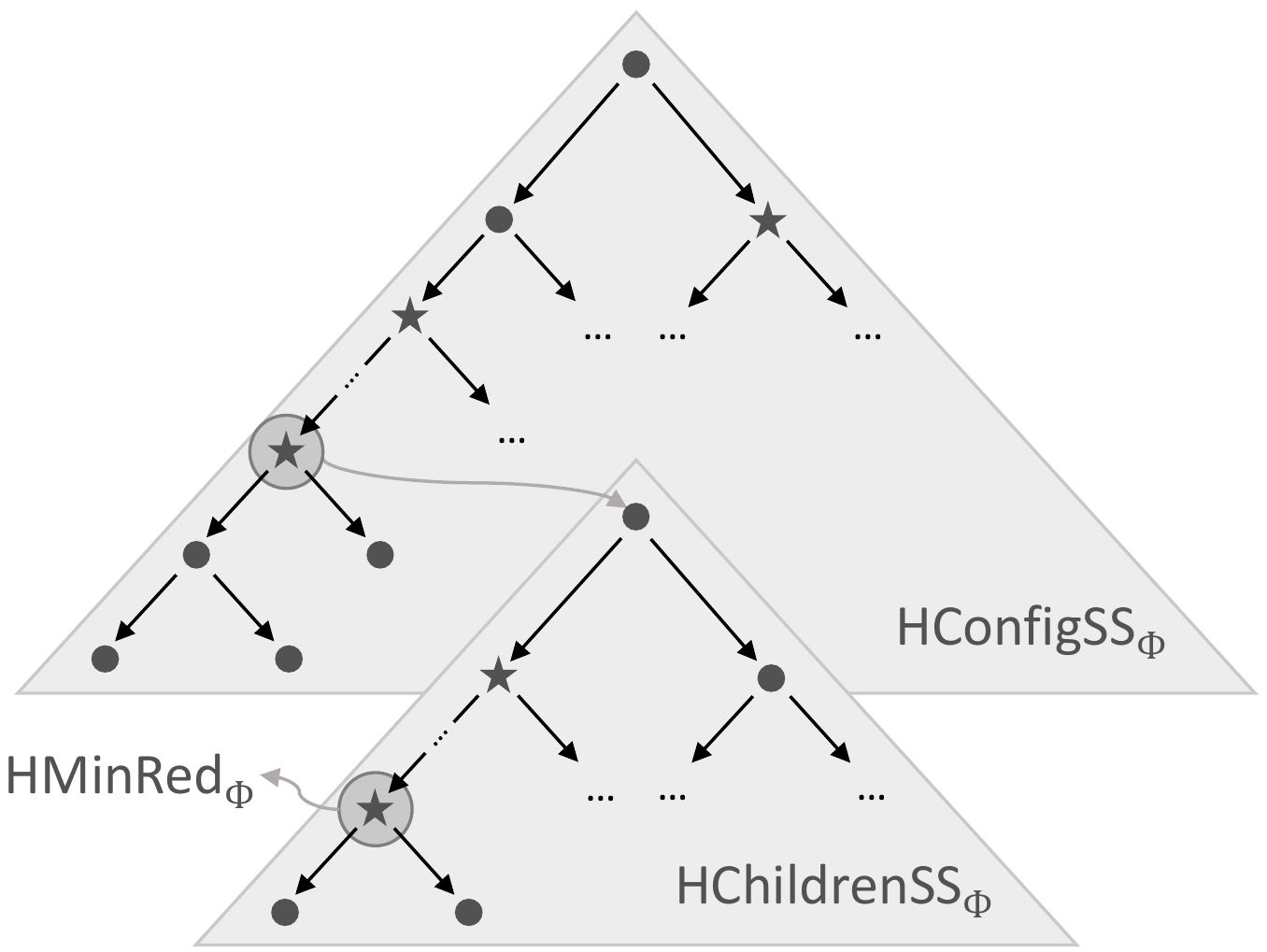}
\captionof{figure}{$\hminred{\Phi}{e}$ and the search spaces $\hconfigss{\Phi}{e}$ and $\hchildss{\Phi}{\hconfig{\Phi}{e}}$}
\label{fig:heuristic}
\end{minipage}%
\hspace{0.04\linewidth}%
\begin{minipage}[c]{0.44\textwidth}
\centering
\captionsetup{type=figure}
\includegraphics[width=0.8\linewidth]{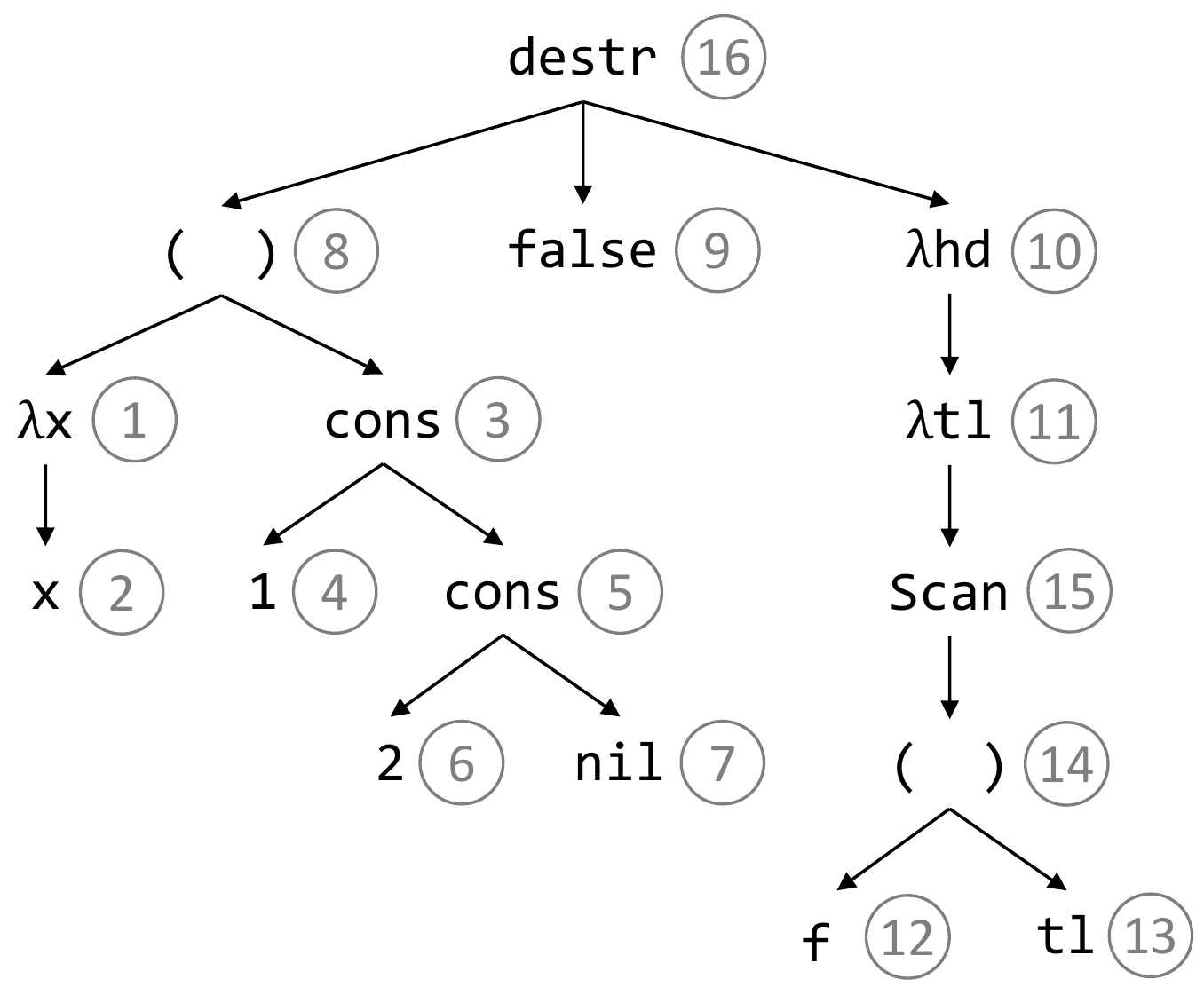}
\captionof{figure}{Example of expression traversal}
\label{fig:traversal}
\end{minipage}

\vspace{1em}

For example, consider again the expression $e$ from Example \ref{ex:inlinevar-contractlam-makeredex} and its traversal given in Figure \ref{fig:traversal}. Remember the chain of recursive calls made by the heuristic ($\hinlinevar{e}{tl}$, which calls $\hcontractlam{e}{\elam tl.\ \oscan_{f\ tl}()}$, which calls $\hcontractlam{e}{\elam hd.\ \elam tl.\ \ldots}$, which indirectly calls $\hmakeredex{e}{(\elam x.\ x)\ (\elcons\ \ldots)}$, which finally contracts this redex) and notice that $\travpos{e}{tl} > \travpos{e}{\elam tl.\ \oscan_{f\ tl}()} > \travpos{e}{\elam hd.\ \elam tl.\ \ldots} > \travpos{e}{(\elam x.\ x)\ (\elcons\ \ldots)}$.

\begin{restatable}{lemma}{lemhinlinevarhmakeredexhcontractlamterminate}
\label{lem:hinlinevar-hmakeredex-hcontractlam-terminate}
For all expression $e$ the following properties hold.
\begin{itemize}
	\item For a variable $e'$ in $e$, $\hinlinevar{e}{e'}$ returns $\hnone$ or contracts a redex.
	\item For a subexpression $e'$ in $e$, $\hmakeredex{e}{e'}$ returns $\hnone$ or contracts a redex.
	\item For a subexpression $e'$ in $e$, $\hcontractlam{e}{e'}$ returns $\hnone$ or contracts a redex.
\end{itemize}
\end{restatable}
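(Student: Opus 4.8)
The plan is to prove the three statements \emph{simultaneously} by well-founded induction, since $\textsf{HInlineVar}$, $\textsf{HMakeRedex}$ and $\textsf{HContractLam}$ are mutually recursive. The measure I would use is the traversal position $\travpos{e}{\cdot}$ of the argument — exactly the order the surrounding text announces as well-founded — refined to a lexicographic pair $(\travpos{e}{\cdot}, \kappa)$ with a small tag $\kappa$ to break ties between the three functions. Before doing the induction I would record one structural fact that settles the ``contracts a redex'' half almost for free: every clause of the three definitions is either a \emph{base clause} (the first clause of $\hmakeredex{e}{e'}$, which performs a single step $e \reduceredex{r} e''$, or one of the $\hnone$ clauses) or a \emph{tail clause} returning, unchanged, the value of a single recursive call. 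Hence, by an immediate induction, any non-$\hnone$ result is produced by exactly one visit to the contracting base clause, so the output is either $\hnone$ or a one-step reduction $e \reduceredex{r} e''$; it then only remains to prove termination. I would also fix the easily-checked invariant that the argument of $\hcontractlam{\cdot}{\cdot}$ is always a value form ($\elam$, $\elcons$ or $\etcons$), since both its call sites (from $\textsf{HInlineVar}$ on a binding $\elam$, and from its own second clause on a $\elam$/$\elcons$/$\etcons$ parent) produce such arguments.

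Next I would check that almost every recursive call strictly lowers $\travpos{e}{\cdot}$. For $\hinlinevar{e}{v}$ the only recursive call is $\hcontractlam{e}{l}$ on the $\elam$ binding $v$; since $l$ is an ancestor $\elam$ of the occurrence $v$ and, by Definition \ref{def:traversal-pos}, a $\elam$-node is visited before its body, $\travpos{e}{l} < \travpos{e}{v}$. In $\hcontractlam{e}{e'}$ the second clause recurses on a parent that is a $\elam$, $\elcons$ or $\etcons$, and these node kinds are again visited before their children, so the position strictly drops. In $\hmakeredex{e}{e'}$ the second clause recurses on the \emph{head} subexpression (the function of an application, the scrutinee of a destructor, the tested operand of a conditional); for these ``children-then-node'' constructs the node is visited last, so the head has strictly smaller position than $e'$. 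The residual clause $\hmakeredex{e}{v} = \hinlinevar{e}{v}$ keeps the same position, which is the reason for the tag $\kappa$: I order \textsf{HMakeRedex} above \textsf{HInlineVar} at equal position, and the very next step (the call from \textsf{HInlineVar} to \textsf{HContractLam}) drops the position strictly, so no infinite stationary chain can form.

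The hard part, and the only transition that does \emph{not} lower $\travpos{e}{\cdot}$, is the first clause of $\hcontractlam{e}{e'}$: there $e'$ is a value form sitting as a child of an elimination form $e''$ (an application, a destructor or a conditional), and we jump \emph{up} to $\hmakeredex{e}{e''}$ with $\travpos{e}{e''} > \travpos{e}{e'}$. I would dispose of this by analysing $\hmakeredex{e}{e''}$ one level further rather than counting it as a milestone — this is exactly the ``indirect'' call elided in the displayed chain of Example \ref{ex:inlinevar-contractlam-makeredex}. If $e'$ occupies the eliminated/head position of $e''$ (function of an application, scrutinee of a destructor), then $e''$ is already a $\beta$- or $\delta$-redex whenever it can reduce at all, so $\hmakeredex{e}{e''}$ either contracts immediately or, for an inert combination such as $(\elcons\ \ldots)\ e_1$, descends once into the value form $e'$ and returns $\hnone$ at once; either way the recursion halts. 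If $e'$ occupies a non-head position, then $\hmakeredex{e}{e''}$ recurses into the head child $h$ of $e''$, and $h$ lies strictly to the \emph{left} of $e'$, whence $\travpos{e}{h} < \travpos{e}{e'}$. Thus the composite step $\hcontractlam{e}{e'} \to \hmakeredex{e}{e''} \to \hmakeredex{e}{h}$ (or $\to \hnone$/contract) strictly decreases the measure or terminates. Folding this jump into a single macro-transition — equivalently, assigning $\hmakeredex{e}{e''}$ the secondary measure $\travpos{e}{h}$ of its head instead of $\travpos{e}{e''}$ — removes the spurious increase and closes the well-founded induction, yielding both termination and, together with the structural observation above, the conclusion that each of the three functions returns $\hnone$ or contracts exactly one redex.
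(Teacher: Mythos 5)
Your proposal is correct and is essentially the paper's own proof: the paper likewise inducts on $\travpos{e}{\cdot}$, resolves the equal-position transitions by establishing the three claims in the order $\hinlinevar{\cdot}{\cdot}$, $\hmakeredex{\cdot}{\cdot}$, $\hcontractlam{\cdot}{\cdot}$ (the role of your tag $\kappa$), and handles the one position-increasing jump --- the first clause of $\hcontractlam{e}{e'}$ --- exactly as you do, by unfolding $\hmakeredex{e}{e''}$ one step so that it either contracts $e''$, recurses on a head $h$ strictly to the left of $e'$, or falls back to $\hmakeredex{e}{e'}$ at the same position. The only real difference is that where you invoke the value-form invariant ($\elam$/$\elcons$/$\etcons$) to make that fallback terminate immediately, the paper appeals to the already-proved $\hmakeredex{\cdot}{\cdot}$ claim at equal position, which has the small advantage of establishing the $\hcontractlam{\cdot}{\cdot}$ bullet for arbitrary subexpressions, as the lemma literally states, rather than only for the value-form arguments your invariant covers.
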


These three properties not only show that the objects used to describe the heuristic are well defined, but also mean that correct implementations of $\hinlinevar{.}{.}$, $\hmakeredex{.}{.}$ and $\hcontractlam{.}{.}$ terminate.

\begin{restatable}{lemma}{lemhredcfghredchildterminate}
\label{lem:hredcfg-hredchild-terminate}
For all expression $e$, the search space $\hconfigss{\Phi}{e}$ (resp. $\hchildss{\Phi}{e}$) has bounded size.
\end{restatable}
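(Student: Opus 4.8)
The plan is to bound the number of nodes of $\hconfigss{\Phi}{e}$ (and, symmetrically, of $\hchildss{\Phi}{e}$) by bounding separately the out-degree of each node and the length of each root-to-leaf branch: a finitely branching tree of bounded depth has bounded size. The out-degree is immediate. By Definitions \ref{def:hredcfg} and \ref{def:pass1}, a node built from an expression $e'$ has exactly one child per element of $\oppaths{e'}$, and any expression contains only finitely many operators, so $|\oppaths{e'}|$ is finite at every node. All the work is thus in bounding the depth, i.e.\ in showing that the recursion unfolding $\hredcfg{\phi}{e}{C[]}$ (resp.\ $\hredchild{\phi}{e}{C[]}$) terminates.

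For the depth I would argue by well-founded recursion on the lexicographic pair $(M(e), \phi)$, ordering the first coordinate by the well-founded measure order of Lemma \ref{lem:well-founded} and the second by the usual order on $\phi$. Reading off Definition \ref{def:hredcfg}: a leaf is reached when $\phi = 0$ or $\honeredcfg{e}{C[]} = \hnone$; a measure-decreasing step ($M(e') < M(e)$) recurses with the fuel reset to $\Phi$ but strictly lowers the first coordinate; and any other step recurses with $\phi - 1$, strictly lowering the second coordinate \emph{provided it does not raise the first}. Hence the recursion is well-founded as soon as one knows that a non-measure-decreasing step never strictly increases $M$.

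That proviso is the crux, and it behaves very differently in the two passes. For the first pass it is routine: $\honeredcfg{e}{C[]}$ contracts a redex strictly inside one operator's configuration, which---configurations carrying no operators of their own---leaves $\op{e'} = \op{e}$ and the operator tree, hence the fragment count $\frag{e'} = \frag{e}$, untouched; in the non-decreasing case one therefore has $M(e') = M(e)$ and the lexicographic ranking applies verbatim, giving finiteness of $\hconfigss{\Phi}{e}$. For the second pass it fails outright: $\honeredchild{e}{C[]}$ may, via $\hmakeredex{e}{c}$, inline a shared binding and duplicate an \emph{incompatible} operator across several child positions, strictly increasing $\op{e'} - \comp{e'}$---this is exactly Example \ref{ex:duplicate} when $e_1$ is unsupported---so that $M(e') > M(e)$ and $(M(e), \phi)$ need not decrease.

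To repair the argument for $\hchildss{\Phi}{e}$ I would abandon the single lexicographic pair and bound the two kinds of steps on a branch separately. By construction every maximal run of non-measure-decreasing steps has length at most $\Phi$, so it suffices to bound the number of fuel-resetting (measure-decreasing) steps; the difficulty is precisely that $M$ may have grown on the intervening runs. I would exploit the directedness of the second pass: $\honeredchild{e}{C[]}$ only ever fires the redex that $\hmakeredex{e}{c}$ selects in order to expose an operator at a child position, and by Lemma \ref{lem:hinlinevar-hmakeredex-hcontractlam-terminate} together with the traversal order of Definition \ref{def:traversal-pos} each firing is a single, well-located contraction. The aim would be to isolate a quantity that every measure-decreasing step strictly lowers while no duplicating step can raise, so that its well-foundedness caps the resets and hence the depth. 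Pinning down such a quantity---and checking that it survives the operator duplication of Example \ref{ex:duplicate}---is where I expect the real effort of the proof to lie, and it is the main obstacle.
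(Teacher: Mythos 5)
Your skeleton---finite branching plus a depth bound via the lexicographic pair $(M(e),\phi)$---is exactly the paper's argument, and you have correctly located its crux: in the ``otherwise'' branch of Definition~\ref{def:hredcfg} the recursion proceeds on $e'$ with fuel $\phi-1$, so the pair decreases only if the step did not strictly increase $M$. However, your division of labour between the two passes is wrong, and this leaves a genuine gap. For the first pass you claim the contraction happens ``strictly inside one operator's configuration'' and that configurations carry ``no operators of their own'', so that $\op{e'}=\op{e}$ and $\frag{e'}=\frag{e}$. Neither claim holds: a configuration is an arbitrary expression and may itself contain operators, so a configuration-internal contraction can duplicate or erase them; and the $\hinlinevar{e}{v}$ branch of $\honeredcfg{e}{C[]}$ contracts the $\beta$-redex of the $\elam$ binding $v$, which lies \emph{outside} the configuration and substitutes its argument into every occurrence of $v$. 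The paper's own Example~\ref{ex:duplicate-op} is precisely a first-pass inlining step (inlining $x$, whose definition is the incompatible $\oscan_z()$, into two positions) that strictly increases $\op{e}-\comp{e}$. So the half of your proof you call routine fails, and the half you acknowledge as the ``main obstacle'' (pass~2) is left open; consequently no depth bound is established for either pass and the lemma is not proved.

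For calibration, the paper's own proof is exactly the one-line assertion you declined to make: each recursive call ``is such that the pair $(M(e),\phi)$ decreases'', followed by Lemma~\ref{lem:well-founded} and finite branching. Your analysis shows this assertion is unjustified, and it is in fact false as stated: take the configuration $G\,G$ with $G=\elam g.\ \eif\ \efalse\ \ethen\ \oscan_w()\ \eelse\ (g\,g)$ and $w$ globally free. The leftmost-outermost step of the first pass contracts $G\,G$ and duplicates the incompatible $\oscan_w()$ (a case-3 step increasing $M$), and the next step contracts the resulting $\eif\ \efalse$ redex and erases one copy (a case-2 step decreasing $M$, which resets the fuel to $\Phi$), returning to the original expression; for $\Phi\geq 2$ the recursion of Definition~\ref{def:hredcfg} therefore never bottoms out. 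So your instinct points at a real defect of the lemma as stated, not a fixable presentation issue: closing it requires restricting the class of expressions (e.g.\ to expressions with fixed operators under a database with stable compatibility, where Lemma~\ref{lem:reduce-decr-measure} guarantees that no reduction step increases $M$, making your lexicographic argument sound) or modifying the fuel-reset rule---not a cleverer induction on the definitions as given.
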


\begin{theorem}
\label{th:termination}
This reduction strategy always terminates, that is, for an input expression $e$, it always considers a finite number of expressions $e'$ such that $e \rightarrow^* e'$ in order to find $\hminred{\Phi}{e}$.
\begin{proof}
Using Lemma \ref{lem:hredcfg-hredchild-terminate}, this reduction strategy only considers a finite number of expressions $e_1$ such that $e \rightarrow^* e_1$ in order to find $\hconfig{\Phi}{e}$, then using Lemma \ref{lem:hredcfg-hredchild-terminate} again, it only considers a finite number of expressions $e_2$ such that $\hconfig{\Phi}{e} \rightarrow^* e_2$ in order to find $\hchild{\Phi}{\hconfig{\Phi}{e}} = \hminred{\Phi}{e}$.
\end{proof}
\end{theorem}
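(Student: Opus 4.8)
The plan is to reduce termination to the finiteness of the two search-space trees, which I may take from Lemma \ref{lem:hredcfg-hredchild-terminate}, and then to argue that the set of expressions actually \emph{considered} by the strategy is contained in the node sets of these two trees.

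First I would unfold Definition \ref{def:hminred}: computing $\hminred{\Phi}{e}$ means first computing $\hconfig{\Phi}{e}$ and then $\hchild{\Phi}{\hconfig{\Phi}{e}}$. By Definitions \ref{def:pass1} and \ref{def:pass1result}, obtaining $\hconfig{\Phi}{e}$ amounts to building the tree $\hconfigss{\Phi}{e}$, applying the pruning rewrite rule to it, and taking the leftmost leaf. Every expression the strategy inspects during the first pass therefore occurs as a node of $\hconfigss{\Phi}{e}$: the depth-first exploration with backtracking visits exactly these nodes, and the subsequent pruning and leftmost-leaf selection only discard nodes, never create new ones. By Lemma \ref{lem:hredcfg-hredchild-terminate} this tree has bounded size, so only finitely many expressions are considered in the first pass and $\hconfig{\Phi}{e}$ is well-defined. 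The same argument applied to $\hchildss{\Phi}{\hconfig{\Phi}{e}}$ (Definitions \ref{def:pass2} and \ref{def:pass2result}) shows the second pass considers only finitely many expressions and produces $\hminred{\Phi}{e}$.

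Next I would check that every considered expression $e'$ really satisfies $e \rightarrow^* e'$. Each edge of $\hconfigss{\Phi}{e}$ goes from a node $a$ to $\honeredcfg{a}{C[]}$ for some operator context $C[]$ (Definition \ref{def:hredcfg}), and each edge of the child tree goes from $a$ to $\honeredchild{a}{C[]}$ (Definition \ref{def:hredchild}). By Definitions \ref{def:cfgonered} and \ref{def:childonered} these one-step operations are instances of $\hinlinevar{\cdot}{\cdot}$, $\hmakeredex{\cdot}{\cdot}$, or a direct redex contraction; and by Lemma \ref{lem:hinlinevar-hmakeredex-hcontractlam-terminate} each of them either returns $\hnone$ (so no edge is added) or contracts a single redex, hence is a genuine one-step reduction. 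Since $e$ is the root of the first tree and $\hconfig{\Phi}{e}$ is itself a node of that tree (thus $\rightarrow^*$-reachable from $e$) and serves as the root of the second, transitivity of $\rightarrow^*$ gives $e \rightarrow^* e'$ for every node $e'$ of either tree.

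Finally I would conclude that the set of all expressions considered is contained in the union of the node sets of two finite trees, hence is finite, which is exactly the statement. The genuine mathematical content—that the search spaces are finite at all—is isolated in Lemma \ref{lem:hredcfg-hredchild-terminate}, so \emph{that} lemma is where the real obstacle lies; behind it stands the well-foundedness of $M$ from Lemma \ref{lem:well-founded}, which caps the number of ``progress'' nodes (where $M$ strictly decreases) along any branch, while the fuel $\Phi$ caps the number of non-progress steps between consecutive progress nodes, and the finitely many operator contexts in $\oppaths{\cdot}$ bound the branching. Given Lemma \ref{lem:hredcfg-hredchild-terminate}, the present theorem is a short composition-and-reachability argument.
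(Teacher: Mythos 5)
Your proposal is correct and follows essentially the same route as the paper: decompose $\hminred{\Phi}{e}$ into the two passes via Definition \ref{def:hminred}, apply Lemma \ref{lem:hredcfg-hredchild-terminate} to bound each search space, and compose. The extra details you supply---that every considered expression is a node of one of the two trees, and that each edge is a genuine one-step reduction so $e \rightarrow^* e'$ holds---are exactly what the paper's terser proof leaves implicit, relying on the definitions and Lemma \ref{lem:hinlinevar-hmakeredex-hcontractlam-terminate}.
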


We will now continue towards a proof that under some hypotheses this reduction strategy is complete, that is, it returns an optimal result for a big enough fuel value (Theorem \ref{th:completeness}).\bigskip

In the following proofs, we will need a way to track subexpressions of an expression and understand how they are duplicated, erased or simply moved across successive reductions of the main expression. Borrowing an idea used in \cite{lambda-calc} to define residuals, we annotate all subexpressions of the original expression with unique identifiers, simply propagate such identifiers along reductions without altering them, and reason on identifiers in the final expression.

\begin{definition}
\label{def:occurences}
Let $e_0$ be a QIR expression, $e$ a subexpression of $e_0$ and $e_n$ a QIR expression such that $e_0 \rightarrow^* e_n$. We define $\occur{e_n}{e}$ as the set of subexpressions of $e_n$ with the same identifiers as $e$ when identifiers are propagated along the reduction chain $e_0 \rightarrow^* e_n$.
\end{definition} 

For instance, the QIR reduction $(\elam x.\ \elam y.\ f\ x\ x)\ z\ z \rightarrow (\elam y.\ f\ z\ z)\ z \rightarrow f\ z\ z$\\
can be annotated as follows: $(\elam^1x.\ \elam^2y.\ f^3\ x^4\ x^5)\ z^6\ z^7 \rightarrow (\elam^2y.\ f^3\ z^6\ z^6)\ z^7 \rightarrow f^3\ z^6\ z^6$.\\
\noindent One can see, just by looking at identifiers, that the first $z$ variable produced both $z$ variables of the final expression: it has two occurences in the final expression. The other one disappeared: it has no occurence in the final expression.\bigskip

A more formal definition of $\occur{e_n}{e}$ is given in Appendix \ref{ap:qir-star}. The following definitions use the $\occur{.}{.}$ notation but do not depend on the internals of its definition.

\begin{definition}
\label{def:erased}
Let $e$ be an expression, $e'$ a subexpression of $e$ and $r$ a redex such that $e \reduceredex{r} e''$ and $\occur{e''}{e'} = \emptyset$. Borrowing a concept from \cite{lambda-calc}, we say that $r$ \emph{erases} $e'$ if $e'$ is not part of the redex $r$, that is if $e'$ is a subexpression of an expression $e''$ such that one of the following conditions holds.
\vspace{-1em}
\begin{multicols}{2}
\begin{itemize}
	\item $r = (\elam x.\ e_1)\ e''$ and $x$ is not free in $e_1$
	\item $r = \eldestr\ \elnil\ e_1\ e''$
	\item $r = \eldestr\ (\elcons\ \ldots)\ e''\ e_1$
	\item $r = \etdestr \ (\etcons \ \estr{s1} \ e_1 \ e'') \ \estr{s1}$
	\item $r = \etdestr \ (\etcons \ \estr{s1} \ e'' \ e_1) \ \estr{s2}$
	\item $r = \eif\ \etrue\ \ethen\ e_1\ \eelse\ e''$
	\item $r = \eif\ \efalse\ \ethen\ e''\ \eelse\ e_1$
	\item $r = true\ \eand\ e''$ (and similarly for other $\rho$-redexes)
\end{itemize}
\end{multicols}
\end{definition}

\begin{definition}
\label{def:consumed}
Let $e$ be an expression, $e'$ a subexpression of $e$ and $r$ a redex such that $e \reduceredex{r} e''$ and $\occur{e''}{e'} = \emptyset$. We say that $r$ \emph{consumes} $e'$ if $e'$ is part of the redex $r$, that is if one of the following conditions holds.
\vspace{-1em}
\begin{multicols}{2}
\begin{itemize}
	\item $r = e'$
	\item $r = (e'\ e_1)$
	\item $r = (\elam v.\ e_1)\ e_2$ and $e'$ is the variable $v$ free in $e_1$
	\item $r = \eldestr\ e'\ e_1\ e_2$
	\item $r = \etdestr\ e'\ s$
	\item $r = \eif\ e'\ \ethen\ e_1\ \eelse\ e_2$
	\item $r = e'\ \eand\ e_1$ (and similarly for other $\rho$-redexes)
\end{itemize}
\end{multicols}
\end{definition}
\vspace{-0.8em}

\begin{restatable}{lemma}{lemerasedorconsumed}
\label{lem:erased-or-consumed}
Let $e$ be an expression, $e'$ a subexpression of $e$ and $r$ a redex in $e$. If $e \reduceredex{r} e''$ and $\occur{e''}{e'} = \emptyset$, then $r$ either consumes or erases $e'$.
\end{restatable}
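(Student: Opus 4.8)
The plan is to reduce the statement to an exhaustive positional case analysis on where $e'$ sits relative to the contracted redex $r$. Write $e = C[r]$ for the one-hole context $C$ that locates $r$, so that $e''$ is obtained by replacing, inside $C$, the redex $r$ by its contractum. Since any two subexpressions of a tree are either nested or occupy disjoint positions, there are only three possibilities for $e'$: (i) $e'$ and $r$ are at disjoint positions, (ii) $e'$ is a strict ancestor of $r$, or (iii) $e'$ equals $r$ or is a subexpression of $r$. In cases (i) and (ii) the root node of $e'$ lies strictly outside the subtree occupied by $r$, so the contraction leaves that node — and its identifier — untouched; by the identifier-propagation property of $\occur{.}{.}$ this yields at least one element of $\occur{e''}{e'}$, contradicting the hypothesis $\occur{e''}{e'} = \emptyset$. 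Hence, under the hypothesis, we must be in case (iii), and it remains to analyse subexpressions lying within the redex.

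First I would split case (iii) according to the kind of redex $r$ — a $\beta$-redex, one of the four $\delta$-redexes for the list and tuple destructors, or one of the $\rho$-redexes for $\eif$ and $\eand$ — and, within each kind, according to the argument position occupied by $e'$. For every redex I would classify the internal positions into three groups. The \emph{skeleton} positions are those dismantled by the contraction: the redex root node itself; the function side $\elam x.\ e_1$ of a $\beta$-redex; the scrutinee $\elnil$, $\elcons\ \ldots$, or $\etcons\ \ldots$ of a destructor; the test $\etrue/\efalse$ of an $\eif$; the left operand of an $\eand$; and every bound occurrence of the substituted variable $x$ inside a $\beta$-redex body. The \emph{discarded-branch} positions are those thrown away by the contraction: the argument of a $\beta$-redex whose bound variable does not occur in the body; the dead branch of a destructor (the cons-branch in the $\elnil$ rule, the nil-branch in the $\elcons$ rule, and the tail or the mismatched value in the tuple rules); and the dead branch of an $\eif$ or $\eand$. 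The remaining \emph{preserved} positions are the body $e_1$ of a lambda (apart from bound occurrences of $x$), the surviving result branch of each destructor and $\rho$-rule, and the argument of a $\beta$-redex whose bound variable does occur.

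The two nontrivial directions then match by inspection. Each skeleton position is exactly one clause of the definition of \emph{consumes} ($r = e'$ for the root, $r = (e'\ e_1)$ for the function side, $r = \eldestr\ e'\ e_1\ e_2$ and $r = \etdestr\ e'\ s$ for the scrutinee, $r = \eif\ e'\ \ethen\ e_1\ \eelse\ e_2$ and $r = e'\ \eand\ e_1$ for the tests, together with the bound-variable clause), so there $r$ consumes $e'$; and each discarded-branch position is exactly one clause of the definition of \emph{erases}, so there $r$ erases $e'$. For the preserved positions I would give a short preservation argument: the contraction either copies $e'$ verbatim (when the $\beta$-argument is substituted into the body, or a result branch is promoted in place) or leaves it untouched, so its identifier survives and $\occur{e''}{e'} \neq \emptyset$, again contradicting the hypothesis; thus the preserved positions never arise. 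Collecting the cases, whenever $\occur{e''}{e'} = \emptyset$ the subexpression $e'$ lies in a skeleton or a discarded-branch position, i.e. $r$ consumes or erases $e'$.

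I expect the $\beta$-case to be the main obstacle, for two reasons. First, whether the argument is preserved or erased depends on whether the bound variable occurs in the body, so the argument position must be split accordingly, and one has to check that the single \emph{erases} clause $r = (\elam x.\ e_1)\ e''$ with $x$ not free in $e_1$ indeed covers every disappearing subexpression of the argument. Second, within the body one must separate bound occurrences of $x$ (which are consumed) from all other body subexpressions (which are preserved), and justify that substitution copies those other subexpressions without disturbing their identifiers. The only further delicate point is the tuple destructor with a mismatched attribute name, where I must verify that the cons-node is still consumed while the mismatched value is erased and the tail is preserved. All of these rest solely on the basic identifier-propagation behaviour of $\occur{.}{.}$, and not on the details of its appendix definition.
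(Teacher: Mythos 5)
Your proposal is correct and takes essentially the same approach as the paper's proof: a case analysis on the kind of redex $r$ combined with a positional case analysis of $e'$ relative to $r$, using identifier survival to rule out positions outside the redex and the preserved positions inside it, and matching the remaining positions against the clauses of the \emph{consumes} and \emph{erases} definitions. The only difference is thoroughness: the paper writes out only the $\beta$-case and declares the other redexes ``similar,'' whereas you make the skeleton/discarded/preserved classification explicit across all redex kinds.
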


\begin{definition}
\label{def:duplicate}
Let $e$ be an expression, $e'$ a subexpression of $e$ and $r$ a redex in $e$ such that $e \reduceredex{r} e''$, we say that $r$ \emph{duplicates} $e'$ if $|\occur{e''}{e'}| > 1$.
\end{definition}

\begin{definition}
\label{def:not-erased-consumed}
Let $e_0$ be an expression and $e'$ a subexpression of $e_0$. We say that $e'$ \emph{cannot be erased} (resp. \emph{consumed}, \emph{duplicated}) if there is no reduction chain $e_0 \reduceredex{r_1} \ldots \reduceredex{r_n} e_n$ such that $r_n$ erases (resp. consumes, duplicates) an element of $\occur{e_{n-1}}{e'}$.
\end{definition}
\vspace{-0.8em}

\begin{restatable}{lemma}{leminlinevarmakeredexcontractlamcomplete}
\label{lem:inlinevar-makeredex-contractlam-complete}
Let $e$ be an expression and $e'$ a subexpression of $e$. The following properties hold.
\begin{itemize}
	\item If $e'$ is a variable $v$ and if $\hinlinevar{e}{v}$ returns $\hnone$ then $e'$ cannot be consumed.
	\item If $\hmakeredex{e}{e'}$ returns $\hnone$ then $e'$ cannot be consumed.
	\item If $\hcontractlam{e}{e'}$ returns $\hnone$ then $e'$ cannot be consumed.
\end{itemize}
\end{restatable}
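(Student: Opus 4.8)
The plan is to prove the three implications simultaneously, in contrapositive form, by well-founded induction on $\travpos{e}{e'}$ (whose well-foundedness on the positions visited by the recursive calls is exactly the remark following Definition~\ref{def:traversal-pos}), mirroring the mutual recursion of Definition~\ref{def:makeredex-contractlam-inlinevar}. Concretely, I assume that an occurrence of $e'$ is consumed along some chain $e = e_0 \reduceredex{r_1} \cdots \reduceredex{r_n} e_n$ in the sense of Definition~\ref{def:consumed}, and I derive that the relevant function returns something other than $\hnone$; a secondary induction on the length $n$ lets me peel off the first step $r_1$ and relate it to $e'$. Throughout I track syntactic nodes with $\occur{\cdot}{\cdot}$, use Lemma~\ref{lem:erased-or-consumed} to know that a vanishing occurrence is either erased or consumed, and use confluence (Theorem~\ref{th:confluence}) to reorder independent redexes.

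For $\hinlinevar{e}{v}$, the base case is when $v$ is free in $e$. Here I argue directly that a free variable can never be consumed: capture-avoiding substitution (Definition~\ref{def:substitution}) never places a binder around an already-free occurrence, so every occurrence of $v$ stays a free variable, and a bare variable never occupies a consuming position of Definition~\ref{def:consumed} (it is not itself a redex, it cannot head an application redex, and it cannot be destructed or tested). When instead $v$ is bound by a lambda $l$ and $\hcontractlam{e}{l} = \hnone$, I invoke the induction hypothesis on $l$, which sits at a strictly smaller traversal position: since an occurrence of $v$ can only be consumed by $\beta$-reducing its binder, and each occurrence of $v$ lives inside the matching occurrence of $l$ (they are duplicated and relocated together), a consuming step on $v$ would force a consuming step on $l$, contradicting that $l$ cannot be consumed.

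For $\hmakeredex{e}{e'}$ and $\hcontractlam{e}{e'}$ I follow the recursion structurally. $\hmakeredex$ descends along the spine of $e'$ (into the function of an application, the scrutinee of a destructor, the tested subexpression of a conditional, the relevant operand of $\eand$), each step strictly decreasing $\travpos$, until it reaches a head that is either a value whose shape cannot form the needed redex or a variable handed to $\hinlinevar$; dually, $\hcontractlam$ climbs out of value constructors ($\elam$ bodies, $\elcons$/$\etcons$ arguments) and, at an application, $\eldestr$/$\etdestr$, or $\eif$ frame, defers to $\hmakeredex$ on the enclosing node. In each branch I match the shape the function tests against the shape a consuming chain would require: for $e'$ to reach a consuming position, its surrounding frame must first be turned into a redex, and the only contraction that can produce that redex is the one the function points to, so a $\hnone$ result means the demanded redex can never appear and $e'$ stays stuck.

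The hard part will be making rigorous the informal ``the demanded redex is necessary'' step. A consuming chain need not contract redexes in the greedy order the functions use, and intermediate steps may duplicate or relocate $e'$ and its binders, so I expect the crux to be a permutation/necessity argument: any chain consuming an occurrence of $e'$ can be reorganized so that its first relevant step is one of the occurrences of the redex the function would contract. I plan to obtain this by combining confluence with the Standardization theorem (Theorem~\ref{th:standardization}) to bring the consuming chain into leftmost-outermost order along the spine, matching the functions' choices, and by using the occurrence machinery to show that duplication only multiplies, never creates, consuming opportunities, so that if the greedy redex is absent ($\hnone$) no later chain can manufacture one. This reduction-ordering bookkeeping, rather than any single structural case, is where the real work lies.
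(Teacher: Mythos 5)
Your skeleton --- treating the three claims simultaneously, well-founded induction on $\travpos{e}{e'}$, and a case analysis mirroring the mutual recursion of Definition \ref{def:makeredex-contractlam-inlinevar} --- is exactly the paper's, and your handling of the two $\hinlinevar{.}{.}$ cases (a free variable can never be consumed; a bound occurrence is consumed exactly when its binder is consumed by a $\beta$-step) matches the paper's argument. The gap is in what you yourself identify as the hard part. The paper needs no permutation, reordering, or standardization argument at all, and the one you propose would not go through as stated. Theorem \ref{th:standardization} is only the weak form of standardization: it produces a leftmost-outermost reduction to a \emph{normal form}, so it says nothing about an arbitrary chain $e \reduceredex{r_1} \cdots \reduceredex{r_n} e_n$ whose last step consumes an occurrence of $e'$ --- such a chain need not end anywhere near a normal form, and the lemma carries no normalization hypothesis whatsoever (it must hold, for instance, for expressions containing $\Omega$). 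Likewise, Theorem \ref{th:confluence} gives joinability of two reducts, not the ability to permute the steps of a given chain. So the machinery you plan to lean on cannot deliver the claim that ``any chain consuming an occurrence of $e'$ can be reorganized so that its first relevant step is the greedy redex.''

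What replaces it in the paper is a purely local, definitional observation, which is why the induction closes. Occurrences keep their head constructor (identifiers stay attached to the constructor they annotate), so, for example, an occurrence of $e' = (e''\ e_1)$ is always an application; by Definition \ref{def:consumed} an application occurrence can only be consumed by being itself a $\beta$-redex, which requires a lambda in head position; and that head position holds an occurrence of $e''$ (never a lambda, since in this branch $e'$ is not a redex) until the expression sitting there is itself consumed --- it cannot be erased without erasing all of $e'$, which is where Lemma \ref{lem:erased-or-consumed} does its work. Hence any chain whose last step consumes an occurrence of $e'$ contains an \emph{earlier} step consuming an occurrence of $e''$; truncating the chain at that step and applying the induction hypothesis to $e''$, whose traversal position is smaller, yields the contradiction. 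The same one-step bridging handles every other case (destructor scrutinees, conditional tests, and the stuck cases such as a lambda sitting in $\eldestr$-scrutinee position, which can never be consumed because its parent can never become a redex). In short: you do not need to reorder the adversary's chain so that its first step matches the heuristic's choice; you only need to find, inside the given chain, one earlier consumption of the node your recursion descends to, and the definition of consumption hands you that for free.
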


\begin{definition}
\label{def:fixed-operators}
We say that an expression $e$ has the fixed operators property if no operator in $e$ can be erased or duplicated.
\end{definition}

Remember that the definition of compatible operators (Definitions \ref{def:supported-op}, \ref{def:supported-expr} and \ref{def:compatible-op}) depends on an arbitrary external module describing the capabilities of the target database (cf. Section \ref{sec:architecture}). The following restriction allows us to reason on this module.

\begin{definition}
\label{def:compat-stable}
We say that a database verifies the stable compatibility property if, given an expression $e$, an operator $op$ in $\{ op\ |\ C[] \in \oppaths{e},\ e = C[op] \}$ such that $op$ is compatible and an expression $e'$ such that $e \rightarrow^* e'$, each operator $op' \in \occur{op}{e'}$ is also compatible.
\end{definition}

This last definition should hold for a realistic database and an accurate description of its capabilities. Indeed, it basically says that if an operator is compatible, any reduction either does not affect the operator or helps the database by simplifying its configuration.

\begin{restatable}{lemma}{lemreducedecrmeasure}
\label{lem:reduce-decr-measure}
Let $e$ be an expression with fixed operators and $r$ a redex in $e$. For a database with stable compatibility, if $e \reduceredex{r} e'$ then $M(e') \leq M(e)$.
\end{restatable}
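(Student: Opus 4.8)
The plan is to treat the two components of $M(e) = (\op{e} - \comp{e}, \frag{e})$ separately, exploiting the bijection on operators that the fixed operators hypothesis provides. First I would note that none of the contraction rules of Definition \ref{def:reduction-rules} ever creates an operator: $\beta$-, $\delta$- and $\rho$-contraction only substitute, select or discard existing subterms. Combined with the assumption that $e$ has fixed operators (so the single step $r$ neither erases nor duplicates any operator), this forces $|\occur{e'}{op}| = 1$ for every operator $op$ of $e$, and every operator of $e'$ arises in this way. Hence $op \mapsto \occur{e'}{op}$ is a bijection from the operators of $e$ onto those of $e'$, giving $\op{e'} = \op{e}$. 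The stable compatibility hypothesis says the image of a compatible operator is compatible, so $\comp{e'} \geq \comp{e}$ and therefore $\op{e'} - \comp{e'} \leq \op{e} - \comp{e}$: the first component never increases.

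If this inequality is strict, then $M(e') < M(e)$ lexicographically regardless of the fragment counts and we are done. The remaining, and main, case is $\op{e'} - \comp{e'} = \op{e} - \comp{e}$, i.e.\ $\comp{e'} = \comp{e}$: no incompatible operator has become compatible, and the bijection above now preserves compatibility status. Here I must show $\frag{e'} \leq \frag{e}$. To do so I would introduce the counting identity $\frag{e} = \comp{e} - E(e)$, where $E(e)$ is the number of \emph{compatible edges} of $e$, namely pairs of compatible operators $(op', op)$ with $op$ sitting directly in an input (children) slot of $op'$. This identity holds because, by Definition \ref{def:fragment}, the compatible operators connected through such input edges form a forest whose trees are precisely the fragments: a fragment with $k$ operators carries $k-1$ internal edges, so summing over fragments yields $\frag{e} = \comp{e} - E(e)$. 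Since $\comp{e'} = \comp{e}$, it then suffices to prove $E(e') \geq E(e)$.

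For $E(e') \geq E(e)$ I would show that each compatible edge of $e$ survives, through the bijection, as a compatible edge of $e'$; injectivity of the bijection gives the inequality. The key observation is that a compatible edge $(op', op)$ lies in the \emph{interior} of the subtree rooted at $op'$, between two operator nodes, and that no redex of Definition \ref{def:reduction-rules} has an operator as its root nor matches an ``operator directly below operator'' pattern, so $r$ can never sit on such an edge. A case analysis on the location of $r$ then finishes it: if $r$ lies strictly inside $op$'s subtree, inside a configuration, or inside a sibling input of $op'$, the edge is untouched; if $r$ lies outside the subtree rooted at $op'$, then — since fixed operators forbid erasing or duplicating $op'$ and $op$ — that whole subtree is left in place or relocated intact (for a $\beta$-redex whose argument carries the edge, the bound variable occurs exactly once, so the argument moves as a block), again preserving the edge. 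Stable compatibility keeps the two endpoints compatible, so the edge remains a compatible edge, and any new compatible edges formed at substitution interfaces only help; thus $E(e') \geq E(e)$ and $\frag{e'} = \comp{e'} - E(e') \leq \comp{e} - E(e) = \frag{e}$.

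The main obstacle is exactly this last step: rigorously certifying that a single contraction preserves every interior compatible edge. The delicate case is relocation by $\beta$-reduction, where one must invoke the fixed-operators hypothesis to exclude the duplication or erasure that substitution could otherwise inflict on the subtree carrying the edge, and then verify that the residual of the edge under the occurrence map is indeed the matching child slot. I expect the identifier bookkeeping of Definition \ref{def:occurences}, sharpened into the statement that two operators keep their relative parent--child position whenever neither is erased, consumed or duplicated, to discharge this; Lemma \ref{lem:erased-or-consumed} supplies the companion control on how subterms may vanish, confirming that in the equal-first-component case no compatible operator can detach from its compatible parent.
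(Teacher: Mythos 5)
Your proposal is correct, and its treatment of the first component (the operator bijection obtained from the fixed-operators hypothesis, stable compatibility giving $\comp{e'} \geq \comp{e}$, and the remark that no contraction can create operators) is exactly the paper's argument. Where you genuinely diverge is the fragment component. The paper argues directly on fragments: it asserts that $\frag{\cdot}$ can only increase by (i) duplicating a fragment, (ii) turning an incompatible operator compatible, or (iii) splitting a fragment, then rules out (i) by fixed operators, (ii) by contradiction with first-component equality, and (iii) by a case analysis on where the fragment sits relative to the $\beta$-redex (inside $e_1$, inside $e_2$, in the surrounding context, or with the redex inside a configuration). You instead prove the counting identity $\frag{e} = \comp{e} - E(e)$ --- fragments are precisely the trees of the forest whose nodes are compatible operators and whose edges are direct parent--child links --- and reduce everything to showing that each compatible edge survives the contraction, via a case analysis on the redex position that parallels the paper's splitting case (no redex has an operator at its root, so a redex can never sit on an edge; fixed operators excludes erasure and duplication of the endpoints). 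Your route buys rigor at precisely the point where the paper is weakest: the claim that the three listed mechanisms are the \emph{only} ways the fragment count can grow is asserted there without proof, whereas your identity derives the inequality from a local, checkable edge-preservation property; your observation that in the equality case the bijection preserves compatibility status in both directions is also sharper than the paper's corresponding contradiction argument. What the paper's version buys is brevity: it needs no auxiliary edge notion and no occurrence-level bookkeeping beyond operators themselves. Both proofs invoke the two hypotheses at the same places, and both work out only the $\beta$-case in detail, though your formulation handles the $\delta$/$\rho$ redexes somewhat more uniformly.
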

\vspace{-0.8em}

\begin{restatable}{lemma}{lemnormalmeasuremin}
\label{lem:normal-measure-min}
Let $e$ be a weakly-normalizing expression with fixed operators. For a database with stable compatibility, the normal form of $e$ has minimal measure.
\end{restatable}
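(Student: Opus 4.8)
The plan is to combine three facts: that $e$ has a \emph{unique} normal form reachable from every reduct of $e$, that the fixed operators property (Definition \ref{def:fixed-operators}) propagates along reductions, and that each reduction step cannot increase the measure (Lemma \ref{lem:reduce-decr-measure}). Writing $e_{nf}$ for a normal form of $e$ — which exists because $e$ is weakly normalizing — the goal reduces to showing $M(e_{nf}) \le M(e')$ for every $e'$ with $e \rightarrow^* e'$. This is exactly the statement that $e_{nf}$ has minimal measure among $\reductions{e}$, i.e.\ that $e_{nf} \in \minreductions{e}$.

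First I would show that the fixed operators property is preserved under reduction: if $e$ has fixed operators and $e \rightarrow^* e'$, then so does $e'$. Since no reduction rule creates operators, every operator occurring in $e'$ is an occurrence (in the sense of Definition \ref{def:occurences}) of some operator $op$ of $e$; moreover occurrences compose along $e \rightarrow^* e' \rightarrow^* e''$, so any descendant of such an operator seen from $e'$ is also a descendant of $op$ seen from $e$. Hence any reduction chain out of $e'$ that erased or duplicated an operator of $e'$ would, prefixed by $e \rightarrow^* e'$, yield a chain out of $e$ erasing or duplicating a descendant of $op$, contradicting the fixed operators property of $e$ (Definitions \ref{def:fixed-operators} and \ref{def:not-erased-consumed}). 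Consequently every reduct of $e$ has fixed operators, which is precisely what allows Lemma \ref{lem:reduce-decr-measure} to be invoked at every step of any reduction chain issued from $e$.

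Next I would use confluence to route every reduct through the normal form. Let $e'$ be arbitrary with $e \rightarrow^* e'$. Applying the Church--Rosser property (Theorem \ref{th:confluence}) to the two reductions $e \rightarrow^* e'$ and $e \rightarrow^* e_{nf}$ yields a common reduct $e''$ with $e' \rightarrow^* e''$ and $e_{nf} \rightarrow^* e''$; since $e_{nf}$ is a normal form it reduces only to itself, so $e'' = e_{nf}$ and therefore $e' \rightarrow^* e_{nf}$ (this also establishes uniqueness of the normal form). Now every expression along the chain $e' \rightarrow^* e_{nf}$ is a reduct of $e$, hence has fixed operators by the previous paragraph, so iterating Lemma \ref{lem:reduce-decr-measure} along this chain — using that the database has stable compatibility (Definition \ref{def:compat-stable}) and that $\le$ on the lexicographic measure $M$ is transitive — gives $M(e_{nf}) \le M(e')$. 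As $e'$ was an arbitrary element of $\reductions{e}$, the normal form $e_{nf}$ has minimal measure.

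I expect the main obstacle to be the preservation step, i.e.\ making rigorous the claim that the fixed operators property is closed under reduction. The delicate point is the bookkeeping of occurrences: one must argue, via the identifier propagation underlying Definition \ref{def:occurences}, that occurrences genuinely compose along $e \rightarrow^* e' \rightarrow^* e''$, so that an erasure or duplication of an operator as witnessed from $e'$ is faithfully reflected as an erasure or duplication of the corresponding operator as witnessed from $e$. Once this is settled, the remaining ingredients — the confluence routing through $e_{nf}$ and the iterated application of Lemma \ref{lem:reduce-decr-measure} — amount to a routine induction on the length of the reduction chain.
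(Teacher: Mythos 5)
Your proposal is correct and follows essentially the same route as the paper's proof: confluence together with the normality of the normal form forces every reduct $e'$ of $e$ to reduce to it, and iterating Lemma \ref{lem:reduce-decr-measure} along that chain gives the minimality of its measure (the paper phrases this as a proof by contradiction, but the argument is identical). The only substantive difference is that you explicitly establish that the fixed-operators property is preserved under reduction --- a hypothesis needed to invoke Lemma \ref{lem:reduce-decr-measure} at each step of the chain $e' \rightarrow^* e_N$ --- whereas the paper's proof applies that lemma along the chain without comment; your extra care here addresses a genuine (if minor) gap in the paper's own argument.
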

\vspace{-0.8em}

\begin{restatable}{lemma}{lemsamecompatoperators}
\label{lem:same-compat-operators}
Let $e$ be a weakly-normalizing expression with fixed operators, $e_{min}$ an expression in $\minreductions{e}$ and $e'$ an expression such that $e \rightarrow^* e'$ and $\op{e'} - \comp{e'} = \op{e_{min}} - \comp{e_{min}}$. For a database with stable compatibility, an operator is compatible in $e_{min}$ if and only if it is compatible in $e'$.
\end{restatable}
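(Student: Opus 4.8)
The plan is to exploit confluence to bring $e_{min}$ and $e'$ down to a common reduct and then combine the stable compatibility property with a counting argument powered by the fixed operators hypothesis. First I would record two structural facts about reduction. On one hand, none of the reduction rules of Definition \ref{def:reduction-rules} creates an operator, so every operator occurring in a reduct of $e$ is an occurrence (in the sense of Definition \ref{def:occurences}) of an operator of $e$. On the other hand, the fixed operators property is inherited by every reduct of $e$: if some reduct $e_i$ allowed one of its operators to be erased or duplicated, concatenating that reduction with $e \rightarrow^* e_i$ would erase or duplicate the corresponding operator of $e$, contradicting the hypothesis. Together these give, for each operator of $e$, exactly one occurrence in each of $e_{min}$, $e'$ and any further common reduct, hence a bijection between the operators of these expressions indexed by their identifiers, and in particular $\op{e_{min}} = \op{e'} = \op{e}$. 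Combined with the assumption $\op{e'} - \comp{e'} = \op{e_{min}} - \comp{e_{min}}$ this yields $\comp{e'} = \comp{e_{min}}$; I will call this common value $k$.

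Next I would pick a common reduct $e^{*}$ of $e_{min}$ and $e'$, which exists by the Church-Rosser property (Theorem \ref{th:confluence}). Since every intermediate expression along $e \rightarrow^* e^{*}$ inherits the fixed operators property, Lemma \ref{lem:reduce-decr-measure} applies at each single step and shows that $M$ is non-increasing along this chain. As $e^{*} \in \reductions{e}$ while $e_{min}$ realises the minimum of $M$ over $\reductions{e}$, this forces $M(e^{*}) = M(e_{min})$, so the first components agree: $\op{e^{*}} - \comp{e^{*}} = \op{e_{min}} - \comp{e_{min}}$. Using $\op{e^{*}} = \op{e_{min}}$ once more gives $\comp{e^{*}} = k$ as well.

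Then I would invoke stable compatibility on the two reductions $e_{min} \rightarrow^* e^{*}$ and $e' \rightarrow^* e^{*}$: any operator compatible in $e_{min}$ (resp. in $e'$) has, by Definition \ref{def:compat-stable}, every element of its image in $e^{*}$ compatible, and by the fixed operators property that image is a singleton, so its unique occurrence in $e^{*}$ is compatible. Writing $S_{min}$, $S'$ and $S^{*}$ for the sets of identifiers of compatible operators in $e_{min}$, $e'$ and $e^{*}$, this reads $S_{min} \subseteq S^{*}$ and $S' \subseteq S^{*}$. Since $|S_{min}| = |S'| = |S^{*}| = k$ by the preceding counts, both inclusions are equalities, hence $S_{min} = S^{*} = S'$. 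An operator is therefore compatible in $e_{min}$ if and only if it is compatible in $e'$, which is the claim.

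The step I expect to be the main obstacle is the bookkeeping with occurrences: making precise that operators are neither created nor merged by reduction, so that ``the same operator'' in $e_{min}$, $e'$ and $e^{*}$ is well defined through the identifier propagation of Definition \ref{def:occurences}, and that the fixed operators property upgrades equal cardinalities into genuine bijections — this is what licenses the ``subset of equal finite size must be equal'' conclusion. Once that is set up carefully, the remainder is a short assembly of confluence, the monotonicity of $M$ from Lemma \ref{lem:reduce-decr-measure}, and stable compatibility.
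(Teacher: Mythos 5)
Your proposal is correct and follows essentially the same route as the paper's proof: both use the fixed-operators bijection to get $\comp{e'} = \comp{e_{min}}$, confluence to obtain a common reduct, stable compatibility to push compatible operators into that reduct, and minimality of $M(e_{min})$ to close the argument. The only difference is presentational: you package both directions as a single cardinality argument (routing one inequality through Lemma~\ref{lem:reduce-decr-measure}), where the paper runs two separate proofs by contradiction.
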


\begin{theorem}
\label{th:completeness}
For databases with stable compatibility, the reduction strategy of Definition \ref{def:hminred} is complete on strongly-normalizing expressions with fixed operators. That is, for a database with stable compatibility, given a strongly-normalizing input expression $e$ with fixed operators, there exists a fuel value $\Phi$ such that $\hminred{\Phi}{e} \in \minreductions{e}$.
\begin{proof}
Remember from Definition \ref{def:exhaustive-red} that all expressions in $\minreductions{e}$ have same (minimal) measure. Using Lemma \ref{lem:normal-measure-min}, we know that the normal form $e_N$ of $e$ is in $\minreductions{e}$. Let $M_{min}$ be its measure. Consider now $e_h = \hminred{\Phi}{e}$. Using Theorem \ref{th:optimal}, we know that $e_h \in \minreductions{e}$ or $M(e_h) > M_{min}$ and we want to prove that the latter cannot happen. Suppose to the contrary that $M(e_h) > M_{min}$. Using the definition of $M$ (cf. Definition \ref{def:measure}), this means that one of the two following statements holds.
\begin{itemize}
	\item $\op{e_h} - \comp{e_h}$ is greater than the first component of $M_{min}$
	\item $\op{e_h} - \comp{e_h}$ is equal to the first component of $M_{min}$ and $\frag{e_h}$ is greater than the second component of $M_{min}$
\end{itemize}
We will prove that none of these cases can happen.
\begin{itemize}
	\item Suppose that $\op{e_h} - \comp{e_h}$ is greater than the first component of $M_{min}$. Since $e$ has the fixed operators property, there is a one-to-one correspondance between the operators of $e_N$ and $e_h$. Therefore, we know that $\comp{e_h} < \comp{e_N}$ and there exists an operator $op$ in $e$ such that $\occur{e_h}{op} = \{ op_h \}$, $\occur{e_N}{op} = \{ op_N \}$, $op_N$ is compatible and $op_h$ is not compatible. Let $c_h$ (resp. $c_N$) be the configuration of $op_h$ (resp. $op_N$). The question now is to understand how the first pass of the heuristic-based algorithm (cf. Definition \ref{def:pass1result}) could fail to make $op$ compatible. Remember Lemma \ref{lem:hinlinevar-hmakeredex-hcontractlam-terminate} telling that $\hinlinevar{.}{.}$ and $\hmakeredex{.}{.}$ either contract a redex or return $\hnone$, and keep in mind that such reductions maintain a single instance of $c_h$ in the reduced forms of $e_h$ (fixed operator hypothesis). Since $e$ is strongly-normalizing, this means that there is a fuel value $\Phi$ allowing the heuristic to make enough calls to $\hinlinevar{.}{.}$ on the free variables of $c_h$ in order to get to an expression $e_h' = C'[op_{c_h'}(\ldots)]$ such that (i) there is no free variable in $c_h'$ or (ii) calls to $\hinlinevar{e_h'}{.}$ return $\hnone$ for every free variable of $c_h'$. Continuing from this point, since $e$ is strongly-normalizing, $e_h'$ and $c_h'$ are also strongly normalizing. Thus, Theorem \ref{th:standardization} tells that there is a fuel value $\Phi$ allowing the heuristic to reduce all redexes of $c_h'$ and reach a normal form $c_h''$ and an expression $e_h'' = C'[op_{c_h''}(\ldots)]$. Since we supposed that the heuristic failed to make $op$ compatible, this means that $c_h''$ is different from $c_N$. Using Theorem \ref{th:confluence} (confluence), we know there is reduction $e_h'' \rightarrow^* e_N$. Since the redexes contracted in this chain cannot erase nor duplicate operators (fixed operator hypothesis), the reduction chain can only affect $c_h''$ in the following ways.
	\begin{itemize}
		\item Substitute free variables in $c_h''$. This cannot happen: by hypothesis, either (i) there is no free variable in $c_h'$ and therefore in $c_h''$ or (ii) calls to $\hinlinevar{e_h'}{.}$ return $\hnone$ for every free variable of $c_h'$ and using Lemma \ref{lem:inlinevar-makeredex-contractlam-complete}, such a variable cannot be consumed.
		\item Reduce redexes located inside $c_h''$. This cannot happen since $c_h''$ is in normal form.
		\item Leave $c_h''$ untouched. This leads to a contradiction: $c_h''$ is equal to $c_N$.
	\end{itemize}
Therefore, there is a fuel value such that the heuristic makes $op$ compatible. Now, taking the maximum of the required values of $\Phi$ to make each operator compatible, there exists a value of $\Phi$ such that $\op{e_h} - \comp{e_h}$ is equal to the first component of $M_{min}$.
	\item Suppose now that $\op{e_h} - \comp{e_h}$ is equal to the first component of $M_{min}$ and $\frag{e_h}$ is greater than the second component of $M_{min}$. Since $e$ has the fixed operators property, there is a one-to-one correspondance between the operators of $e_N$ and $e_h$. Using Lemma \ref{lem:same-compat-operators}, we know that there exists an operator $op$ in $e$ such that $\occur{e_h}{op} = \{ op_h \}$, $\occur{e_N}{op} = \{ op_N \}$, $op_N$ and $op_h$ are both compatible, $op_N$ has a compatible child operator $c_N$ and the child expression $c_h$ of $op_h$ is incompatible (i.e., not a compatible operator). The question now is to understand how the second pass of the heuristic-based algorithm (cf. Definition \ref{def:pass2result}) could fail to reduce $c_h$ to a compatible operator. Remember Lemma \ref{lem:hinlinevar-hmakeredex-hcontractlam-terminate} telling that $\hmakeredex{.}{.}$ either contracts a redex or returns $\hnone$, and keep in mind that such reductions maintain a single instance of $op_h$ in the reduced forms of $e_h$ (fixed operator hypothesis). Since $e$ is strongly-normalizing, this means that there is a fuel value $\Phi$ allowing the heuristic to make enough calls to $\hmakeredex{.}{.}$ on $c_h$ in order to get to an expression $e_h' = C'[op_{\ldots}(\ldots, c_h', \ldots)]$ such that calls to $\hmakeredex{c_h'}{.}$ returns $\hnone$. Since we supposed that the heuristic failed to reduce $c_h$ to a compatible operator, this means that the head of $c_h'$ is different from the head of $c_N$ (which is a compatible operator). Using Lemma \ref{lem:inlinevar-makeredex-contractlam-complete}, $c_h'$ cannot be consumed, and as the child expression of an operator that cannot be erased, $c_h'$ cannot be erased either. According to Lemma \ref{lem:erased-or-consumed} this contradicts the confluence theorem telling that $e_h' \rightarrow^* e_N$. Therefore, there is a fuel value such that the heuristic reduces $c_h$ to a compatible operator. Now, taking the maximum of the required values of $\Phi$ to reduce the children of all operators, there exists a value of $\Phi$ such that $\frag{e_h}$ is equal to the second component of $M_{min}$.
\end{itemize}
\vspace{-1.2em}
\end{proof}
\end{theorem}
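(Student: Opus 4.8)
The plan is to reduce the statement to a single measure comparison and then rule out, case by case, every way in which the heuristic could stop short of the minimum. Since $e$ is strongly-normalizing it is in particular weakly-normalizing, so Lemma \ref{lem:normal-measure-min} gives that its normal form $e_N$ has minimal measure, i.e.\ $e_N \in \minreductions{e}$; write $M_{min} = M(e_N)$. Because every expression the strategy inspects is reachable from $e$ by reduction, we have $e \rightarrow^* \hminred{\Phi}{e}$, so Theorem \ref{th:optimal} leaves only two possibilities: either $\hminred{\Phi}{e} \in \minreductions{e}$, in which case we are done, or $M(\hminred{\Phi}{e}) > M_{min}$. I would assume the latter for contradiction and, unfolding the lexicographic order of Definition \ref{def:measure}, split into (i) $\op{e_h} - \comp{e_h}$ strictly above the first component of $M_{min}$, and (ii) equality on the first component together with $\frag{e_h}$ strictly above the second, writing $e_h = \hminred{\Phi}{e}$.

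For case (i) I would show the first pass (Definition \ref{def:pass1result}) can be given enough fuel to make every operator compatible that is already compatible in $e_N$. The fixed-operators hypothesis yields a one-to-one correspondence between the operators of $e_N$ and those of $e_h$, so failure means some $op$ has a compatible image $op_N$ in $e_N$ but an incompatible image $op_h$ in $e_h$, with configurations $c_N$ and $c_h$. Using Lemma \ref{lem:hinlinevar-hmakeredex-hcontractlam-terminate} (each $\hinlinevar{.}{.}$ and $\hmakeredex{.}{.}$ call either contracts a redex or returns $\hnone$) together with the fact that fixed operators keep a single instance of the configuration alive, I would spend fuel first to exhaust inlining of the configuration's free variables, reaching a form in which either no free variable remains or every $\hinlinevar{.}{.}$ call returns $\hnone$, and then, invoking strong normalization and Theorem \ref{th:standardization}, to reduce the configuration to its normal form $c_h''$. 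If the heuristic still failed, $c_h'' \neq c_N$; but confluence (Theorem \ref{th:confluence}) supplies a reduction to $e_N$, and the only ways it could still alter $c_h''$ --- substituting a free variable, reducing an internal redex, or leaving it fixed --- are each impossible (the first by Lemma \ref{lem:inlinevar-makeredex-contractlam-complete}, the second because $c_h''$ is normal, the third because it forces $c_h'' = c_N$), a contradiction. Taking the maximum of the fuel needed over all operators then drives the first component down to that of $M_{min}$.

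For case (ii) the first components agree, so Lemma \ref{lem:same-compat-operators} tells us exactly the same operators are compatible in $e_N$ and $e_h$; an excess of fragments must then come from an operator $op$ whose image $op_N$ has a compatible child operator $c_N$ while the corresponding child $c_h$ of $op_h$ is incompatible. Here I would argue about the second pass (Definition \ref{def:pass2result}): spend fuel calling $\hmakeredex{.}{.}$ on $c_h$ until such calls return $\hnone$. If the head of the resulting child still differs from the (operator) head of $c_N$, then by Lemma \ref{lem:inlinevar-makeredex-contractlam-complete} that child cannot be consumed, and as the child of a non-erasable operator it cannot be erased either; Lemma \ref{lem:erased-or-consumed} then contradicts the confluence reduction to $e_N$. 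Again the maximum over all operators supplies a fuel value forcing $\frag{e_h}$ down to the second component of $M_{min}$, completing the contradiction.

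The delicate part will be case (i): one must keep the configuration a single, well-defined object throughout a long sequence of reductions (this is where the fixed-operators invariant is load-bearing) and, above all, handle free variables correctly --- the argument that the heuristic's configuration normal form must literally coincide with $c_N$ hinges on first exhausting inlining and then appealing to Lemma \ref{lem:inlinevar-makeredex-contractlam-complete} to guarantee that no surviving free variable can be substituted along the confluence reduction to $e_N$. Coordinating the two passes (the second runs on the output of the first) and threading a single fuel value large enough for both is comparatively routine, handled by taking maxima.
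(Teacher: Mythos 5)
Your proposal is correct and follows essentially the same route as the paper's own proof: the same reduction to a measure comparison via Lemma \ref{lem:normal-measure-min} and Theorem \ref{th:optimal}, the same two-case split on the lexicographic measure, and the same per-case contradictions using Lemmas \ref{lem:hinlinevar-hmakeredex-hcontractlam-terminate}, \ref{lem:inlinevar-makeredex-contractlam-complete}, \ref{lem:erased-or-consumed}, \ref{lem:same-compat-operators}, standardization, and confluence, concluded by taking maxima of fuel values over operators. No substantive differences to report.
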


We also conjecture that the result of Theorem \ref{th:completeness} still holds for weakly-normalizing expressions.

\begin{restatable}{conjecture}{conjcompleteness}
\label{conj:completeness}
For databases with stable compatibility, the reduction strategy of Definition \ref{def:hminred} is complete on weakly-normalizing expressions with fixed operators. That is, for a database with stable compatibility, given a weakly-normalizing input expression $e$ with fixed operators, there exists a fuel value $\Phi$ such that $\hminred{\Phi}{e} \in \minreductions{e}$.
\end{restatable}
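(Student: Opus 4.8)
The plan is to mirror the proof of Theorem~\ref{th:completeness} as closely as possible, since almost all of its scaffolding is insensitive to the strength of the normalization hypothesis. Lemma~\ref{lem:normal-measure-min} and Lemma~\ref{lem:same-compat-operators} are already stated for weakly-normalizing expressions, so the normal form $e_N$ of $e$ still belongs to $\minreductions{e}$ and still realises the minimal measure $M_{min}$; Theorem~\ref{th:optimal} and Theorem~\ref{th:termination} hold for arbitrary expressions; and Lemma~\ref{lem:reduce-decr-measure} guarantees that the measure is non-increasing along any reduction of a fixed-operator expression. I would therefore set $e_h = \hminred{\Phi}{e}$, assume $M(e_h) > M_{min}$, and split into the same two cases (first component of the measure too large, then second component too large). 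The operator-by-operator correspondence between $e_N$ and $e_h$, and the closing contradictions drawn through Theorem~\ref{th:confluence}, transfer verbatim. The \emph{only} places that genuinely use strong normalization are the three appeals of the form \emph{``since $e$ is strongly-normalizing, a large enough $\Phi$ lets the heuristic reach $e_h'$ and reduce $c_h'$ to a normal form''}, and these are what must be re-established.

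First I would treat the step that reduces a configuration (resp. a child) to its normal form. There, strong normalization is used only to assert that the isolated subexpression $c_h'$ is itself normalizing, and I would replace this by a direct proof that $c_h'$ is merely \emph{weakly} normalizing. By Theorem~\ref{th:confluence} we have $e_h' \rightarrow^* e_N$, and by the fixed-operators hypothesis the tracked operator $op$ is neither erased nor duplicated, so along this chain its configuration can change only by internal contractions or by substitution of its free variables. Since $e_h'$ is the inlining-exhausted state, either $c_h'$ is closed or $\hinlinevar{e_h'}{v} = \hnone$ for every free variable $v$ of $c_h'$; in the latter case Lemma~\ref{lem:inlinevar-makeredex-contractlam-complete} guarantees that none of these variables can be consumed, hence none is ever substituted. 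Consequently the configuration reduces purely internally to $c_N$, which is in normal form, exhibiting a chain $c_h' \rightarrow^* c_N$ and proving $c_h'$ weakly normalizing. Standardization (Theorem~\ref{th:standardization}) then shows that the leftmost-outermost strategy used inside configurations reaches that normal form in finitely many steps, so a finite fuel suffices. The second pass is analogous, with \emph{head} normal forms in place of full normal forms, since the child $c_h'$ only needs to head-reduce to the compatible operator heading $c_N$.

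The main obstacle is the remaining use of strong normalization: the claim that the heuristic actually \emph{reaches} the inlining-exhausted, head-exposed state $e_h'$ within a bounded number of steps. Under strong normalization this is immediate because \emph{every} reduction sequence terminates; under weak normalization it is not, because the heuristic follows the fixed, head-directed strategy encoded by $\hinlinevar{.}{.}$, $\hmakeredex{.}{.}$ and $\hcontractlam{.}{.}$, and an arbitrary strategy may diverge on a weakly-normalizing term. Unwinding their definitions, these operators perform exactly a head reduction: $\hmakeredex{.}{.}$ drills into the function or scrutinee position of applications, destructors and conditionals and contracts the blocking redex, while $\hinlinevar{.}{.}$ does the same to bring a binding $\elam$ into redex position. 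I would therefore prove a \emph{head-normalization} lemma for QIR, namely that this head-reduction strategy terminates precisely on expressions admitting a head normal form (the classical solvability result for the pure $\lambda$-calculus, extended to the $\delta$- and $\rho$-rules), and combine it with the non-consumption projection of the previous paragraph to show that each subterm the heuristic targets does reduce, along $e_h' \rightarrow^* e_N$, to something with the required head normal form; this bounds the number of guided steps and yields the sought $\Phi$. I expect the delicate point to be precisely this identification: rigorously matching the bespoke recursion of $\hmakeredex{.}{.}$ and $\hcontractlam{.}{.}$ (whose termination \emph{per call} is controlled by $\travpos{.}{.}$, but whose \emph{iteration} by the heuristic is a priori only fuel-bounded) to standard head reduction, and ruling out the scenario in which the head-directed strategy loops forever trying to expose a binder that sits inside an unsolvable subterm which the genuine normalizing reduction would have discarded. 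Making this airtight, together with lifting solvability theory to the $\delta/\rho$ fragment, is what separates the conjecture from Theorem~\ref{th:completeness}.
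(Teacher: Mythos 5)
Your proposal cannot be judged against a finished proof, because the paper has none: the statement is deliberately left as a \emph{conjecture}, and the discussion following it in Appendix~\ref{ap:proofs} gives only ``the current state of the proof.'' That discussion is essentially your own analysis. The paper notes that strong normalization enters the proof of Theorem~\ref{th:completeness} in exactly three places: (i) termination of the iterated calls to $\hinlinevar{.}{.}$ on the free variables of a configuration, (ii) the appeal to Standardization (Theorem~\ref{th:standardization}) inside a configuration, and (iii) termination of the iterated calls to $\hmakeredex{.}{.}$ on an operator child. It dismisses (ii) because Standardization needs only weak normalization --- your non-consumption argument showing that the configuration itself reduces to $c_N$, hence is weakly normalizing, is a useful elaboration of a point the paper glosses over --- and it resolves (iii) by observing that $\hmakeredex{.}{.}$ contracts redexes in call-by-name order and that none of these redexes can be erased under the fixed-operators hypothesis, so Standardization applies; this is close in spirit to your head-normal-form variant for the second pass. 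What remains open, for the paper exactly as for you, is (i): the paper states that $\hinlinevar{.}{.}$ reduces redexes ``in an order that cannot be compared with the call-by-name reduction strategy,'' so Standardization is unavailable there, and it offers only the informal intuition that infinite inlining would require a fixpoint, which fixed operators forbid.

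Where you genuinely diverge is the proposed attack on this residual gap: you suggest developing a solvability/head-normalization theory for QIR (lifted to the $\delta$/$\rho$ rules) and identifying the recursion of $\hmakeredex{.}{.}$, $\hcontractlam{.}{.}$ and $\hinlinevar{.}{.}$ with head reduction, whereas the paper's intuition goes through fixpoints. Be aware that your passing claim that these operators ``perform exactly a head reduction'' is contradicted by the paper's own remark quoted above: $\hinlinevar{.}{.}$ starts from a variable occurrence and works upward to its binder, which is precisely why its contraction order is incomparable with call-by-name --- and your closing caveat concedes exactly this. So your proposal is faithful to the paper's state of knowledge, correctly reduces the conjecture to the same core difficulty, and is equally honest that this difficulty is unresolved; neither you nor the paper closes it.
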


See Appendix \ref{ap:proofs} for an intuition of why this holds and the current state of the proof. We will now explain why none of the remaining hypotheses can be removed.
\vspace{-0.7em}

\paragraph{Stable compatiblity} Consider a database for which the $\oscan$ operator is compatible if its configuration has more than two free variables. Obviously, it would not have the stable compatibility property, since inlining the definition of these variables could reduce the number of free variables in the configuration. Take now expression $e$ from Example \ref{ex:stable-compat}. Since the heuristic tries to inline variables before reducing redexes in the configurations, it will never consider expression $e'$, which is the only element of $\minreductions{e}$.
	
\begin{example}
\label{ex:stable-compat}
\multistepreductionexample{
  $(\elam t.\ \oscan_{(\elam x.\ x = x)\ t}())\ 1$
}{
  $(\elam t.\ \oscan_{\elam x.\ t = t}())\ 1$
}
\end{example}
	
For the next counterexamples, we will suppose a simplistic database capabilities description, for which all operators are compatible as long as there is no free variable in their configuration (such a database would have the stable compatibility property).
\vspace{-0.7em}

\paragraph{Non-normalizing expressions} Consider expression $e$ from Example \ref{ex:normalizing}. Obviously, it is non-normalizing because of the $\Omega = (\elam x.\ x\ x)\ (\elam x.\ x\ x)$ in the configuration. Since the heuristic applies a call-by-name reduction strategy on the configurations once all free variables are inlined, it will consume all the fuel on $\Omega$ and never consider $e'$, which is the only element of $\minreductions{e}$.

\begin{example}
\label{ex:normalizing}
\multistepreductionexample{
  $\oscan_{((\elam x.\ x\ x)\ (\elam x.\ x\ x))\ ((\elam x.\ 1)\ y)}()$
}{
  $\oscan_{((\elam x.\ x\ x)\ (\elam x.\ x\ x))\ 1}()$
}
\end{example}

\paragraph{Operator erasure} Consider expression $e$ from Example \ref{ex:erase-op}. Obviously, the $\oscan_x()$ operator can be erased. Since the heuristic tries to inline variables in configurations, reduce configurations then regroup fragments, it will never consider $e'$, which is the only element of $\minreductions{e}$.
	
\begin{example}
\label{ex:erase-op}
\multistepreductionexample{
  $\eif\ \efalse\ \ethen\ \oscan_{x}()\ \eelse\ \oscan_{\edataref{table}}()$
}{
  $\oscan_{\edataref{table}}()$
}
\end{example}

\paragraph{Operator duplication} Consider expression $e$ from Example \ref{ex:duplicate-op}. Obviously, the $\oscan_{z}$ operator can be duplicated. The heuristic will try to inline $y$ in the configuration of the two $\ojoin$ operators, which requires to inline $x$ first. Since this two-step reduction decreases the measure and because the heuristic chooses the leftmost leaf of the configuration search space, $e'$ will never be considered although it is the only element of $\minreductions{e}$.

\begin{example}
\label{ex:duplicate-op}
\multistepreductionexample{
  $(\elam x.\ \elam y.\ \ojoin_{\eif\ \efalse\ \ethen\ y\ \eelse\ \etrue}($\\
	\hspace*{0.5em}$\oscan_{\edataref{table}}(),\ojoin_{\eif\ \efalse\ \ethen\ y\ \eelse\ \etrue}(x,x)$\\
	$))\ \oscan_z()\ \efalse$
}{
  $(\elam x.\ \elam y.\ \ojoin_{\etrue}($\\
	\hspace*{0.5em}$\oscan_{\edataref{table}}(),\ojoin_{\etrue}(x,x)$\\
	$))\ \oscan_z()\ \efalse$
}
\end{example}

\noindent\begin{minipage}[l]{0.38\linewidth}
\centering
\captionsetup{type=figure}
\includegraphics[width=\linewidth]{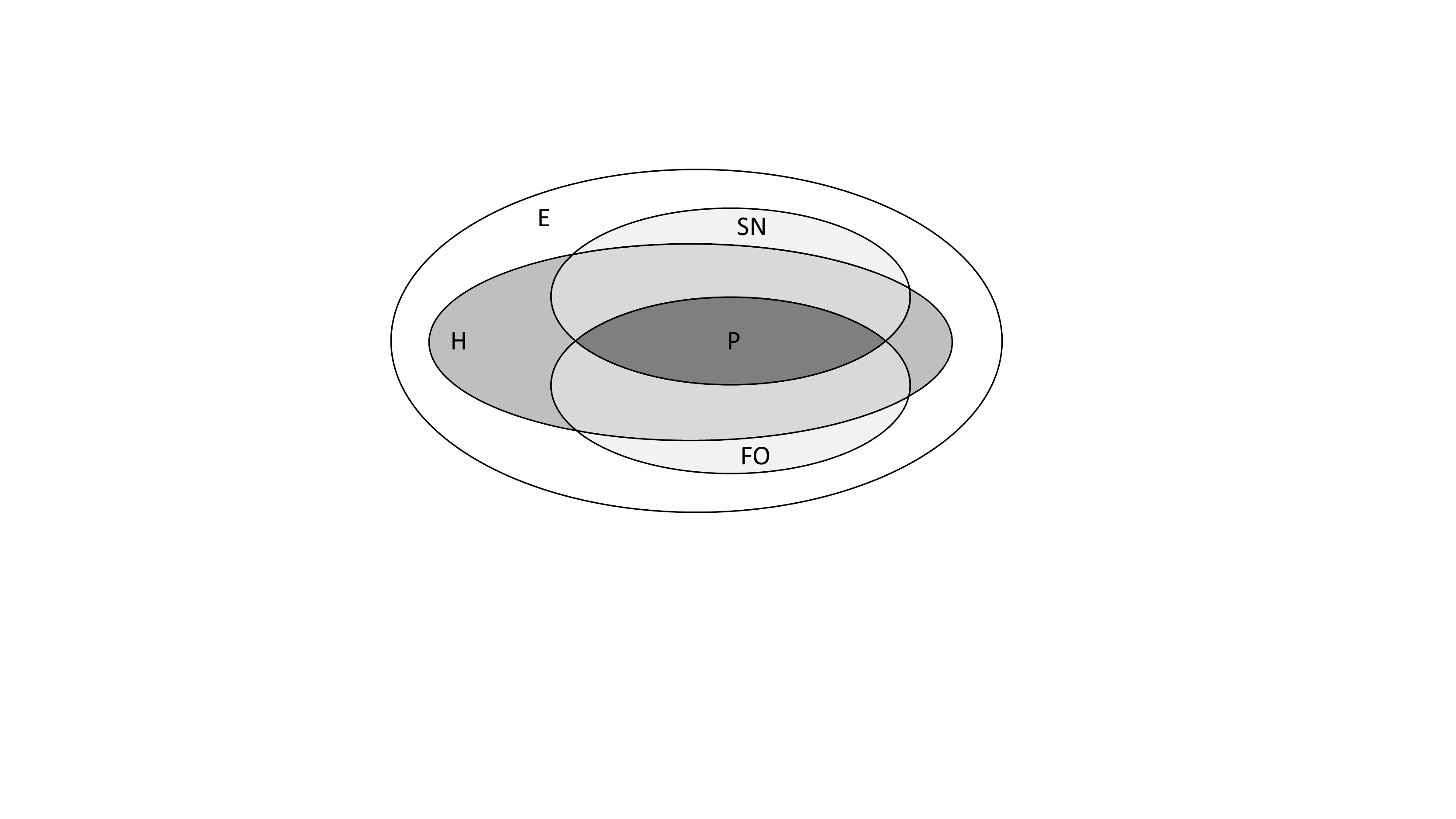}
\captionof{figure}{Situation summary}
\label{fig:venn}
\end{minipage}%
\hspace{0.02\linewidth}%
\begin{minipage}[c]{0.6\textwidth}
Figure \ref{fig:venn} sums up the situation. In this Venn diagram, $E$ stands for the set of all QIR expressions, $SN$ for the set of strongly-normalizing QIR expressions, $FO$ for the set of QIR expression with fixed operators, $H$ for the set of QIR expressions on which the heuristic returns an optimal result and $P$ for the set of QIR expressions for which I proved that the heuristic returns such a result. As I will show in Section \ref{sec:concrete-examples}, $H$ is larger than $P$: in fact the heuristic returns optimal results on all our real-world use cases.
\end{minipage}

\vspace{1em}
In Appendix \ref{ap:fuel}, I discuss how the fuel value $\Phi$ can be estimated by analyzing the QIR code. Though, an implementation of a module doing such an estimation is future work, and for now, $\Phi$ has to be set by the developer.
\section{Experiments}
\label{sec:concrete-examples}

In this section, I present real-world use cases and show how the QIR evaluation strategies discussed in Section \ref{sec:qir-evaluation} perform on these examples. In the following experiments, the target database language is SQL and for conciseness, I never show the application code but rather quickly describe the applications themselves.\bigskip

The provided timings have been measured on my personal laptop (Core i5 M450 @2.4GHz, 8Go RAM). Although my setup is not precise enough for fine-grained performance comparison, it enables an order of magnitude analysis and delivers interesting results.

\subsection{Example of code factorization: analytics queries}
\label{sec:ex:analytics}

Consider the following analytics query written in SQL and freely inspired by the \emph{TPC-H} benchmark.

\begin{lstlisting}[language=sql,basicstyle=\small]
SELECT
  l_returnflag AS return_flag, l_linestatus AS line_status,
  SUM(l_extended_price) AS sum_base_price,
  SUM(l_extended_price * (1 - l_discount)) AS sum_disc_price,
  SUM(l_extended_price * (1 - l_discount) * (1 + l_tax)) AS sum_charge,
  SUM(l_extended_price) * 0.75 AS sum_real_cost,
  SUM(l_extended_price) * 0.25 AS sum_margin,
  AVG(l_extended_price) AS avg_base_price,
  AVG(l_extended_price * (1 - l_discount)) AS avg_disc_price,
  AVG(l_extended_price * (1 - l_discount) * (1 + l_tax)) AS avg_charge,
  AVG(l_extended_price) * 0.75 AS avg_real_cost,
  AVG(l_extended_price) * 0.25 AS avg_margin
FROM db.lineitem
GROUP BY l_return_flag, l_linestatus 
ORDER BY l_return_flag, l_linestatus
\end{lstlisting}

Notice in particular the many common expressions that are used to compose the aggregation functions of the query. To factorize code and increase the maintainability of the code, the application developer might want to define these expressions only once and reuse them throughout the query. For better readability, he might also store subexpressions in variables instead of writing the query in one block. The direct translation of this kind of code to QIR would lead to the following query representation (remember that in QIR, as opposed to SQL's strange syntax, aggregations functions belong to the configuration of the $\ogroup$ operator).

\begin{lstlisting}[language=C,basicstyle=\small]
/* Constructs the list of projected attributes */
$\elet\ project\_list = \elam tup.$
  $\elet\ real\_cost = 0.75\ \ein$
  $\elet\ margin = 0.25\ \ein$
  $\elet\ return\_flag = \etdestr\ tup\ \estr{l_returnflag}\ \ein$
  $\elet\ line\_status = \etdestr\ tup\ \estr{l_linestatus}\ \ein$
  $\elet\ sum\_base\_price = \etdestr\ tup\ \estr{sum\_base\_price}\ \ein$
  $\elet\ sum\_disc\_price = \etdestr\ tup\ \estr{sum\_disc\_price}\ \ein$
  $\elet\ sum\_charge = \etdestr\ tup\ \estr{sum\_charge}\ \ein$
  $\elet\ avg\_base\_price = \etdestr\ tup\ \estr{avg\_base\_price}\ \ein$
  $\elet\ avg\_disc\_price = \etdestr\ tup\ \estr{avg\_disc\_price}\ \ein$
  $\elet\ avg\_charge = \etdestr\ tup\ \estr{avg\_charge}\ \ein$
  $\etcons\ \estr{return\_flag}\ return\_flag\ ($
  $\etcons\ \estr{line\_status}\ line\_status\ ($
  $\etcons\ \estr{sum\_base\_price}\ sum\_base\_price\ ($
  $\etcons\ \estr{sum\_disc\_price}\ sum\_disc\_price\ ($
  $\etcons\ \estr{sum\_charge}\ sum\_charge\ ($
  $\etcons\ \estr{sum\_real\_cost}\ (sum\_base\_price * real\_cost)\ ($
  $\etcons\ \estr{sum\_margin}\ (sum\_base\_price * margin)\ ($
  $\etcons\ \estr{avg\_base\_price}\ avg\_base\_price\ ($
  $\etcons\ \estr{avg\_disc\_price}\ avg\_disc\_price\ ($
  $\etcons\ \estr{avg\_charge}\ avg\_charge\ ($
  $\etcons\ \estr{avg\_real\_cost}\ (avg\_base\_price * real\_cost)\ ($
  $\etcons\ \estr{avg\_margin}\ (avg\_base\_price * margin)\ ($
  $\etnil))))))))))))\ \ein$

/* Constructs the list of grouping/sorting attributes */
$\elet\ group\_sort\_attr = \elam tup.$
  $\elcons\ (\etdestr\ tup\ \estr{l\_returnflag})\ ($
  $\elcons\ (\etdestr\ tup\ \estr{l\_linestatus})\ ($
  $\etnil))\ \ein$

/* Constructs the aggregate functions */
$\elet\ group\_agg = \elam tup.$
  $\elet\ extended\_price = \etdestr\ tup\ \estr{l\_extended\_price}\ \ein$
  $\elet\ discount = \etdestr\ tup\ \estr{l\_discount}\ \ein$
  $\elet\ tax = \etdestr\ tup\ \estr{l\_tax}\ \ein$
  $\elet\ disc\_price = extended\_price * (1 - discount)\ \ein$
  $\elet\ charge = disc\_price * (1 + tax)\ \ein$
  $\etcons\ \estr{sum\_base\_price}\ (\esum\ extended\_price)\ ($
  $\etcons\ \estr{sum\_disc\_price}\ (\esum\ disc\_price)\ ($
  $\etcons\ \estr{sum\_charge}\ (\esum\ charge)\ ($
  $\etcons\ \estr{avg\_base\_price}\ (\eavg\ extended\_price)\ ($
  $\etcons\ \estr{avg\_disc\_price}\ (\eavg\ disc\_price)\ ($
  $\etcons\ \estr{avg\_charge}\ (\eavg\ charge)\ ($
  $\etnil))))))\ \ein$

/* Main query */
$\oproject_{project\_list}(\osort_{group\_sort\_attr}(\ogroup_{group\_sort\_attr, group\_agg}(\oscan_{\edataref{lineitem}}())))$
\end{lstlisting}

On this example, my implementation of the exhaustive reduction strategy does not terminate in reasonable time due to a combinatorial explosion. Nevertheless, it eventually finds the reduced expression with minimal measure, i.e the original expression with all definitions and common expressions inlined. On the other hand, my implementation of the heuristic-based strategy quickly finds the same result for a fuel value $\Phi \geq 11$ (timings are given in Table \ref{table:timings}). Furthermore, the reduced expression is a direct translation of the SQL query above, which represents an argument in favor of the measure described in Section \ref{sec:measure} and used by both reduction strategies.

\subsection{Example of fragment grouping: dynamic queries}
\label{sec:ex:dynamic}

Consider now a simple website on which users can read small ads from people selling furniture, cars, etc. and in particular a page to browse through the offers. This page would consist mainly in a form with (i) a dropdown menu to optionally sort ads by date or price, (ii) a set of checkboxes to filter by category and (iii) an integer field with a default value to specify the number of results to display on the page. The corresponding query, fetching the results, would have to be built dynamically depending on the presence of filters and ordering. The following QIR code is a possible query representation of this logic, in which we assume that variables \texttt{is\_sorted}, \texttt{sort\_attr}, \texttt{cat\_list} and \texttt{limit} are provided by the application context and encode the presence of a sorting attribute, the list of selected categories and the number of results to display.

\begin{lstlisting}[language=C,basicstyle=\small]
/* Recursively constructs a list of OR of the selected categories */
$\elet\ make\_cat\_filter = \elam cat\_list.\ \elam tup.$
  $\elet\ \erec\ aux = \elam filter.\ \elam cat\_list.$
    $\eldestr\ cat\_list\ filter\ (\elam hd.\ \elam tl.\ aux\ ((\etdestr\ tup\ \estr{category} = hd)\ \eor\ filter)\ tl)\ \ein$
  $\elet\ aux2 = \elam cat\_list.$
    $\eldestr\ cat\_list\ \elnil\ (\elam hd.\ \elam tl.\ aux\ (\etdestr\ tup\ \estr{category} = hd)\ tl)\ \ein$
  $aux2\ cat\_list\ \ein$

/* Constructs the ordering attributes */
$\elet\ make\_order = \elam attr.\ \elam tup.$
  $\eif\ attr = \estr{price}\ \ethen\ \elcons\ (\etdestr\ tup\ \estr{price})\ \elnil$
  $\eelse\ \eif\ attr = \estr{date}\ \ethen\ \elcons\ (\etdestr\ tup\ \estr{timestamp})\ \elnil$
  $\eelse\ \elnil\ \ein$

/* Constructs the list of projected attributes */
$\elet\ project\_list = \elam tup.$
  $\etcons\ \estr{title}\ (\etdestr\ tup\ \estr{title})\ ($
  $\etcons\ \estr{description}\ (\etdestr\ tup\ \estr{description})\ ($
  $\etnil))\ \ein$

/* Base table */
$\elet\ ads = \oscan_{\edataref{ads}}()\ \ein$

/* After category filters */
$\elet\ ads\_filtered = \eldestr\ cat\_list\ ads\ (\elam hd.\ \elam tl.\ \oselect_{make\_cat\_filter\ cat\_list}(ads))\ \ein$

/* After (optional) ordering */
$\elet\ ads\_ordered = \eif\ is\_sorted\ \ethen\ \osort_{make\_order\ sort\_attr}(ads\_filtered)\ \eelse\ ads\_filtered\ \ein$

/* Main query */
$\oproject_{project\_list}(\otopk_{limit}(ads\_ordered))$
\end{lstlisting}

On this example, my implementation of the exhaustive reduction strategy does not terminate, since one can obtain infinitely many distinct reductions of the original expression by unfolding the recursive call of \texttt{aux} in \texttt{make\_cat\_filter}. Conversely, my implementation of the heuristic-based strategy quickly finds the result expression with minimal measure (timings are given in Table \ref{table:timings}).\bigskip

For instance, with \texttt{cat_list} set to $\elnil$, \texttt{is_sorted} set to \texttt{false} and \texttt{limit} set to \texttt{20} the result found with $\Phi \geq 15$ is

\noindent $\oproject_{\elam t.\ \etcons\ \estr{title}\ (\etdestr\ t\ \estr{title})\ (\etcons\ \estr{description}\ (\etdestr\ t\ \estr{description})\ \etnil)}($\\
\indent $\otopk_{20}(\oscan_{\edataref{ads}}()))$\bigskip

And with \texttt{cat_list} set to $\elcons\ \estr{housing}\ (\elcons\ \estr{cars}\ \elnil)$, \texttt{is_sorted} set to \texttt{true}, \texttt{sort\_attr} set to \texttt{date} and \texttt{limit} set to \texttt{30} the result found with $\Phi \geq 24$ is

\noindent $\oproject_{\elam t.\ \etcons\ \estr{title}\ (\etdestr\ t\ \estr{title})\ (\etcons\ \estr{description}\ (\etdestr\ t\ \estr{description})\ \etnil)}($\\
\indent $\otopk_{30}(\osort_{\elam t.\ \elcons\ (\etdestr\ t\ \estr{timestamp})\ \elnil}($\\
\indent \indent $\oselect_{\elam t.\ \etdestr\ t\ \estr{category} = \estr{cars}\ \eor\ \etdestr\ t\ \estr{category} = \estr{housing}}(\oscan_{\edataref{ads}}()))))$\bigskip

Again, the result expression corresponds clearly to a SQL query, and this is a second example of the effectiveness of the measure described in Section \ref{sec:measure}.

\subsection{Example of incompatibility: caching}
\label{sec:ex:caching}

Following on the previous example, consider a page on which an admin can detect if an unexperienced user has published the same ad twice. Assume there is a function \texttt{unexperienced(user\_id)} outside the query boundary, telling if a user is unexperienced. This function would be translated into a Truffle dependency, and represented in the QIR code by a Truffle reference \texttt{\etruffle{0}}. The following QIR code could correspond to the query used by the page.

\begin{lstlisting}[language=C,basicstyle=\small]
/* Constructs the list of projected attributes */
$\elet\ project\_list = \elam tup.\ \etcons\ \estr{user\_id}\ (\etdestr\ tup\ \estr{user\_id})\ \etnil\ \ein$

/* Constructs the join condition */
$\elet\ join\_cond = \elam tup1.\ \elam tup2.$
  $(\etdestr\ tup1\ \estr{title} = \etdestr\ tup2\ \estr{title})\ \eand$
  $\enot\ (\etdestr\ tup1\ \estr{ad\_id} = \etdestr\ tup2\ \estr{ad\_id})\ \ein$

/* Unexperienced users */
$\elet\ ads\_unex\_users = \oselect_{\elam tup.\ \etruffle{0}\ (\etdestr\ tup\ \estr{user_id})}(\oscan_{\edataref{ads}}())\ \ein$

/* Main query */
$\oproject_{project\_list}(\ojoin_{join\_cond}(ads\_unex\_users,ads\_unex\_users))$
\end{lstlisting}

This is one example of situation where reducing a redex (e.g., inlining \texttt{ads\_unex\_users}) is not beneficial for the database. In this case, both the implementations of the exhaustive strategy and of the heuristic-based strategy (for $\Phi \geq 1$) quickly return the correct answer (timings are given in Table \ref{table:timings}), that is

\noindent $\elet\ ads\_from\_unex\_users = \oselect_{\elam tup.\ \etruffle{0}\ (\etdestr\ tup\ \estr{user_id})}(\oscan_{\edataref{ads}}())\ \ein$\\
\noindent $\oproject_{\etcons\ \estr{user\_id}\ (\etdestr\ tup\ \estr{user\_id})\ \etnil}($\\
\indent $\ojoin_{(\etdestr\ tup1\ \estr{title} = \etdestr\ tup2\ \estr{title})\ \eand\ \enot\ (\etdestr\ tup1\ \estr{ad\_id} = \etdestr\ tup2\ \estr{ad\_id})}($\\
\indent \indent $ads\_from\_unex\_users, ads\_from\_unex\_users))$\bigskip

This corresponds to a computation where the Truffle runtime embedded in the database evaluates once \texttt{ads\_unex\_users}, stores the result on the database storage then passes the main query to the database engine. Once again this is an example of the effectiveness of the measure described in Section \ref{sec:measure}.\bigskip

\begin{table}[ht]
\centering
\begin{tabular}{ | l | c | c | c | c | c | c | c | c |} \hline
\textbf{Experiment}     & \multicolumn{2}{c|}{Section \ref{sec:ex:analytics}}
                        & \multicolumn{2}{c|}{Section \ref{sec:ex:dynamic} (1)}
												& \multicolumn{2}{c|}{Section \ref{sec:ex:dynamic} (2)}
												& \multicolumn{2}{c|}{Section \ref{sec:ex:caching}}\\ \hline
\textbf{Red. strategy}  & Exh. & Heur. & Exh. & Heur. & Exh. & Heur. & Exh. & Heur.\\ \hline\hline
\textbf{Timing}         & 56.63s     & 6ms       & -          & 7ms       & -          & 7ms       & 5ms        & 4ms      \\ \hline
\textbf{Optimal result} & \cmark     & \cmark    & -          & \cmark    & -          & \cmark    & \cmark     & \cmark   \\ \hline
\end{tabular}
\caption{Timings of experiments from Sections \ref{sec:ex:analytics}, \ref{sec:ex:dynamic} and \ref{sec:ex:caching}}
\label{table:timings}
\end{table}

\vspace{-1.4em}
\section{Related work}
\label{sec:related-work}

As stated in the introduction, traditional data-oriented application frameworks, such as PHP with the MySQL extension or Java with the JDBC driver, construct queries in the application code using string concatenation or statements, with the aforementioned impedence mismatch issue. Object-relational mappings (ORM), such as \emph{criteria queries} in Hibernate and ActiveRecord in Ruby on Rails, offer a syntactic abstraction to represent querying primitives with constructs of the application language, and provide features to automatically handle translations from the application language data model to the database data model an vice versa.\bigskip

Although they solve part of the problem, these solutions restrict the application code sent to the database to the provided querying primitives. This results in numerous application-database roundtrips when a function of the application code (a UDF) is called from inside a query, and forms a well-known performance bottleneck\cite{ferry}\cite{data-access}. A recent contribution\cite{qbs} focuses on analyzing the application code surrounding querying primitives to translate it into the database language when possible, and inline it in the query. But this detection is based on a list of handcrafted patterns in which the choices of application language and database are hardcoded assumptions.\bigskip

More advanced solutions include the Forward project\cite{forward} (on which I was working last year) and Microsoft LINQ\cite{linq}. Forward offers as application language SQL++, a declarative extension of SQL tailored for web programming, and evaluates with an in-memory query processor language features not supported by the database. Although the similarities between database and application languages in terms of data models and querying primitives enable a seamless integration of queries in the application language, this framework fails to push to the database application code snippets containing unsupported features, and therefore does not fully solves the UDF performance problem. Compared to this contribution, the QIR framework offers the choice of the application language, a clean isolation of this application language from the rest of the framework, formal semantics for operators and expressions, allowing a precise description of how queries integrated in the application code are detected, translated and optimized, as well as the possibility to evaluate any application code in the database, regardless of its capabilities.\bigskip

LINQ integrates querying primitives for Microsoft SQL server in the application languages part of the .NET framework (e.g., C\#, F\#, Visual Basic). Similarly to ORMs, it provides automatic mechanisms to translate from the application language data model to the database data model an the other way around. Since LINQ has no released specification, we rely on the formal description and optimality proofs of T-LINQ\cite{t-linq} to compare our contribution. This approach adds to the application language a small language of expressions similar to QIR's (variables, lambdas, applications, list and tuples) as well as a quotation/antiquotation system (to define the boundary between application code to evaluate in the application runtime and in the database query processor). Querying primitives correspond to patterns in application code (e.g., a $\oscan$ corresponds to a \texttt{for} loop and a $\oselect$ to an \texttt{if} conditional. The translation to SQL includes a rewriting phase corresponding to successive iterations, on the application code AST, of the $\beta$-reduction and ad-hoc rules on these patterns. Although T-LINQ successfully gathers application code around the querying primitives and reduces it to produce efficient queries, it can only rely on the SQL capabilities of the database (i.e., there is no equivalent of the Truffle runtime inside the database). Thus, this framework only targets a limited number of application language and a single database, and conflates the three following issues: detection of the query boundary, optimization of the query representation and translation of QIR into the database language. This fact is particularly observable in the restrictions enforced on expressions inside quotes (i.e., to be sent to the database): quoted expressions have to be reductible to expressions supported by the database. Compared to this contribution, the QIR framework offers a clean separation between application language, query representation and database language. Moreover, it does not impose restrictions on application code that can be send to the database. One notices that, although they start from different architecture assumptions and use different proof ideas, Theorem \ref{th:completeness} ended up requiring hypotheses similar to the restrictions necessary to the optimality proof of T-LINQ\cite{t-linq}, hinting that such assumptions are reasonable. Nevertheless, while T-LINQ cannot handle expressions outside this scope (indeed, no example from Section \ref{sec:concrete-examples} can be reproduced in T-LINQ), the QIR heuristic can still produce (good) results on them, thanks to the embedded Truffle runtime. Looking at T-LINQ examples in \cite{t-linq}, one can easily spot the differences between the two frameworks: examples 1-4 and 6-7 can easily be reproduced in QIR and similar examples have already been given in this document, examples 5 and 8-9 contain nesting and unnesting operators and use ad-hoc reduction rules to rewrite them into a flat relational algebra (which we do not want to engage into, as a design choice), and examples 10-13 have to use the application language for features such as recursion, whereas QIR offers them natively.

\section{Conclusion and future work} 

The ability to evaluate snippets from the application code inside the database opened new perspectives on the language-intergrated query problems. We introduced Query Intermediate Representation (QIR), an abstraction layer between application language and database language, that can be used to represent queries. QIR comes with a rewriting strategy transforming queries written in a natural way in the application code into queries heavily optimizable by the database. Such rewritings are guided by a notion of ``good'' queries abstracted away in a measure on QIR expressions. Moreover, we produced formal proofs and experiments based on our QIR prototype showing that the QIR framework causes a minimal overhead while enabling a considerable speedup of query evaluation.\bigskip

Nevertheless, as explained in Section \ref{sec:architecture}, we made the assumption that a same construct has identical semantics in the applicaton language, QIR and the database language (e.g., a tuple navigation \texttt{user.name} returns the same result in JavaScript, QIR and SQL), and we allow a lossy translation from a language to another when this is not the case (in fact, exisiting frameworks have the same limitation). This issue, which extends to discrepancies in the data model (e.g., JavaScript integers and Cassandra integers might be different), could be fixed by introducing more abstraction in the QIR data model. On a different matter, the rewriting heuristic presented in Section \ref{sec:qir-evaluation} uses a fuel value that has to be set by the developer. This obligation could become a simple option if we provide a module ``guessing'' what fuel value to set, using the estimation logic discussed at the end of Section \ref{sec:qir-evaluation} and described in Appendix \ref{ap:fuel}. These ideas are future work. We also plan to try our framework with more combinations of application languages and databases, then measure the impact of such an architecture on real-world large-scale web applications.

\clearpage

\begin{appendices}

\section{References}
\label{ap:references}

\begingroup
\renewcommand{\section}[2]{}
\bibliographystyle{plain}
\bibliography{main}

\begin{thebibliography}{10}

\bibitem{impedence-mismatch}
Malcolm~P. Atkinson and O.~Peter Buneman.
\newblock Types and persistence in database programming languages.
\newblock {\em ACM Comput. Surv.}, 19(2):105--170, June 1987.

\bibitem{lambda-calc}
Henk~P Barendregt.
\newblock {\em The Lambda Calculus: Its Syntax and Semantics, revised ed., vol.
  103 of Studies in Logic and the Foundations of Mathematics}.
\newblock North-Holland, Amsterdam, 1984.

\bibitem{t-linq}
James Cheney, Sam Lindley, and Philip Wadler.
\newblock A practical theory of language-integrated query.
\newblock {\em SIGPLAN Not.}, 48(9):403--416, September 2013.

\bibitem{language-state-of-art}
James Cheney, Sam Lindley, and Philip Wadler.
\newblock {Language-integrated query using comprehension syntax: state of the
  art, open problems, and work in progress}.
\newblock report, University of Edinburg, 2014.

\bibitem{qbs}
Alvin Cheung, Samuel Madden, Armando Solar-Lezama, Owen Arden, and Andrew~C
  Myers.
\newblock Using program analysis to improve database applications.
\newblock {\em IEEE Data Eng. Bull.}, 37(1):48--59, 2014.

\bibitem{forward}
Yupeng Fu, Kian~Win Ong, and Yannis Papakonstantinou.
\newblock Declarative ajax web applications through {SQL++} on a unified
  application state.
\newblock In {\em Proceedings of the 14th International Symposium on Database
  Programming Languages {(DBPL} 2013), August 30, 2013, Riva del Garda, Trento,
  Italy.}, 2013.

\bibitem{data-access}
Yupeng Fu, Kian~Win Ong, Yannis Papakonstantinou, and Romain Vernoux.
\newblock {Holistic Data Access Optimization for Analytics Reports}.
\newblock report, Department of Computer Science, UC San Diego, 2013.

\bibitem{ferry}
George Giorgidze, Torsten Grust, Tom Schreiber, and Jeroen Weijers.
\newblock Haskell boards the ferry.
\newblock In {\em Implementation and Application of Functional Languages},
  pages 1--18. Springer, 2011.

\bibitem{standardization-patterns}
Delia Kesner, Carlos Lombardi, and Alejandro R{\'{\i}}os.
\newblock A standardisation proof for algebraic pattern calculi.
\newblock In {\em Proceedings 5th International Workshop on Higher-Order
  Rewriting, {HOR} 2010, Edinburgh, UK, July 14, 2010.}, pages 58--72, 2010.

\bibitem{lambda-calc-patterns}
Jan~Willem Klop, Vincent van Oostrom, and Roel~C. de~Vrijer.
\newblock Lambda calculus with patterns.
\newblock {\em Theor. Comput. Sci.}, 398(1-3):16--31, 2008.

\bibitem{cassandra}
Avinash Lakshman and Prashant Malik.
\newblock Cassandra: a decentralized structured storage system.
\newblock {\em Operating Systems Review}, 44(2):35--40, 2010.

\bibitem{linq}
Erik Meijer, Brian Beckman, and Gavin Bierman.
\newblock Linq: Reconciling object, relations and xml in the .net framework.
\newblock In {\em Proceedings of the 2006 ACM SIGMOD International Conference
  on Management of Data}, SIGMOD '06, pages 706--706, New York, NY, USA, 2006.
  ACM.

\bibitem{sql++}
Kian~Win Ong, Yannis Papakonstantinou, and Romain Vernoux.
\newblock The {SQL++} semi-structured data model and query language: {A}
  capabilities survey of sql-on-hadoop, nosql and newsql databases.
\newblock {\em CoRR}, abs/1405.3631, 2014.

\bibitem{graal}
Oracle.
\newblock Graal/truffle overview.
\newblock
  \url{http://www.oracle.com/technetwork/oracle-labs/program-languages/overview/}",
  accessed July 21, 2015.

\bibitem{oracledb}
{Oracle Database}.
\newblock \url{https://www.oracle.com/database/index.html}.

\bibitem{Hive}
Ashish Thusoo, Joydeep~Sen Sarma, Namit Jain, Zheng Shao, Prasad Chakka, Ning
  Zhang, Suresh Anthony, Hao Liu, and Raghotham Murthy.
\newblock Hive - a petabyte scale data warehouse using {Hadoop}.
\newblock In {\em ICDE}, pages 996--1005, 2010.

\end{thebibliography}
\endgroup
\section{Internship timeline}
\label{ap:timeline}

The Gantt diagram of Figure \ref{fig:gantt} describes the use of my time during this internship. In addition to the elements presented in this report, I had the opportunity to present my work at Oracle Labs in July. It was a very interesting experience, a good presentation exercice and a chance to organize my results before writing this report.

\begin{figure}[ht]
\begin{center}
   \includegraphics[width=\linewidth]{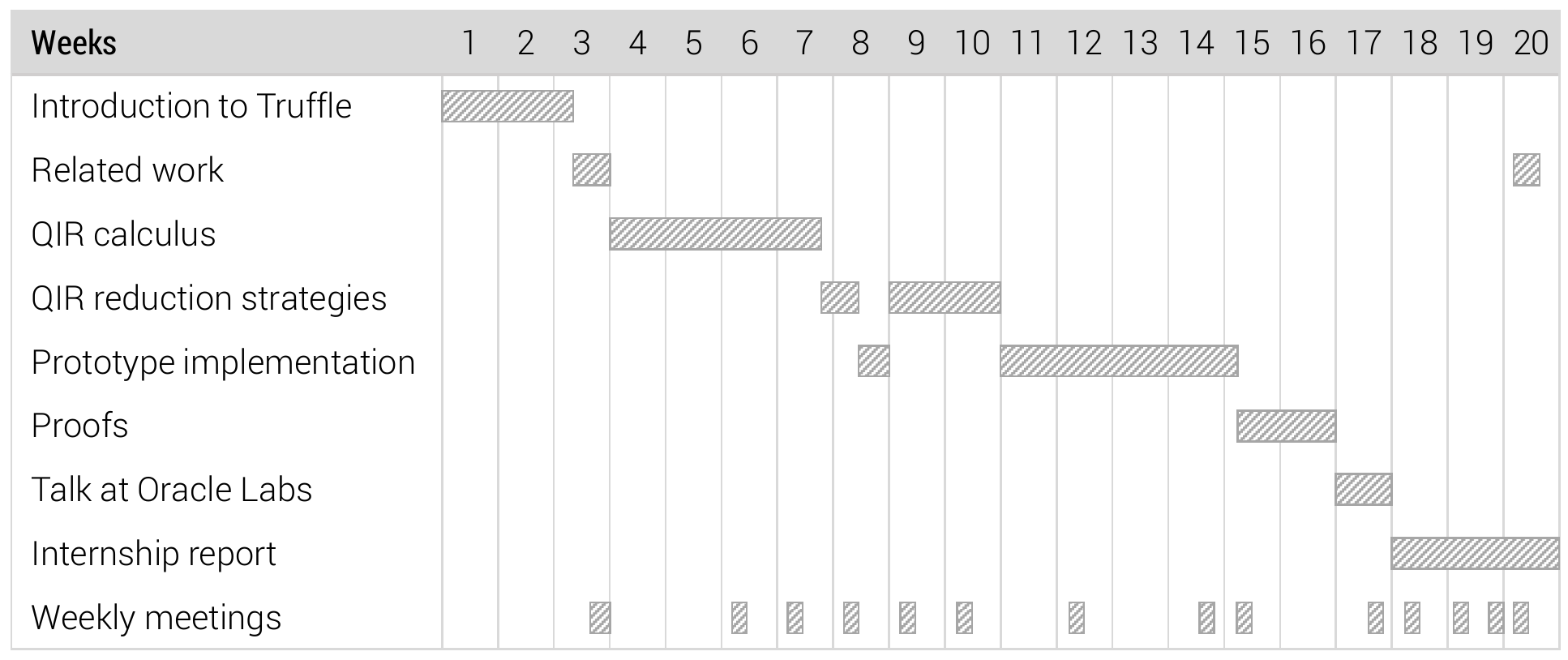}
	 \caption{Internship timeline}
	 \label{fig:gantt}
\end{center}
\end{figure}
\section{Formal definition of occurences}
\label{ap:qir-star}

Definitions \ref{def:qir*} and \ref{def:qir*-qir} correspond to a formalization of the concept of occurences presented quickly in Definition \ref{def:occurences}.

\begin{definition}
\label{def:qir*}
We define the language QIR* as the language of QIR constructs (cf. Section \ref{sec:qir-constructs}) in which integer identifiers can be attached to each subexpression and reductions are extended accordingly. Non-exhaustively, this means that 
\begin{itemize}
	\item QIR* operators contain $\oscan_{e_1}()$, $\oscan^{id}_{e_1}()$, $\oselect_{e_1}(e_2)$, $\oselect^{id}_{e_1}(e_2)$, etc.
	\item QIR* expressions contain $x$, $x^{id}$, $\elam x.\ e_1$, $\elam^{id} x.\ e_1$, $(e_1\ e_2)$, $(e_1\ e_2)^{id}$, $\etrue$, $\etrue^{id}$ etc.
	\item $x^{id}\{e_1/x\} = e_1$, $x\{e_1/x\} = e_1$
	\item $(\elam^{id} x.\ e_1)\ e_2 \rightarrow e_1\{e_2 / x\}$, $\eldestr^{id} \ (\elcons^{id'} \ e_1 \ e_2) \ e_3 \ e_4 \rightarrow e_4 \ e_1 \ e_2$, $\etrue^{id'}\ \eand^{id}\ e_1 \rightarrow\ \etrue$, etc.
\end{itemize}
where $id$, $id'$ are integer identifiers and the $e_i$ are QIR* expressions.\\
Let $e$ be a QIR expression and $e_1$ a subexpression of $e$. We denote by $\starify{e}$ the QIR* expression in which a unique identifier is attached to each subexpression of $e$. There is a one-to-one correspondance between the subexpressions of $e$ and the subexpressions of $\starify{e}$, thus we will also denote by $\starifysub{e_1}$ the subexpression of $\starify{e}$ corresponding to $e_1$.\\
Let $e'$ be a QIR* expression and $e'_1$ a subexpression of $e'$. We denote by $\unstarify{e'}$ the QIR expression in which all identifers are removed from $e'$. There is a one-to-one correspondance between the subexpressions of $e'$ and the subexpressions of $\unstarify{e'}$, thus we will also denote by $\unstarifysub{e'_1}$ the subexpression of $\unstarify{e'}$ corresponding to $e'_1$.\\
Let $e'$ be a QIR* expression and $e'_1$ a subexpression of $e'$. We denote by $\exprid{e'}{e'_1}$ the identifier of $e'_1$ in $e'$, if it has one (it is not defined otherwise).
\end{definition}

\begin{restatable}{lemma}{lemqirqirsimul}
\label{lem:qir*-qir-simul}
Let $e_0$ be a QIR expression and $e'_0$ a QIR* expression such that $\unstarify{e'_0} = e_0$. Any QIR reduction chain $e_0 \reduceredex{r_1} \ldots \reduceredex{r_n} e_n$ can be simulated by the QIR* reduction chain $e'_0 \reduceredex{r'_1} \ldots \reduceredex{r'_n} e'_n$ such that, for all $1 \leq i \leq n$, $\unstarify{e'_i} = e_i$ and $\unstarifysub{r'_i} = r_i$.
\end{restatable}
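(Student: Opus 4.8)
The plan is to proceed by induction on the length $n$ of the QIR reduction chain, reducing the inductive step to a single-step simulation claim: whenever $\unstarify{e'} = e$ and $e \reduceredex{r} e''$ in QIR, the unique subexpression $r'$ of $e'$ that corresponds to $r$ under the subexpression correspondence of Definition \ref{def:qir*} is a QIR* redex, and contracting it yields $e' \reduceredex{r'} e'''$ with $\unstarify{e'''} = e''$ and $\unstarifysub{r'} = r$. The base case $n = 0$ is immediate from the hypothesis $\unstarify{e'_0} = e_0$. For the inductive step I would apply the single-step claim to the first reduction $e_0 \reduceredex{r_1} e_1$, obtaining a QIR* expression $e'_1$ with $\unstarify{e'_1} = e_1$ and $\unstarifysub{r'_1} = r_1$, and then invoke the induction hypothesis on the shorter chain $e_1 \reduceredex{r_2} \ldots \reduceredex{r_n} e_n$ starting from $e'_1$.

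The core of the argument is the single-step claim, which I would establish by case analysis on the kind of redex $r$ (a $\beta$-, $\delta$-, or $\rho$-redex). In each case the point is that the QIR* reduction rules of Definition \ref{def:qir*} are exact copies of the QIR rules of Definition \ref{def:reduction-rules} with arbitrary identifiers attached: a $\beta$-redex $(\elam x.\ e_1)\ e_2$ in $e$ corresponds to a subexpression of the form $(\elam^{id} x.\ e'_1)\ e'_2$ in $e'$ (possibly itself carrying an outer identifier), which is again a QIR* redex because the extended rules match their patterns regardless of identifier annotations. The same holds verbatim for every $\delta$- and $\rho$-rule. Hence the matching redex $r'$ always exists, is contractible, and satisfies $\unstarifysub{r'} = r$ by construction.

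It then remains to verify that contracting $r'$ and stripping identifiers agrees with contracting $r$ and stripping identifiers. For the $\delta$- and $\rho$-cases this is immediate, since their right-hand sides are assembled from sub-parts already present in the redex and $\unstarify{\cdot}$ distributes over the tree structure. The only genuinely delicate case is $\beta$-reduction, where I would need the auxiliary commutation lemma
\[
  \unstarify{e'_1\{e'_2/x\}} \;=\; (\unstarify{e'_1})\{(\unstarify{e'_2})/x\},
\]
proved by structural induction on $e'_1$. The variable case uses exactly the QIR* substitution clauses $x^{id}\{e'_2/x\} = e'_2$ and $x\{e'_2/x\} = e'_2$ of Definition \ref{def:qir*}, which discard the identifier on the replaced occurrence so that both sides coincide; the congruence and binder cases then follow routinely.

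The main obstacle I expect is precisely this commutation lemma together with the bookkeeping of capture-avoiding substitution in the annotated calculus: one must check that the bound-variable renamings needed to avoid capture in QIR* mirror those performed in QIR, and that the duplication of $e'_2$ across several free occurrences of $x$ (each copy retaining the same identifiers) remains consistent with the subexpression correspondence underlying $\unstarify{\cdot}$. Once the commutation lemma is in place, the single-step claim follows by the case analysis above, and the full simulation follows by the induction on $n$.
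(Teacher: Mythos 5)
Your proposal matches the paper's proof essentially exactly: both proceed by induction on $n$, with the inductive step reduced to contracting the corresponding QIR* redex $r'_1$ and verifying $\unstarify{e'_1} = e_1$ by case analysis on the redex. The paper dismisses that verification as ``easy to prove by case analysis on $r_1$''; your explicit substitution-commutation lemma $\unstarify{e'_1\{e'_2/x\}} = (\unstarify{e'_1})\{(\unstarify{e'_2})/x\}$ for the $\beta$-case is precisely the detail being elided there, so you have simply written out the same argument more carefully.
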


\begin{definition}
\label{def:qir*-qir}
Let $e_0$ be a QIR expression, $e$ a subexpression of $e_0$ and $e_n$ a QIR expression such that $e_0 \rightarrow^* e_n$. Using Lemma \ref{lem:qir*-qir-simul} and the fact that $\unstarify{\starify{e_0}} = e_0$, we know that this QIR reduction chain can be simulated by a QIR* reduction chain $e'_0 \rightarrow^* e'_n$ starting from $e'_0 = \starify{e_0}$. This way, we define $\occur{e_n}{e}$ as the set $\{ \unstarifysub{e'}\ |\ e'\ \text{is a subexpression of}\ e'_n, \exprid{e'_n}{e'} = \exprid{\starify{e_0}}{\starifysub{e}} \}$.
\end{definition} 

For instance, the QIR reduction\\
\noindent $(\elam x.\ \elam y.\ f\ x\ x)\ z\ z \rightarrow (\elam y.\ f\ z\ z)\ z \rightarrow f\ z\ z$\\
can be simulated by the QIR* reduction\\
\noindent $(\elam^1x.\ \elam^2y.\ f^3\ x^4\ x^5)\ z^6\ z^7 \rightarrow (\elam^2y.\ f^3\ z^6\ z^6)\ z^7 \rightarrow f^3\ z^6\ z^6$.\\
\indent One can see, just by looking at identifiers, that the first $z$ variable produced both $z$ variables of the final expression: it has two occurences in the final expression. The other one disappeared: it has no occurence in the final expression.
\section{Proofs}
\label{ap:proofs}

\thconfluence*
\begin{proof}
Encoding (i) integer (resp. boolean, string) values and functions in the lambda-calculus (Church encoding), (ii) operators, built-in functions, Truffle nodes, Data references, $\elnil$, $\elcons$, $\etnil$ and $\etcons$ with constructors and (iii) $\eldestr$ and $\etdestr$ with (linear) pattern-matching, QIR can be transposed in a lambda-calculus with patterns\cite{lambda-calc-patterns} which satisfies the Church-Rosser property and for which the $\beta$, $\delta$ and $\rho$ reduction rules of Definition \ref{def:reduction-rules} are compatible.
\end{proof}
\vspace{-0.6em}

\thstandardization*
\begin{proof}
With the same encoding as in the proof of Theorem \ref{th:confluence}, QIR can be transposed in a lambda-calculus with patterns\cite{standardization-patterns} in which the Standardization theorem holds and for which the $\beta$, $\delta$ and $\rho$ reduction rules of Definition \ref{def:reduction-rules} are compatible.
\end{proof}
\vspace{-0.6em}

\lemwellfounded*
\begin{proof}
The order induced by $M$ is a lexicographical order on the natural order of positive integers, which is well-founded, and the lexicographical order preserves well-foundness.
\end{proof}
\vspace{-0.6em}

\lemredsnonempty*
\begin{proof}
Since $e \in \reduction{0}{e} \subseteq \reductions{e}$, we know that $\reductions{e} \neq \emptyset$. Moreover, the order induced by $M$ is well-founded (Lemma \ref{lem:well-founded}) therefore $M$ has a minimum on $\reductions{e}$ and $\minreductions{e}$ is non-empty.
\end{proof}
\vspace{-0.6em}

\lemhinlinevarhmakeredexhcontractlamterminate*
\begin{proof}
The proof follows an induction on $\travpos{e}{e'}$.
\begin{itemize}
	\item If $e'$ is free in $e$ then $\hinlinevar{e}{e'}$ returns $\hnone$. Otherwise, let $l$ be the lambda binding $e'$. $\hinlinevar{e}{e'}$ corresponds to $\hcontractlam{e}{l}$, and since $\travpos{e}{l} < \travpos{e}{e'}$, the induction hypothesis yields the result.
	\item If $e'$ is a redex, $\hmakeredex{e}{e'}$ contracts this redex. If $e' = (e''\ e_1)$, $e' = \eldestr\ e''\ e_1\ e_2$, $e' = \etdestr\ e''\ s$ or $e' = \eif\ e''\ \ethen\ e_1\ \eelse\ e_2$ (and similarly for other $\rho$-redexes), $\hmakeredex{e}{e'}$ corresponds to $\hmakeredex{e}{e''}$, and since $\travpos{e}{e''} < \travpos{e}{e'}$ and the induction hypothesis yields the result. If $e'$ is a variable, $\hmakeredex{e}{e'}$ corresponds to $\hinlinevar{e}{e'}$ and we proved the result above. Otherwise, $\hmakeredex{e}{e'}$ returns $\hnone$.
	\item We denote by $e''$ the parent expression of $e'$. If $e'' = (e'\ e_1)$, $e'' = \eldestr\ e'\ e_1\ e_2$, $e'' = \etdestr\ e'\ s$ or $e'' = \eif\ e'\ \ethen\ e_1\ \eelse\ e_2$ (and similarly for other $\rho$-redexes), $\hcontractlam{e}{e'}$ corresponds to $\hmakeredex{e}{e''}$ which either contracts $e''$ and we are done, or corresponds to $\hmakeredex{e}{e'}$ and we proved the result above. If $e'' = (e_1\ e')$, $e'' = \eldestr\ e_1\ e'\ e_2$, $e'' = \eldestr\ e_1\ e_2\ e'$, $e'' = \eif\ e_1\ \ethen\ e'\ \eelse\ e_2$ or $e'' = \eif\ e_1\ \ethen\ e_2\ \eelse\ e'$ (and similarly for other $\rho$-redexes), $\hcontractlam{e}{e'}$ corresponds to $\hmakeredex{e}{e_1}$ and since $\travpos{e}{e_1} < \travpos{e}{e'}$, the induction hypothesis yields the result. If $e'' = \elam x.\ e'$, $e'' = \elcons\ e'\ e_1$, $e'' = \elcons\ e_1\ e'$, $e'' = \etcons\ s\ e'\ e_1$, $e'' = \etcons\ s\ e_1\ e'$, $\hcontractlam{e}{e'}$ corresponds to $\hcontractlam{e}{e''}$ and since $\travpos{e}{e''} < \travpos{e}{e'}$, the induction hypothesis yields the result. Otherwise, $\hcontractlam{e}{e'}$ returns $\hnone$.
\end{itemize}
\vspace{-1.2em}
\end{proof}
\vspace{-0.6em}

\lemhredcfghredchildterminate*
\begin{proof}
Each recursive call to $\hredcfg{\phi}{e}{C[]}$ (resp. $\hredchild{\phi}{e}{C[]}$) is well defined (corollary of Lemma \ref{lem:hinlinevar-hmakeredex-hcontractlam-terminate}) and is such that the pair $(M(e),\phi)$ decreases. Using lemma \ref{lem:well-founded} it is easy to prove that the lexicographical order induced by this pair is also well-founded, therefore the search space has bounded depth. Moreover, any expression has a bounded number of operators and operator contexts therefore each node in the search space has a bounded number of children.
\end{proof}
\vspace{-0.6em}

\lemqirqirsimul*
\begin{proof}
The proof follows an induction on $n$. For $n = 0$, the result is trivial. For $n > 0$, contracting the redex $r'_1$ such that $\unstarifysub{r'_1} = r_1$ yields a QIR* expression $e'_1$. It is easy to prove by case analysis on $r_1$ that $\unstarify{e'_1} = e_1$. By induction hypothesis, the remaining of the reduction chain $e_1 \rightarrow^* e_n$ can be simulated by $e'_1 \rightarrow^* e'_n$, which means that the entire chain $e_0 \rightarrow^* e_n$ can be simulated by $e'_0 \rightarrow^* e'_n$.
\end{proof}
\vspace{-0.6em}

\lemerasedorconsumed*
\begin{proof}
By case analysis on $r$. All cases are similar therefore we will only write the proof for $r = (\elam x. e_1)\ e_2$. By hypothesis, there is a context $C[]$ such that $e = C[r]$. Then we do a case analysis on the position of $e'$ in $e$. We know that $e'$ cannot be a subexpression of $C$, otherwise $\occur{e''}{e'} \neq \emptyset$. For the same reason, we also know that $e'$ is not a subexpression of $e_1$ other than the free variable $x$. If $e' = r$, if $e' = \elam x. e_1$ or if $e'$ is the variable $x$ free in $e_1$, then $e$ is consumed. Otherwise, $e'$ is a subexpression of $e_2$ and $x$ is not free in $e_1$ (since $\occur{e''}{e'} = \emptyset$), and in this case $e'$ is erased.
\end{proof}
\vspace{-0.6em}

\leminlinevarmakeredexcontractlamcomplete*
\begin{proof}
The proof follows an induction on $\travpos{e}{e'}$.
\begin{itemize}
	\item Suppose that $\hinlinevar{e}{v}$ returns $\hnone$. If $v$ is free in $e$ then by definition it cannot be consumed. Otherwise, denoting by $l$ the lambda binding $v$, this means that $\hcontractlam{e}{l}$ returns $\hnone$. Since $\travpos{e}{l} < \travpos{e}{v}$, the induction hypothesis tells that $l$ cannot be consumed and (by definition of the consumption of a variable) neither can $e'$.
	\item Suppose that $\hmakeredex{e}{e'}$ returns $\hnone$. If $e'$ is a variable, this implies that $\hinlinevar{e}{e'}$ returns $\hnone$ and we proved the result above. If $e'$ is a redex, this is absurd ($\hmakeredex{e}{e'}$ cannot return $\hnone$). If $e' = (e''\ e_1)$, $e' = \eldestr\ e''\ e_1\ e_2$, $e' = \etdestr\ e''\ s$ or $e' = \eif\ e''\ \ethen\ e_1\ \eelse\ e_2$ (and similarly for other $\rho$-redexes), this means that $\hmakeredex{e}{e''}$ returns $\hnone$ and since $\travpos{e}{e''} < \travpos{e}{e'}$ the induction hypothesis tells that $e''$ cannot be consumed and (by definition of the consumption of the considered expressions for $e'$) neither can $e'$. If $e' = \elam x.\ e_1$, $e' = \elcons\ e_1\ e_2$, $e' = \etcons\ s\ e_1\ e_2$, $e' = \elnil$, $e' = \etnil$ or if $e'$ is a constant (integer, boolean, etc.), by looking at all the call sites of $\hmakeredex{.}{.}$ we know that the parent of $e'$ is such that $e'$ is in an ill-formed term (e.g., $\eldestr\ \etnil\ e_1\ e_2$) and therefore cannot be consumed. The remaining possible expressions for $e'$ (e.g., Truffle nodes, Data references, operators, etc.) cannot be consumed in any expression.
	\item Suppose that $\hcontractlam{e}{e'}$ returns $\hnone$. We denote by $e''$ the parent expression of $e'$. If $e'' = (e'\ e_1)$, $e'' = \eldestr\ e'\ e_1\ e_2$, $e'' = \etdestr\ e'\ s$ or $e'' = \eif\ e'\ \ethen\ e_1\ \eelse\ e_2$ (and similarly for other $\rho$-redexes), this means that $\hmakeredex{e}{e''}$ returns $\hnone$, which in turn means that $\hmakeredex{e}{e'}$ returns $\hnone$ and we proved the result above. If $e'' = (e_1\ e')$, $e'' = \eldestr\ e_1\ e'\ e_2$, $e'' = \eldestr\ e_1\ e_2\ e'$, $e'' = \eif\ e_1\ \ethen\ e'\ \eelse\ e_2$ or $e'' = \eif\ e_1\ \ethen\ e_2\ \eelse\ e'$ (and similarly for other $\rho$-redexes), this means that $\hmakeredex{e}{e''}$ returns $\hnone$, which in turn means that $\hmakeredex{e}{e_1}$ returns $\hnone$. Since $\travpos{e}{e_1} < \travpos{e}{e'}$, the induction hypothesis tells that $e_1$ cannot be consumed and (by definition of the consumption of the considered expressions for $e''$) neither can $e''$, which implies that $e'$ cannot be consumed either. If $e'' = \elam x.\ e'$, $e'' = \elcons\ e'\ e_1$, $e'' = \elcons\ e_1\ e'$, $e'' = \etcons\ s\ e'\ e_1$, $e'' = \etcons\ s\ e_1\ e'$, this means that $\hcontractlam{e}{e''}$ returns $\hnone$ and since $\travpos{e}{e''} < \travpos{e}{e'}$, the induction hypothesis tells that $e''$ cannot be consumed and (by definition of the consumption of the considered expressions for $e'$) neither can $e'$. Similarly to above, the remaining expressions to consider for $e''$ either correspond to ill-formed expressions (that cannot be consumed) or expressions that can never be consumed.
\end{itemize}
\vspace{-1.2em}
\end{proof}
\vspace{-0.6em}

\lemreducedecrmeasure*
\begin{proof}
By case analysis on $r$. All cases are similar therefore we will only write the proof for $r = (\elam x. e_1)\ e_2$. By hypothesis, there is a context $C[]$ such that $e = C[r]$. Since $e$ has fixed operators, there is a one-to-one correspondance between the operators of $e$ and the operators of $e'$. For each operator $op$ in $e$, denoting by $op'$ the corresponding operator in $e'$, the stable compatibility hypothesis tells us that if $op$ is compatible, then $op'$ is also compatible. Since no redex can create operators, this implies that $\op{e'} - \comp{e'} \leq \op{e} - \comp{e}$. The only case to treat is when $\op{e'} - \comp{e'} = \op{e} - \comp{e}$. Looking at the definition of fragments (cf. Definition \ref{def:fragment}), we see that there is only three ways to increase the number of fragments in $e$.
\begin{itemize}
	\item Duplicate an existing fragment. This cannot happen under the fixed operator hypothesis, since a fragment contains at least one operator.
	\item Create a new fragment by making an incompatible operator compatible. This cannot happen either. Indeed, if $r$ turns an incompatible operator into a compatible one, using the stable compatibility hypothesis, we know that all other compatible operators in $e$ are still compatible in $e'$ which contradicts $\op{e'} - \comp{e'} = \op{e} - \comp{e}$.
	\item Split an existing fragment into at least two fragments. This again, cannot happen. Indeed, let $F$ be a fragment in $e = C[(\elam x. e_1)\ e_2]$. By definition of a fragment, we only have to consider the following cases:
	\begin{itemize}
		\item if $F$ is a subexpression of $e_2$ or $C[]$, it is intact in $e'$.
		\item if $F$ is a subexpression of $e_1$, either $x$ is not free in $F$ and $F$ is not affected by the reduction or $x$ is in the configuration of some operators of $F$ and (stable compatibility) these operators stay compatible after reduction and $r$ cannot split $F$.
		\item if $r$ is in the fragment, it is necessarily in a configuration of an operator of the fragment which (stable compatibility) stays compatible after reduction, therefore $r$ cannot split $F$.
	\end{itemize}
\end{itemize}
\vspace{-1.2em}
\end{proof}
\vspace{-0.6em}

\lemnormalmeasuremin*
\begin{proof}
Since $e$ is weakly-normalizing, it has a normal form $e_N$. Suppose, to the contrary, that there exists $e'$ such that $e \rightarrow^* e'$ and $M(e') < M(e_N)$. Using Theorem \ref{th:confluence} (confluence) and the fact that $e_N$ is normal, we know there is a finite reduction chain $e' \rightarrow^* e_N$ and applying Lemma \ref{lem:reduce-decr-measure} on each reduction of the chain leads to a contradiction.
\end{proof}
\vspace{-0.6em}

\lemsamecompatoperators*
\begin{proof}
Using the fixed operator hypothesis, we know that there is a one-to-one correspondance between the operators of $e$ and the operators in any reduced form of $e$. Therefore, $\comp{e'} = \comp{e_{min}}$.\\
Suppose, to the contrary, that there exists an operator $op$ in $e$ such that $\occur{e'}{op} = \{ op' \}$, $\occur{e_{min}}{op} = \{ op_{min} \}$, $op'$ compatible and $op_{min}$ not compatible. Using Theorem \ref{th:confluence} (confluence), we know there is an expression $e''$ such that $e' \rightarrow^* e''$ and $e_{min} \rightarrow^* e''$. Using the stable compatibility hypothesis, $op$ is compatible in $e''$ and all operators compatible in $e_{min}$ stay compatible in $e''$, which contradicts the minimality of the measure of $e_{min}$.\\
Suppose now, to the contrary, that there exists an operator $op$ in $e$ such that $\occur{e'}{op} = \{ op' \}$, $\occur{e_{min}}{op} = \{ op_{min} \}$, $op'$ not compatible and $op_{min}$ compatible. The minimality of $e_{min}$ tells that all operators compatible in $e''$ are also compatible in $e_{min}$ and the stable compatibility hypothesis tells that each operator compatible in $e'$ are still compatible in $e''$, which contradicts the fact that $\comp{e'} = \comp{e_{min}}$.
\end{proof}

\conjcompleteness*
Looking at the proof of the theorem for strongly-normalizing expressions (Theorem \ref{th:completeness}), one notices that the strongly-normalizing hypothesis itself is only used three times: (i) to prove that iteratively calling $\hinlinevar{.}{.}$ on the free variables of an operator configuration terminates, (ii) to use Theorem \ref{th:standardization} on an operator configuration and (iii) to prove that iteratively calling $\hmakeredex{.}{.}$ on an operator child expression terminates. Case (ii) is not an issue, since Theorem \ref{th:standardization} only requires the expression to be weakly-normalizing. Case (iii) can be dealt with using the fixed operator hypothesis: $\hmakeredex{.}{.}$ contracts redex in the same order as the call-by-name reduction strategy and none of these redexes can be erased (because this would require to erase the parent operator), therefore Theorem \ref{th:standardization} yields the result. The only remaining issue is case (i): $\hinlinevar{.}{.}$ reduces redexes in an order that cannot be compared with the call-by-name reduction strategy, therefore Theorem \ref{th:standardization} cannot be used here. However, the intuition is that operator configurations in which one can inline variables infinitely have to be inside some kind of fixpoint, which is forbidden by the fixed operator hypothesis.
\section{Estimation of the heuristic fuel value}
\label{ap:fuel}

For now, we assume that the fuel value (cf. Section \ref{sec:heuristic}) for the heuristic is empirically chosen by the developer. Nevertheless, one can give a simple estimate with the following reasoning. In usual QIR query representations, query fragments are not passed as function arguments and the tree of operators is almost already built.\bigskip

In this case, inlining a variable in a configuration is most of the time a single-step heuristic reduction (e.g., the inlining of $x$ in $\elet\ x = \ldots\ \ein\ C[op_{C'[x]}]$) or a multi-step heuristic reduction in which each step inlines a variable (e.g., the inlining of $y$ in $(\elam x.\ \elam y.\ C[op_{C'[x,y]}])\ e_1\ e_2$). Thus, denoting by $V$ the maximum (on all operators) number of variables to inline in a configuration, we can estimate that after $V$ reduction steps all variables are inlined, and providing more fuel would not inline more variables.\bigskip

Furthermore, configuration are usually small pieces of code, that can reach a normal form in a few reduction steps. The most expensive configurations to reduce are then expressions containing recursions. Thus, denoting by $R$ the maximum number of recursive calls and $R'$ the maximum number of reductions used inside a single recursive call, we can estimate that after $RR'$ reduction steps a configuration is in normal form, and providing more fuel would not enable to consider more reduced expressions of this configuration.\bigskip

Finally, children of operators are most of the time variables (e.g., $\elet\ subquery = \ldots\ \ein\allowbreak C[op(subquery)]$) or conditionals (e.g., $C[op(\eif\ c\ \ethen\ subquery_1\ \eelse\ subquery_2)]$) and both can be reduced to an operator in a small number of heuristic reduction steps, denoted by $C$. Thus, since QIR operators have at most two children, we can estimate that after $2C$ reduction steps all the children of an operator are reduced to an operator, and providing more fuel would not enable to regroup more fragments.\bigskip

Therefore, given the two-pass behavior of the heuristic, we can estimate the fuel value required for a comprehensive exploration of the search space as $\max(V+RR',2C)$. Looking back at the complex examples of Section \ref{sec:concrete-examples}, we can compare the aforementioned estimation (calculated by hand) to the fuel value $\Phi$ effectively required to reach the optimal result.

\begin{center}
\begin{tabular}{ | l || c | c | c | c | c || c |} \hline
Example                  & $V$ & $R$ & $R'$ & $C$ & $\max(V+RR',2C)$ & $\Phi$ \\ \hline\hline
\ref{sec:ex:analytics}   & $1$ & $1$ & $10$ & $0$ & $11$             & $11$   \\ \hline
\ref{sec:ex:dynamic} (1) & $2$ & $3$ & $8$  & $2$ & $26$             & $24$   \\ \hline
\ref{sec:ex:dynamic} (2) & $2$ & $3$ & $8$  & $2$ & $26$             & $15$   \\ \hline
\ref{sec:ex:caching}     & $1$ & $0$ & $0$  & $1$ & $1$              & $1$    \\ \hline
\end{tabular}
\end{center}

Although using the $\Phi$ estimation would lead to the optimal result in all these cases, computing $\Phi$ requires to estimate the values for $V$, $R$, $R'$ and $C$, which is far from easy to automate (for $R$ it corresponds to estimating in how many steps a recursive program terminates). Nevertheless, the $\Phi \approx \max(V+RR',2C)$ formula can be useful to the application developer when setting $\Phi$, since $V$, $R$, $R'$ and $C$ correspond to characteristics indirectly observable in the application language, as opposed to the $\Phi$ black-box.
\section{Source code and other resources}
\label{ap:resources}

The implementation source code and other resources mentioned in this document are available for download at \url{http://vernoux.fr/data/internship-mpri/}. This folder contains\bigskip

\dirtree{%
.1 /.
.2 integration_survey.xlsx\DTcomment{Notes on existing language-integrated query frameworks}.
.2 qir_reduction.\DTcomment{Prototype source code}.
.3 README.txt\DTcomment{Installation instructions}.
.3 src\DTcomment{Source code}.
.4 compat_sql.ml\DTcomment{SQL capabilities description}.
.4 compat_sql.mli\DTcomment{Interface of \texttt{compat_sql.ml}}.
.4 lexer.mll\DTcomment{Lexing of QIR expressions}.
.4 parser.mly\DTcomment{Parsing of QIR expressions}.
.4 qir.ml\DTcomment{QIR constructs and utilities}.
.4 qir.mli\DTcomment{Interface of \texttt{qir.ml}}.
.4 qir_reduction.ml\DTcomment{Measure and reduction strategies}.
.4 qir_reduction.mli\DTcomment{Interface of \texttt{qir_reduction.ml}}.
.4 reduce.ml\DTcomment{Main program and command-line arguments parsing}.
.4 utils.ml\DTcomment{Utilities}.
.3 test\DTcomment{Test files and examples}.
.4 ads_view.txt\DTcomment{Input of the experiment of Section \ref{sec:ex:dynamic}}.
.4 analytics.txt\DTcomment{Input of the experiment of Section \ref{sec:ex:analytics}}.
.4 caching.txt\DTcomment{Input of the experiment of Section \ref{sec:ex:caching}}.
.4 linq_1.txt\DTcomment{Reproduction of an example from \cite{t-linq}}.
.4 linq_2.txt\DTcomment{Reproduction of an example from \cite{t-linq}}.
.4 linq_xpath.txt\DTcomment{Reproduction of an example from \cite{t-linq}}.
.4 nesting_1.txt\DTcomment{Test case}.
.4 testchild_1.txt\DTcomment{Test case}.
.4 testchild_2.txt\DTcomment{Test case}.
.4 testchild_3.txt\DTcomment{Test case}.
.4 testconfig_1.txt\DTcomment{Test case}.
.4 testconfig_2.txt\DTcomment{Test case}.
.4 testconfig_3.txt\DTcomment{Test case}.
.4 testnoinline_1.txt\DTcomment{Test case}.
.4 testnoinline_2.txt\DTcomment{Test case}.
.4 testtpch_1.txt\DTcomment{Test case}.
}

\end{appendices}

\end{document}